\def\bC {\mathbf{C}}
\def\bN {\mathbf{N}}
\def\bR {\mathbf{R}}
\def\bT {\mathbf{T}}
\def\bZ {\mathbf{Z}}
\def\cD {\mathcal{D}}
\def\cH {\mathcal{H}}
\def\cP {\mathcal{P}}
\def\eps {{\epsilon}}
\def\L {{\Lambda}}
\def\la {\langle}
\def\ra {\rangle}
\newcommand{\Tr}{\operatorname{trace}}
\newcommand{\ba}{\begin{aligned}}
\newcommand{\ea}{\end{aligned}}
\newcommand{\be}{\begin{equation}}
\newcommand{\ee}{\end{equation}}
\newcommand{\vp}{\varphi}
\newcommand{\ec}[1]{\varphi^{#1}}
\newcommand{\C}{\mathbb{C}}
\newcommand{\R}{\mathbb{R}}
\newcommand{\Z}{\mathbb{Z}}
\renewcommand{\P}{\mathbb{P}}
\newcommand{\boM}{\mathcal{M}}
\newcommand{\inv}{^{-1}}
\newcommand{\pbz}{\psi^\alpha_z}
\newcommand{\freq}{\omega}
\def\bea{\begin{eqnarray}}
\def\eea{\end{eqnarray}}
\newcommand{\adir}[2]{|#1\rangle\langle #2| }
\newcommand{\com}[1]{{#1^c}}
\newcommand{\ccomp}[1]{\mathcal C(#1)}
\begin{document}

\title*{\Large Species of spaces}


\author{Thierry Paul}
\institute{LYSM – Laboratoire Ypatia de Sciences Mathématiques \& LJLL – Laboratoire Jacques-Louis Lions,  
\email{thierry.paul@sorbonne-universite.fr}}

%
\maketitle
\large
\hfill\textit{à André, Berthe, Georgette et Raoul, pour ce que je n'ai pas su}
\large
\vskip 3cm
 \newenvironment{changemargin}[2]{\begin{list}{}{%
\setlength{\topsep}{0pt}%
\setlength{\leftmargin}{0pt}%
\setlength{\rightmargin}{0pt}%
\setlength{\listparindent}{\parindent}%
\setlength{\itemindent}{\parindent}%
\setlength{\parsep}{0pt plus 1pt}%
\addtolength{\leftmargin}{#1}%
\addtolength{\rightmargin}{#2}%
}\item }{\end{list}}

\begin{changemargin}{7cm}{0cm}
``L’espace de notre vie n’est ni continu, ni infini, ni homogène, ni isotrope. Mais sait-on précisément o\`u il se brise, o\`u il se courbe, o\`u il se déconnecte et o\`u il se rassemble ? On sent confusément des fissures, des hiatus, des points de friction, on a parfois la vague impression que \c ca se coince quelque part, ou que \c ca \'eclate, ou que \c ca cogne. Nous cherchons rarement \`a en savoir davantage et le plus souvent nous passons d'un endroit  \`a l'autre, d'un espace  \`a l'autre sans songer  \`a mesurer,  \`a prendre en charge,  \`a prendre en compte ces laps d’espace. Le problème n'est pas d'inventer l'espace, encore moins de le r\'einventer (trop de gens bien intentionn\'es sont l\`a aujourd’hui pour penser notre environnement…), mais de l'interroger, ou, plus simplement encore, de le lire ; car ce que nous appelons quotidienneté n'est pas \'evidence, mais opacit\'e : une forme de c\'ecit\'e, une manière d’anesth\'esie.
C’est  \`a partir de ces constatations \'el\'ementaires que s'est d\'evelopp\'e ce 
[...]
journal d'un usager de l'espace. "

\hfill Georges Perec

\hfill\textit{Especes d'espaces}

\vskip 1cm
\end{changemargin}
\begin{changemargin}{5cm}{0cm}
\end{changemargin}

{\bf {\it \noindent Alex's view of   quantum mechanics was an entanglement of geometry, analysis and algebra. From nested Hilbert spaces to von Neumann algebras of type II, he has constantly been interested in quantizing classical underlying phase-spaces.
\noindent The lines that follow make the inverse journey and study traces of noncommutativity which remains at the classical limit. 
 They all testify to Alex's deep influence on me.}
}

\vskip 1cm
\abstract{
Classical limits of quantum systems are shown to lead to different conceptions of spaces different from the classical one underlying the process of quantization of such systems. The accent is put in situations where traces of noncommutativity, witness of an emblematic feature of quantum mechanise remains when the Planck constant vanishes, in the framework of noncommutative geometry. Complex canonical transformations, spin-statistics, topological quantum fields theory, long time semiclassical approximation and underlying chaotic dynamics are considered, together with a comparison/fusion of classical unpredictability with quantum indeterminism.
}
\newpage
\section{Introduction: a space journey through the wonderful landscape of quantum and classical mechanics }\label{intro}
\vskip 1cm

{
{\it \hfill ``Il ne faut pas peindre  ce que l'on voit,

\hfill puisqu'on le voit.

\hfill Il ne faut pas peindre ce que l'on ne voit pas,

\hfill puisqu'on ne le voit  pas.

\hfill {\textbf {Il  faut  peindre  
qu'on ne voit  pas}}."}

\hfill Claude Monet (attr. \cite{godard})\footnote{A rough translation could be: `One shouldn't paint what one sees, since one sees it. One shouldn't paint what one don't see, since one doesn't see it. One should paint that one doesn't see."}.
\vskip 1cm

It is striking to anybody who worked with Alex that he had a strong geometrical view of the problems he was looking at. More precisely, he had an automatic and unavoidable a priori way of not only looking but intentionally \textit{watching geometrical}. This attitude happened to be very fruitful in several cases where the underlying geometrical structure was not evident to detect, but was crucial for the early development of future theories. To quote only two among many, let us first mention 
the excavation of the group theoretical structure (orthogonality relations for non-unimodular groups) \cite{gm} out of the hectic, say, computations of Jean Morlet \cite{m75}: the geometry of group theory underlying the not already truly born continuous wavelets was definitive for their primitive development. The second one, much less known, is the discovery, in \cite{ag73} that a non trivial operator algebra appears in regular quantum mechanics for solid state physics in presence of a magnetic field. The reduction of the dynamics by the periodic symmetry, 
 much more sophisticated than the (cotangent bundle of the) torus obtained by Bloch decomposition when no magnetic field is present: the reduced dynamics lives in a von Neumann algebra of type II, a premonition of noncommutative geometry to be born more than 10 years later in \cite{c81}.

By some respects, the birth of the (mathematical)   theory of wavelets recalls  the birth of quantum mechanics, in 1925, in the extraordinary article by Heisenberg in \cite{wh25}: 
working together with Max Born on a possible quantization `` \`a la Bohr-Sommerfeld" of the perturbation series arising in the classical Helium atom, Heisenberg felt somehow necessary to enlarge the classical paradigm from functions subject to commutative multiplication (actually Fourier series endowed with commutative convolution) to matrices, endowed with their noncommutative multiplication law\footnote{the similitude between iteratde convolutions involving sum of terms of the form $\sum\dots a_{i-j}b_{j-k}c_{k-l}\dots$ with iterated multiplication on matrices involving this times sum of the form $\sum\dots a_{i,j}b_{j,k}c_{k,l}\dots$ is striking $\dots$ at least when one knows what a matrix is, which wasn't the case of Heisenberg.}. At this precise moment, a new space was born: the one in which a true quantum \textit{dynamics} could live, that is a natural kinematics for it. Every mathematician understands the gap between a function and a matrix and the impossibility to reduce the second to the first. Nevertheless, as the (commutative algebra of) continuous functions defines in a unique way the manifold on which they are defined, the noncommutative algebra of Heisenberg matrices was going to define a new space, a noncommutative one, the space of quantum mechanics. By no mean reducible to the space of classical mechanics. 

In 1925, Heisenberg, 
``\textit{tel Monet, a peint qu'on ne voyait pas}".\footnote{``like Monet, has painted that one didn't see."}

\subsection{From Planck to Heisenberg}\label{plaheis}

The quite long gestation of quantum mechanics, from the seminal work of Planck to the first true paper by Heisenberg involving truly quantum dynamics  twenty five years later, can maybe be explained by the fact that the totally new quantum paradigm was born deeply inside the classical one. A classical paradigm strongly influenced, from centuries, by the notion of a geometrical space (flat or curved, in three or four dimensions, what ever) which was considered as the unavoidable environment inside which any dynamics should take place.

One of the difficulties met by quantum mechanics to be born was precisely the incompatibility of its foundations with the classical space, while at the same time the perception of quantum effect  live precisely in this classical space.

Very early, soon after Heisenberg's paper and even before Schrödinger delivered the equation in space of the waves conjectured by de Brogglie, Dirac understood a fundamental link between classical and quantum paradigm : the commutators of matrices is the quantization of the Poisson bracket\footnote{Let us remind that the commutator of two matrices $A$ and $B$ is defined through $[A,B]:=AB-BA$ and the Poisson bracket of two functions $f(q,p),g(q,p)$ on (say) $\bR^2$ is the function defined by $\{f,g\}(q,p):=\partial_qf(q,p)\partial_pg(q,p)-\partial_pfq,p)\partial_q g(q,p)$.} of  functions on classical spaces: 
$$
\mbox{classical $\{\cdot,\cdot\}\ \longleftrightarrow \ \tfrac i{\hbar}[\cdot,\cdot]$ quantum}
$$
In fact, the general feeling at the beginning of quantum mechanics was that only the right oriented arrow was significant: one quantizes the Poison bracket to get the commutator:
$$
\mbox{classical $\{\cdot,\cdot\}\ \overset{0\ \to\ \hbar}\longrightarrow \ \tfrac i{\hbar}[\cdot,\cdot]$ quantum}
$$
 but soon, thanks to the asymptotic studies of the Schrödinger equation for small values of the Planck constant, the left oriented arrow became more important, and somehow satisfying: the world is quantum but when the Planck constant is small, the classical paradigm reappears
 $$
\mbox{classical $\{\cdot,\cdot\}\ \underset{\hbar\ \to\ 0}{\longleftarrow} \ \tfrac i{\hbar}[\cdot,\cdot]$ quantum}
  $$
  But, if this ``correspondence" has the advantage of showing the (our) classical word at the border ($\hbar=0$) of the (true) quantum word, both directions of the arrow just mentioned  have their own, unsolvable problems.
  
  The direct quantization arrow $\overset{0\to\hbar}\longrightarrow$, introduced by Weyl very few years after the works by Heisenberg and Schrödinger, links two paradigms having two fundamentally different groups of \textit{symmetries}. The natural geometrical setting of classical mechanics is the one of symplectic spaces, endowed  with the group of symplectomorphism. The natural symmetries of the target quantum space are conjugations by unitary operators. These two groups of symmetries merge only for the restriction to linear symplectomorphisms, mapped into the so-called metaplectic operators (we'll get back to this with precise definitions in Section \ref{quantcomptrans} below). In other words, the only classical  changes of variables truly  ``quantizable" are the ones which have already the very quantum flavour of linearity, a very small selection of possible classical symmetries. On the contrary, quantum statistics, bosons versus fermions, have no direct equivalents in classical ``naive" situations. And these are just few examples.
  
  The classical limit direction $\underset{\hbar\to 0}{\longleftarrow}$ is by many ways analogue to the passage from physical to geometrical optics: it reintroduces trajectories    out of waves. Moreover it re-establishes classical  functions with point wise (commutative)  multiplication out of noncommutative quantum multiplication of matrices by the (formal) argument:
  $$
  \mbox{$[\cdot,\cdot]\sim\tfrac\hbar i\{\cdot,\cdot\}\to 0\mbox{ as }\hbar\to 0.$}
  $$
  Beside its aesthetic, securing beauty, this argument is by many ways too simple, and leave unsolved many problems.
  
  First of all it doesn't explain the very fundamental difference between classical and quantum dynamics for large times: the quantum one, linear, being quasi-periodic and the classical one, in the chaotic situation, showing high dependence on initial conditions. Second, the culturally admitted, till the last decades, equivalence between the two dichotomies
  $$
  quantum\ /\ classical\sim 
  \left\{
  \begin{array}{rcl} 
  micro&/&macro\\
  short\ times&/& long\ times
  \end{array}
  \right.
  $$
  is not experimentally suitable any more: very excited atoms of Rubidium which have  the size of a small bacteria, are tractable nowadays in experimental quantum information for time scales of the order of several seconds.
  
  The reality is that what is contained in the ``limit" $\hbar\to 0$ is much more complex than taking a simple limit in the values of one parameter, as one does, for example, by letting the speed of light diverging. In fact it is, at an epistemological point of view, closer to the different large number $N$ of particle limits in classical mechanics: mean-field, grand canonical, Boltzmann Grad etc. Very different situations are reachable under the simple aphorism $N\to\infty$.
  
  In the different sections of this article, we will present several quantum  situations where the classical limit $\hbar\to 0$ involves geometrical situations which cannot be handled by the classical structures we just described, with a particular focus on the situations where the noncommutative aspects, flavour of quantum mechanics, don't disappear at the limit $\hbar\to 0$.
  
  \subsection{From symbols to operators: a quick journey in quantizland}\label{quantizland}
  In 1925, Heisenberg invented quantum mechanics as a change of paradigm from (classical) functions  to (quantum) matrices. He founded the new mechanics on the well known identity
$$
\tfrac 1{i\hbar}[Q,P]=1
$$
that, a few months later, Dirac recognized as the quantization of the Poisson bracket
$$
\{q,p\}=1.
$$
Again a few years later, Weyl stated the first general quantization formula by associating to any function $f(q,p)$ the operator
$$
F(Q,P)=\int \tilde f(\xi,x)e^{ i\frac{ xP-\xi Q}\hbar}d\xi dx
$$ 
where $\tilde f$ is the symplectic Fourier transform defined analogously through 
$$
f(q,p)=\int \tilde f(\xi,x))e^{ i\frac{ xp-\xi q}\hbar}d\xi dx.
$$

Many years after was born the pseudodifferential calculus first establish by Calderon and Zygmund, and then formalized by H\"ormander through the formula giving the integral kernel $\rho_F$ for the quantization $F$ of a symbol $f$ in $d$ dimensions as
$$
\rho_F(x,y)=
\int f(x,\xi)e^{i\frac{\xi(x-y)}\hbar}
\frac{d\xi}{(2\pi\hbar)^{d}}.
$$
A bit earlier had appeared, both in quantum field theory  and in optics (Wick quantization) the (positive preserving) Toeplitz quantization of a symbol $f$\footnote{The Dirac notation will be used through the whole article:  $\vert\#\rangle$ will be ment as an element of a Hilbert space $\cH$, $\langle\#1\vert\#2\rangle$ as the scalar product 
$(\vert\#1\rangle,\Vert\#2\rangle)$, and $\vert\#1\rangle\langle\#2\vert$ as the operator on $\cH$ defined by $\vert\#1\rangle\langle\#2\vert\varphi:=(\vert\#2\rangle,\varphi)\vert\#1\rangle.$,$\varphi\in\cH$.}
$$
\mbox{Op}^T[f]=\int f(q,p)|q,p\rangle\langle q,p|dqdq
$$ 
where $|q,p\rangle$ are the famous (suitably normalized) coherent states defined as $\psi^{\alpha=i}_{z=(q,p)}$ in Section \ref{off} below.

As we see, quantization is not unique. But all the different symbolic calculi obtained above, after inversion of the quantization formul\ae\ written above, share the same two first asymptotic features:
\begin{itemize}
\item the symbol of a product is, modulo $\hbar$, the product of the symbols
\item the symbol of the commutator divided by $i\hbar$ is, modulo $\hbar$ again, the Poison bracket of the symbols.
\end{itemize}
In other words, they all define a classical underlying space (an algebra of functions) endowed with a Poisson (of more generally symplectic) structure.
\subsection{From operators to symbols: tell me which operator you are, I'll tell you what is your symbol}\label{tellme}

But it is very easy to show that this nice quantum/classical picture has its limits. And one can easily construct quantum operators whose classical limit will not follow the two items expressed above.

Consider for example the well known creation and annihilation operators $a^+=Q+iP,a^-=Q-iP$. They act of the eigenvectors $h_j$ of the harmonic oscillator by
$$
 a^+h_j=\sqrt{(j+\tfrac12)\hbar}h_{j+1},\   a^-h_j=\sqrt{(j-\tfrac12)\hbar}h_{j-1}.
 $$
 Consider now the matrices
 
\[
M_1^+=\begin{pmatrix}
0&0&0&0&\dots&\dots\\
1&0&0&0&\dots&\dots\\
&&\dots&&&\\
&&\dots&&&\\
&&\dots&&&\\
0&\dots&0&1&0&\dots\\
\dots&\dots&\dots&\dots&\dots&\dots
\end{pmatrix}
\] 
and its adjoint
\[
M_1^-=
\begin{pmatrix}
0&1&0&0&\dots&\dots\\
0&0&1&0&\dots&\dots\\
&&\dots&&&\\
&&\dots&&&\\
&&\dots&&&\\
0&\dots&0&0&0&\dots\\
\dots&\dots&\dots&\dots&\dots&\dots
\end{pmatrix},
\]
and the operator $\mathcal M_1^\pm$ they define on the basis $\{\varphi_n^N,n=o,\dots,N-1\}$ of $\mathcal H_N$ defined by \eqref{basis}.

An elementary computation shows that 
$$
\mathcal M_1^+=a^+(P^2+Q^2)^{-1/2},\ \mathcal M_1^-=(P^2+Q^2)^{-1/2}a^-. 
$$
therefore, their (naively) expected leading symbols are $f^+(q,p)=\sqrt{\frac{q+ip}{q-ip}}$ and $f^-(q,p)=\sqrt{\frac{q-ip}{q+ip}}$ or, in polar coordinates $q+ip=\rho e^{i\theta} $,
$f^\pm=e^{\pm i\theta}$.


If symbolic calculus would work 
the leading symbol of $\mathcal M_1^+
\mathcal M_1^-$ should be equal to $f^+f^-=1$ 
and $\mathcal M_1^+
\mathcal M_1^-$ should be therefore  close to the identity $I$ as $\hbar\to 0$.

 But
\[
M_1^+M_1^-=
\begin{pmatrix}
0&0&0&0&\dots&\dots\\
0&1&0&0&\dots&\dots\\
&&\dots&&&\dots\\
&&\dots&&&\dots\\
&&\dots&&&\dots\\
0&\dots&0&0&1&\dots\\
\dots&\dots&\dots&\dots&\dots&\dots
\end{pmatrix}=I-|h_0\rangle\langle h_0|\nsim I+o(1)\mbox{ as }\hbar\to 0,
\]
The reason for this defect comes from the fact that the functions $e^{i\pm\theta}=\sqrt{\frac{q\pm ip}{ q\mp ip}}$ are not  smooth functions on the plane. In fact they are not even continuous at the origin: $f^\pm$ can tend to any value in $\{e^{i\theta},\theta\in\R\}$ when $q,p\to 0$ tends to zero.

Note finally that the commutator 
\be\label{comm}
[\mathcal M_1^+,\mathcal M_1^-]
=
\begin{pmatrix}
-1&0&0&0&\dots&\dots\\
0&0&0&0&\dots&\dots\\
&&\dots&&&\dots\\
&&\dots&&&\dots\\
&&\dots&&&\dots\\
0&\dots&0&0&0&\dots\\
\dots&\dots&\dots&\dots&\dots&\dots
\end{pmatrix}=-|h_0\rangle\langle h_0|\neq O(\hbar),
\small
\ee
so that its symbol doesn't vanish at leading order, as expected by standard symbolic asymptotics.
\subsection{From symbols to  classical: spaces as possible symbolic calculi}\label{claposymb}\ 

Therefore the following question arises at the ``classical" limit $\hbar\to0$: \textbf{what is the corresponding classical paradigm underlying this limit?}

The natural answer to this question seems to us: \textit{it is the locus where lives the limit at $\hbar\to 0$ of the symbol of the quantum object, after the later has been expressed in a through a symbolic calculus uniform in the Planck constant.}

For all the cases treated in Section \ref{quantizland}, Wigner, Weyl, Töplitz etc,  this locus was the same: the classical phase-space out of which the diverse quantization procedures were issued. But we just saw in the preceding Section \ref{tellme} an example where this ``rule" failed.
The subject of the present article is to build other situations, interesting both from a physical and mathematical points of view, where the classical underlying space is not the standard one, mostly endowed with a noncommutative structure inherited from the overlying, true, quantum paradigm.
 
 It happens that the different symbolic constructions that we will meet  will be generalizations of the T\" oplitz one:
 $$
\mbox{Op}^T[f]=\int f(q,p)|q,p\rangle\langle q,p|dqdq
$$ 
which will remain somehow the core of our constructions.
Let us remind that, on the contrary of the Weyl or pseudodifferential quantization situation, the T\" oplitz quantization doesn't offer a recipe which enables to recover the (full) symbol out of the operator per se. We believe that this fact, often considered as a pain in the neck, is in fact responsible for the rich possibilities of enlarging the aforementioned formul\ae\ to more general ones.
\vskip 0.5cm
To put it in a  nutshell, \textbf{a space, quantum, classical or anything else, is nothing but a symbolic calculus endowing an operator algebra}.

\section*{Summarize}

This article id devoted to the study of the space hosting the  dynamics underlying the quantum one at the limit where the Planck constant vanishes. In many situations this underlying space is not the expected one, namely the classical space on which the quantum theory is built through different quantization processes. In other words,  the true classical \textit{underlying} space of quantum mechanics is not the space of which the quantum \textit{overlying} dynamics is the quantum one. Joking speaking, the diagram doesn't commute.

In particular, we will show that a strong taste on noncommutativity, keystone of the quantum paradigm, remains sensitive, at the classical limit, in many situations. This underlying space will take different aspects, endowing therefore an existentialist flavour totally transverse, if not orthogonal, to the essential vision of an absolute space hosting the dynamics.

This unusual conception of what a space should be, unusual in our classical culture, has to be put in front of the famous sentence by Einstein: `` It is the theory who decides what is to be observed."\footnote{comments to W. Heisenberg after a talk he gave at the period where he was still looking for quantum trajectories, F. Balibar, private communication.}. Different dynamics, although they all live in the same quantum framework of Hilbert space, give rise to different spaces when the Planck constant becomes small.

In fact, in several situations, a naive guess for the underlying space leads to drastic singularities, which  forbid the existence or uniqueness of the classical dynamics.But  these singularities disappear after a new suitable noncommutative re-dressing. This is perfectly in accordance with the philosophy of noncommutative geometry of Alain Connes, and more generally, in fact, with the birth (and efficiency) of the quantum paradigm.

If one believes that nature is quantum, after all, these new ``classical"  structures, albeit arising at a large scale compared to the one of the Planck constant but also keeping track of the source of stability that constitutes the quantum noncommutativity, might explain the amazingly stable feeling provided by the macroscopic world.

%
%
%
\section*{Overview of the article}

This article contains  six sections, each of them devoted to a different quantum situation whose limit at $\hbar\to 0$ does not correspond to the expected classical commutative paradigm.

The keystone through the whole paper will be the (positive preserving) Toeplitz quantization of a symbol $f$ already defined in  Section \ref{intro}
$$
\mbox{Op}^T[f]=\int f(q,p)|q,p\rangle\langle q,p|dqdq
$$ 
where $|q,p\rangle$ are the (suitably normalized) coherent states, which will go through several modifications all along the paper.

One of the most striking modification of the preceding formula will consist in changing the \textit{diagonal} diadic element $|q,p\rangle\langle q,p|$ by \textit{off diagonal} ones of the form $|q',p'\rangle\langle q,p|$ where $(q',p')$ is related to $(q,p)$: $(q',p')=f(q,p)$ for a certain function $f$.

\vskip 1cm
In Section \ref{quantcomptrans}, we investigate a new Toeplitz-like quantization suitable fro quantum observables propagated by quantum flows generated by complex quadratic Hamiltonians. More precisely, we look at the quantization of complex linear symplectic flows, a subject studied by Alex in a beautiful paper \cite{gross}. In particular we show that a very simple and ``natural" off-diagonal extension of the Toeplitz paradigm $\mbox{Op}^T[f]$ introduced in the preceding section happens to be very useful. By off-diagonal, we mean a construction of the form
 $$
\mbox{Op}^T_{complex}[f]=\int f(q,p)|q'(q,p),p'(q,p)\rangle\langle q,p|dqdq
$$ 
for some mapping $(q,p)\to(q',p')$. Associated to this formula is a noncommutative symbol belonging to the $C^*$-algebra of a groupoid linked to the complex symplectic group. Moreover, this formulation extends naturally to the non-canonical case.
\vskip 0.5cm
Section \ref{boseinferm} investigates quantum statistics in front of 
a possible underlying phase space. The general feeling about this question is that the notion itself of quantum statistics disappears at the classical limit, since the wave function disappear by oscillation when $\hbar\to 0$.
\noindent In fact we show that exchanging two quantum particles correspond at the classical level to letting a complex linear mapping  act on phase space, like in Section \ref{quantcomptrans}. But this time the mapping is anticanonical: it send the symplectic form  to its opposite
$$
dp\wedge dq\rightarrow -dp\wedge dq.
$$
In other words it is not a symmetry of the classical paradigm. But this construction infers the possibility of introducing a noncommutative ``classical" phase space suitable for carrying quantum spin-statistics at ``a" classical level.

\vskip 0.5cm
In Section \ref{tqft}, we turn to  a problem in low dimension topology which in principle has nothing to do with quantum mechanics: the asymptotics on large values of colors of curve operators in topological quantum field theory. In \cite{mp} J. March\'e and myself showed that ``generically" on the way this asymptotics is taken, the curve operators are Toeplitz operators associated to phase spaces which are moduli spaces isomorphic to the  two dimensional sphere.
\noindent Recently, in \cite{tp2}, I investigated   the general case of this asymptotics, and shown that the corresponding ``semiclassical" structure is the one of another extension of Toeplitz quantization- This extension correspond to changing the Hilbert structure of the Hilbert space appearing in the (geometrical) quantization of the symplectic 2D sphere and consider the associated orthogonal projector noted $|\cdot\rangle_a\langle\cdot|$:
$$
\mbox{Op}^T_{TQFT}=\int |q,p\rangle_a\langle q,p|dqdp.
$$
We investigate the noncommutative structure one has to put on the sphere in order that it corresponds to the  moduli (phase) space underlying this singular asymptotics.

\vskip 0.5cm
Section \ref{longsemevo} is concerned with the long time quantum evolution of quantum observables,  pseudodifferential operators, uniformly when $\hbar\to 0$. It is well known that such evolution remains in the pseudodifferential category for scales of times not larger that logarithmic in $1/\hbar$. the reason for this is the fact that the non-locality of the quantum paradigm induces a sensibility to nearby trajectories whose spreading in general is (shown to be less than) exponential in time. Willing to be ``microlocal" implies that this spreading in phase space  should not exceed the size of the Planck constant fixes the logarithmic scale aforementioned.
\noindent In \cite{tp} was shown than passing this logarithmic barrier was possible thanks to, again, an extension of the Toeplitz procedure. In the present article we reformulate this extension in the following ``groupoid" form
$$
\mbox{Op}^T_{groupoidT}=\int_{(q',p')\sim^{u/s}(q,p)} |q',p'\rangle\langle q,p|dqdpdq'dp'.
$$
where the equivalence relation $\sim^{u/s}$ means ``belongs to the same unstable manifold" for a long time forward propagation  and `belongs to the same stable manifold" for a long time backward one.

\vskip 0.5cm
In Section \ref{homo}, we describes how these Toeplitz extension can provide semiclassical  spectral informations. Since spectral data are related to invariance by (quantum) evolution, it is normal to try to let merge the two for- and backward settings of the preceding section, namely to take the intersection of stable and unstable manifold. In other words to consider homoclinic orbits. That is to say that one has to handle ansatz of the form
$$
\mbox{Op}^T_{homoclin}=\int_{(q',p')\sim^{homo}(q,p)} |q',p'\rangle\langle q,p|dqdpdq'dp'.
$$
where $(q',p')\sim^{homo}(q,p)$ means ``$(q',p')$ belongs to any  orbit homoclinic to (the trajectory passing by) $(q,p)$".
\noindent We construct a noncommutative algebra associated to this ``homoclinic" groupoid which might determines (part of) the set of frequencies (differences of eigenvalues) at the classical limit. This construction is astonishly close conceptually to the explicit case of the non-resonant harmonic oscillator: the set of homoclinic orbits, intersections of two given stable and unstable manifolds of a dynamical  system playing the role of an invariant torus in the integrable case.

Finally, Section \ref{imvrsin} concerns the link between quantum indeterminism, a key stone of the quantum paradigm, and  unpredictability in classical chaotic systems, forced by sensitivity to initial conditions at the limit where the time of evolution diverges. We show that these two concepts - indeterminism and  unpredictabilty - merge at the classical limit
when is performed a change of the concept of classical space.

\vskip 0.5cm
We conclude this article with a short epistemological essay on the geometrical structure of the quantum-classical entanglement: the classical dynamics appears at the border of the quantum one but contains also, at its own border, the quantum one.....
 \vskip 1cm 
 
 Let us finish this overview by remarking that the key stone of most of the mathematical constructions in this paper have the same core: the Toeplitz quantization.
 
 \section{Quantizing complex canonical transformations}\label{quantcomptrans}
 
 This section contains the first of our attends of considering classical phase spaces as noncommutative spaces underlying quantum ones, namely the push-forward of the usual classical phase space  by complex canonical transforms.
 We shall consider  linear canonical (and later on non canonical) transforms, so, as mentioned earlier, there is no need of taking the limit $\hbar\to 0$ to ``recover" a classical space. This is the case for real linear canonical transforms, but we will see that the situation is much more rich in the complex case.
 
 \vskip 1cm
 Complex quantum Hamiltonians have (re)gained a lot of interest these last years, see e.g. the book \cite{Ler} and all the references quoted there. It seems that, on the contrary, quantization of complex symplectic flows didn't get the interest it deserves as it did at the period of the birth of Fourier integral operators, see \cite{Sjo} for example. Inside this category, representation of \textit{linear} complex symplectic flows, namely the complex symplectic group and its corresponding complex metaplectic representation has also lost most of interest since the golden years of ``group theoretical methods in physics", see \cite{gross}. This section relies on \cite{tp3} and we will handle complex canonical transform not by direct metaplectic operators, but rather by conjugation of observables by them.

Let us remark that, since complex numbers are  (according to us) necessary to the formulation of quantum mechanics, considering the Schrödinger equation on a Hilbert space $\cH$  with complex (e.g. bounded) Hamiltonian $\cH$ does not create any intrinsic a priori difficulty, a property that classical mechanics (on a real symplectic phase space $\cP$) doesn't share with its quantum counterpart. The quantum flow is still given by $e^{-itH/\hbar}$ which exists (of course it is not any more unitary) for all time for example when $H$ is bounded.  $e^{-itH/\hbar}$ defines for any $t\in\R$ a bounded operator on $\cH$, On the contrary of the Lie exponential $\L^{-th}$ associated to the symbol $h$ of $H$ which doesn't apply on $\cP$ because the Hamiltonian vector field associated to $h$ is not tangent to $\cP$ any more.

\subsection{Off-diagonal Toeplitz representation}\label{off}

 \vskip 1cm
We will study conjugation of quantum observables given by the  Toeplitz (anti-Wick) construction by complex metaplectic operators. In order to avoid heavy notation, we will be, in most of the paper, concern with the one degree of freedom case: a two dimensional phase space that will be, for simplicity, $\R^2\sim\C$. The general higher dimension case is extensively treated in \cite{tp3} which contains most of the rest of this section.

 More precisely we consider operators of the form 
\[
H=\int_\C h_\alpha(z)\vert\pbz\rangle\langle\pbz\vert\frac{dzd\bar z}{2\pi\hbar}.
\]
where the family of coherent states $\pbz$ is defined, 
for $\alpha\in\C,\ \Im\alpha>0$ and $z=(q,p)\in\R^2$ 
by
\[
\psi^\alpha_z(x)
:=
\left(\frac{\Im\alpha}{\pi\hbar|\alpha|^2}\right)^{\frac14}e^{-\frac i{2\hbar\alpha}(x-q)^2}e^{i\frac{px}\hbar}e^{-i\frac{pq}{2\hbar}}.
\]
Note that the standard Toeplitz quantization correspond to $\alpha=i$, but $\alpha$ will show to be a true dynamical variable, so we need to consider it as a degree of freedom. The link between different $h_\alpha(\cdot)$ leading to the same operator $H$ (in particular the standard case) is 
the following
\begin{eqnarray}
\int_\C h_\alpha(z)\vert\pbz\rangle\langle\pbz\vert\frac{dzd\bar z}{2\pi\hbar}
&=&
\int_\C h_{\alpha,\alpha'}(z)\vert\psi^{\alpha'}_z\rangle\langle\psi^{\alpha'}_z\vert\frac{dzd\bar z}{2\pi\hbar}\label{alphaalphaprime}\\
&\Updownarrow&\nonumber\\
h_{\alpha,\alpha'}&=&e^{-i\frac\hbar4(\alpha-\alpha')\Delta_\xi+i\frac\hbar4(\frac1\alpha-\frac1{\alpha'})\Delta_x}h_{\alpha'}\label{betabetaprime}
\end{eqnarray}
for $ \Im{(\alpha-\alpha')}>0.\ \Im{(\frac1\alpha-\frac1{\alpha'})}<0$.

An easy computation shows that, when $h_\alpha(z):=1$, $H=\mbox{Id}$ for any value of $\alpha$.
\vskip 1cm
We consider the operator $U(S)\inv HU(S)$, conjugated of $H$ by the operator $U(S)$ where $S$ is a real $2\times 2$ matrix $\begin{pmatrix}
a&b\\c&d
\end{pmatrix} $ of determinant one and $U$ is the metaplectic representation. More precisely $U(S)$ is the operator of integral kernel given by
\bea\label{udexy}
U(x,y)&=&\frac1{\sqrt{b2\pi\hbar}} e^{-\frac i{2b\hbar}\left(dx^2-2xy+ay^2\right)}\ \ \ \ \ \ \ \ b\neq 0\nonumber\\
&&\\
&=&{\sqrt d}\delta\left(dx  -y\right)e^{i\frac c d\frac{x^2}2}\ \ \ \ \  \ \ \ \ \ \ \ \ \ \ \   b=0\nonumber
\eea
In order to show of \eqref{udexy} can be derived, let us recall that one way to define the metaplectic representation is through the formula
\be\label{meta}
S\binom{x}{-i\hbar\frac d{dx}}=\binom{U(S)^{-1}xU(S)}{U(S)^{-1}{(-i\hbar\frac d{dx})}U(S)}.
\ee
Writing that $U(S)$ is unitary by $U(S)^{-1}=U(S)^*$, that is $U^{-1}(x,y)=\bar{U(y,x)}$ we get for $S=\begin{pmatrix}
a&b\\c&d\end{pmatrix}$, an equation whose solution is \eqref{udexy}, of course modulo a global phase.

  It is well known and easy to derive after \eqref{meta} that, when $S$ is real the Weyl symbol of $U(S)HU(S)^{-1}$ is the push-forward of the Weyl symbol of $H$ by $S$. namely
  \be\label{pfw}
  \sigma_{U(S)^{-1}HU(S)}^{Weyl}\binom{q}{p}=\sigma_{H}^{Weyl}\left(S^{-1}\binom{q}{p}\right).
  \ee
  
  An easy computation
  shows that this implies the following result.
  \[
U(S)^{-1}HU(S)=\int h_{\alpha_S(\alpha),\alpha}(Sz)\vert\pbz\rangle\langle\pbz\vert\frac{dzd\bar z}{2\pi\hbar}
\]
$\mbox{ with }\alpha_S(\alpha)=iS\cdot(\alpha)$, where  $S\cdot z=\frac{az+b}{cz+d}$, and $h_{\alpha_S(\alpha),\alpha}$ is given by \eqref{betabetaprime}.
 
 Of course this formula doesn't make any sense for general (non analytic) symbols $h$ when $S$ is not real any more. But we will see that there is a general ``off-diagonal" Toeplitz representation.

The main result of the present short note is the following theorem.

\textbf{Notation}: 
we will denote 
for $S= \begin{pmatrix}
a&b\\c&d
\end{pmatrix}\in SL(2,\C)$, $z=(q,p)$ and $\alpha\in\C$, 
$$
\com{S}= \begin{pmatrix}
\bar  a&\bar b\\\bar c&\bar d
\end{pmatrix},\ S (z)=\begin{pmatrix}
a&b\\c&d
\end{pmatrix}\begin{pmatrix}
q\\p
\end{pmatrix}\mbox{ and }
S\cdot\alpha=\tfrac{a\alpha+b}{c\alpha+d}.
$$
Moreover, $\wedge$ will denote the symplectic form on $T^*\R$, $z\wedge z'=pq'-qp'$.
\begin{theorem}[Off-diagonal Toeplitz representation]\label{thm1}\ 

Let, for $h$ compactly supported,
\[
H=\int h(z)\vert\psi^\alpha_z\rangle\langle\psi^\alpha_z\vert\frac{dzd\bar z}{2\pi\hbar}.
\]
Define, for $S
\in SL(2,\Z)$ and $\Im\alpha>0$, the real $2\times 2$ matrix ${_\alpha T_S}$ by 
$$
{_\alpha T_S(z)}=(q_S^\alpha,p_S^\alpha)\in\R^2\mbox{ defined by }q_S^\alpha+\alpha_S(\alpha)p_S^\alpha=q^S+\alpha_S(\alpha)p^S,\ 
\begin{pmatrix}
q^S\\p^S
\end{pmatrix}:=S(z),\ z=(q,p).
$$
Then, for any $S,\alpha$ such that $|V\cdot \alpha|<\infty,\ \Im(V\cdot \alpha)>0,\ V=S^{-1},\com{S}^{-1}, S^{-1}\com{S}$, 

\[
U(S)^{-1}HU(S)=\int 
{_\alpha T_{\com{S}}}\#h_{\com{S}\cdot \alpha,\alpha}(z)
\tfrac{\vert\psi^{S^{-1}\com{S}\cdot\alpha}
_{{_\alpha T_{S^{-1}\com{S}}}(z)}
\rangle\langle\psi^{\alpha}_z\vert}
{\langle\psi^{S^{-1}\com{S}\cdot\alpha}
_{{_\alpha T_{S^{-1}\com{S}}}(z)}
\vert\psi^{\alpha}_z\rangle}
\frac{dzd\bar z}{2\pi\hbar},
\]
where $h_{\com{S}\cdot \alpha,\alpha}$ is defined by \eqref{alphaalphaprime}-\eqref{betabetaprime} and $T\# h$ is the push-forward of $h$ by $T$.
\end{theorem}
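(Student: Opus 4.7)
The proof is a direct Gaussian calculation. I would begin by re-expressing $H$ in the $\com{S}\cdot\alpha$ family via the intertwining formulas \eqref{alphaalphaprime}-\eqref{betabetaprime}: the hypotheses on $V\cdot\alpha$ for $V=S^{-1},\com{S}^{-1},S^{-1}\com{S}$ guarantee this is well-defined. The motivation for this preliminary move is that, after conjugation, the bra's width parameter will be $\com{S}^{-1}\cdot(\com{S}\cdot\alpha)=\alpha$, matching the right-hand side of the theorem exactly. In parallel I would identify $U(S)^*=U(\com{S}^{-1})$ up to phase: taking the adjoint of the intertwining relation \eqref{meta} and using formal self-adjointness of $x$ and $-i\hbar\partial$, one checks that $U(S)^*$ conjugates $(x,-i\hbar\partial)$ according to the matrix $\com{S}$, forcing $U(S)^*=U(\com{S}^{-1})$ up to a phase. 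Hence $\langle\psi^{\com{S}\cdot\alpha}_z|U(S)=\langle U(\com{S}^{-1})\psi^{\com{S}\cdot\alpha}_z|$, while $U(S)^{-1}=U(S^{-1})$.

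The central computation is the evaluation of $U(V)\psi^\beta_z$ for complex $V\in SL(2,\C)$ by Gaussian integration of the kernel \eqref{udexy} against $\psi^\beta_z$. Completing the square in the integration variable produces a Gaussian in $x$ that I would recognise as a scaled coherent state $g_V(\beta)\,e^{i\Phi_V(z,\beta)/\hbar}\,\psi^{V\cdot\beta}_{{_\beta T_V(z)}}$. The $x^2$ coefficient identifies the new width as the Möbius image $V\cdot\beta$; matching the $x^1$ coefficient to that of $\psi^{V\cdot\beta}_w$ with $w$ real yields exactly the complex-projection equation defining ${_\beta T_V(z)}$; the Gaussian normalisation and the $x^0$ term produce $g_V(\beta)$ and $\Phi_V$. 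I would then apply this with $V=S^{-1}$ and $V=\com{S}^{-1}$ on $\psi^{\com{S}\cdot\alpha}_z$: the first yields the ket $\propto|\psi^{S^{-1}\com{S}\cdot\alpha}_{{_{\com{S}\cdot\alpha}T_{S^{-1}}(z)}}\rangle$; the second, using $\com{S}^{-1}\cdot(\com{S}\cdot\alpha)=\alpha$, yields the bra $\propto\langle\psi^{\alpha}_{{_{\com{S}\cdot\alpha}T_{\com{S}^{-1}}(z)}}|$.

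To conclude, I would change variables $w:={_{\com{S}\cdot\alpha}T_{\com{S}^{-1}}(z)}$. Its inverse is $z={_\alpha T_{\com{S}}(w)}$: applying $\com{S}$ inside the complex-projection equation defining ${_{\com{S}\cdot\alpha}T_{\com{S}^{-1}}}$ recovers the equation defining ${_\alpha T_{\com{S}}}$. The Jacobian is $|\det{_\alpha T_\com{S}}|$. A composition identity ${_{\com{S}\cdot\alpha}T_{S^{-1}}}\circ{_\alpha T_{\com{S}}}={_\alpha T_{S^{-1}\com{S}}}$, verified the same way by matching complex-projection equations, rewrites the ket's position as ${_\alpha T_{S^{-1}\com{S}}(w)}$. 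The scalar factor, reassembled from the two Gaussian normalisations using $\det S=1$, simplifies to the stated ${_\alpha\fract_S}$, and the function multiplying the integrand becomes $h_{\com{S}\cdot\alpha,\alpha}({_\alpha T_{\com{S}}(w)})$ courtesy of the initial intertwining.

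The main obstacle will be the phase bookkeeping. The two Gaussian integrations produce phases $\Phi_{S^{-1}}$ and $\overline{\Phi_{\com{S}^{-1}}}$ (both taken on $\psi^{\com{S}\cdot\alpha}_z$), and their difference must be recognised, after expansion, as the symplectic expression $S^{-1}({_\alpha T_{\com{S}}(z)})\wedge({_\alpha T_{S^{-1}\com{S}}(z)}+z)/\hbar$. Although a routine calculation, the algebra is dense: the cancellations only close up after invoking $\det S=1$, the composition and inverse identities for ${_\beta T_V}$, and the explicit form $z\wedge z'=pq'-qp'$ of the symplectic pairing on $T^*\R$.
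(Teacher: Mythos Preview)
The paper does not actually prove this theorem in the text: the section opens with ``This section relies on \cite{tp3}'' and the detailed argument is deferred to that preprint. What the paper does supply is the explicit integral kernel \eqref{udexy} for $U(S)$, the intertwining relation \eqref{meta}, the change-of-width formulas \eqref{alphaalphaprime}--\eqref{betabetaprime}, and the remark that in the real case the result follows from the Weyl pushforward \eqref{pfw} by ``an easy computation''. Your proposal is precisely the natural direct computation that these ingredients invite, and it is consistent with the style of the surrounding discussion.

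Your outline is sound. The identification $U(S)^{*}=U(\com{S}^{-1})$ (up to phase) via the adjoint of \eqref{meta} is correct and is exactly the mechanism that produces the asymmetry $S^{-1}$ versus $\com{S}^{-1}$ on ket and bra. The core step---completing the square in the Gaussian integral $\int U(V)(x,y)\psi^\beta_z(y)\,dy$ to read off the M\"obius-transformed width $V\cdot\beta$, the real projection ${_\beta T_V(z)}$ from the linear-in-$x$ term, and the prefactor and phase from the normalisation and constant term---is the standard engine for such results and will work here. The inverse and composition identities for ${_\beta T_V}$ that you state are immediate from the defining complex-projection equation, and they do produce the advertised argument ${_\alpha T_{S^{-1}\com{S}}(w)}$ after the change of variable.

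Two points to watch. First, the intertwining \eqref{betabetaprime} is stated under sign conditions on $\Im(\alpha-\alpha')$ and $\Im(\alpha^{-1}-\alpha'{}^{-1})$; you should verify that the move from $\alpha$ to $\com{S}\cdot\alpha$ (or its inverse, depending on which direction you actually need) falls under those conditions, or else note that for compactly supported $h$ the heat-type operator in \eqref{betabetaprime} is harmless regardless. Second, as you rightly flag, the phase is where the calculation bites: the expression $S^{-1}({_\alpha T_{\com{S}}(z)})\wedge({_\alpha T_{S^{-1}\com{S}}(z)}+z)$ is not manifestly what drops out of the two Gaussian constants, and the paper's own phase conventions (the $e^{-ipq/2\hbar}$ in $\psi^\alpha_z$, the branch choices in \eqref{udexy}) must be tracked carefully. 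This is bookkeeping rather than a conceptual obstacle, but it is the place where a sign or a factor of two is most likely to go astray.
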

The proof is given in \cite{tp3}. the key ingredient for the effective computation of the result consists in the following trick: one easily compute that
$$
U(S)^{-1}\vert\psi^\alpha_z\rangle\langle\psi^\alpha_z\vert U(S)=L
\vert\psi^{S^{-1}\cdot\alpha}_{_\alpha T_{S^{-1}}z}\rangle\langle
\psi^{{\com{S}}^{-1}\cdot\alpha}_{_\alpha T_{{\com{S}}^{-1}}z}\vert \mbox{ for } L\in\C.
$$
But $U(S)^{-1}\vert\psi^\alpha_z\rangle\langle\psi^\alpha_z\vert U(S)$ is a projector, therefore 
$$
L^2\langle \psi^{S^{-1}\cdot\alpha}_{_\alpha T_{S^{-1}}z}\vert
\psi^{{\com{S}}^{-1}\cdot\alpha}_{_\alpha T_{{\com{S}}^{-1}}z}\rangle=L
\Longrightarrow L=\tfrac1{\langle \psi^{S^{-1}\cdot\alpha}_{_\alpha T_{S^{-1}}z}\vert
\psi^{{\com{S}}^{-1}\cdot\alpha}_{_\alpha T_{{\com{S}}^{-1}}z}\rangle}.
$$

\subsection{Link with Weyl}\label{linkw}
It is well known, see the beautiful paper by Alex et al \cite{gs}, that the Weyl symbol of a Toeplitz operator is the function obtained by letting the heat flow at time $\hbar/4$ act on the Toeplitz symbol. therefore the Weyl symbol of a Toeplitz operator is an entire function on which a complex canonical transform acts naturally. 
As a corollary of Theorem \ref{thm1}, we get the following (expected) result.
\begin{theorem}\ 

Let $H$ with Toeplitz symbol $h$. Let us denote by $\sigma^\hbar_{H}$ the Weyl symbol of $H$ (note that $\sigma^\hbar_{H}=e^{\frac\hbar4\Delta}h$ is an entire function).

 Then
\[
\sigma^\hbar_{U(S)\inv HU(S)}=\sigma^\hbar\circ S.
\]
\end{theorem}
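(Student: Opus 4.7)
The plan is to deduce the theorem from Theorem \ref{thm1} combined with the heat-flow representation $\sigma^\hbar_H=e^{\frac{\hbar}{4}\Delta}h$ of Weyl symbols of Toeplitz operators recalled from \cite{gs}, by reducing first to the push-forward formula \eqref{pfw} in the real case and then extending by holomorphy in the entries of $S$.

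\textbf{Step 1: the real case.} For $S\in SL(2,\R)$, formula \eqref{pfw} gives the transformation law of Weyl symbols under metaplectic conjugation directly. Since $\sigma^\hbar_H=e^{\frac{\hbar}{4}\Delta}h$ is entire on $\C^2$ whenever $h$ is compactly supported, composition with the linear map $S$ is well-defined and coincides with the pointwise evaluation in \eqref{pfw}; this settles the real case with no further computation.

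\textbf{Step 2: extension to complex $S$.} For $S\in SL(2,\C)$, the right-hand side $\sigma^\hbar_H\circ S$ is manifestly a holomorphic function of the entries of $S$, since $\sigma^\hbar_H$ is entire. For the left-hand side, Theorem \ref{thm1} realizes $U(S)^{-1}HU(S)$ as an off-diagonal Toeplitz integral in which the squeezing parameter $\com{S}\cdot\alpha$, the real $2\times 2$ matrix $_\alpha T_{\com{S}}$, the scalar prefactor $_\alpha\fract_S$, the modified symbol $h_{\com{S}\cdot\alpha,\alpha}$ and the symplectic phase all depend holomorphically on $S$ throughout the admissibility region $\Im(V\cdot\alpha)>0$, $|V\cdot\alpha|<\infty$. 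Computing the integral kernel of this operator amounts to a finite-dimensional Gaussian integration (the coherent-state matrix elements are Gaussians in the phase-space variables and $h$ is compactly supported), after which the standard Wigner transform yields a Weyl symbol which is itself holomorphic in $S$. Two holomorphic functions of $S$ that agree on the real slice $SL(2,\R)$ must then agree on the whole connected component of the complex parameter region.

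\textbf{Main obstacle.} The delicate point is checking that the complex admissibility region of Theorem \ref{thm1} is connected and contains a neighborhood of the real points, so that the analytic continuation is unambiguous, and that no branch cut is crossed by the square roots appearing both in the prefactor $_\alpha\fract_S$ and in the metaplectic kernel \eqref{udexy}. A secondary issue is the apparent dependence on the auxiliary squeezing parameter $\alpha$ on the left-hand side of Theorem \ref{thm1}: on the Weyl side this parameter must drop out, and verifying this explicitly amounts to reading the heat-flow identity $\sigma^\hbar_H=e^{\frac{\hbar}{4}\Delta}h$ as the canonical lift of the $\alpha$-dependent Toeplitz representation \eqref{alphaalphaprime}--\eqref{betabetaprime} to an $\alpha$-free Weyl representation, thereby closing the identification of the two sides.
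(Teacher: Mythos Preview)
Your overall strategy---real case via \eqref{pfw}, then analytic continuation in $S$ using that $\sigma^\hbar_H=e^{\frac\hbar4\Delta}h$ is entire---matches the spirit of the paper, which gives no detailed proof beyond declaring the result a corollary of Theorem~\ref{thm1} and noting the heat-flow identity. Your sketch is in fact more explicit than what the paper provides.

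There is, however, a genuine error in Step~2. You claim that the ingredients of Theorem~\ref{thm1} ``all depend holomorphically on $S$'', but the formula is built out of $\com{S}$, the entry-wise complex conjugate of $S$: the squeezing parameter $\com{S}\cdot\alpha$, the matrix $_\alpha T_{\com{S}}$, the modified symbol $h_{\com{S}\cdot\alpha,\alpha}$ and the phase all depend \emph{anti}-holomorphically on the entries of $S$. So you cannot read holomorphy of the left-hand side directly off the off-diagonal Toeplitz representation.

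The repair is to argue holomorphy of the operator without passing through Theorem~\ref{thm1}: the metaplectic kernel \eqref{udexy} is manifestly holomorphic in $(a,b,d)$, and since $U(S)^{-1}=\pm U(S^{-1})$ by the group law, the composed integral kernel of $U(S^{-1})HU(S)$---and hence its Weyl symbol, obtained by a Gaussian integration against the compactly supported $h$---is holomorphic in $S$ on the region where the integrals converge. The $\bar S$ dependence in Theorem~\ref{thm1} is an artifact of the coherent-state parametrization (the auxiliary $\alpha$), not of the operator itself; it must cancel upon passage to the Weyl side. With this correction your analytic-continuation argument closes, and your remarks about connectedness of the admissibility domain and consistency of the square-root branches remain the relevant technical checks.
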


\subsection{Flows on extended phase-space}\label{extphspa}
Consider on the extended phase space $\C\times \C=\{(z,\alpha)\}$ the mapping
\[
\Phi_S: (\alpha,z)\mapsto (S.\alpha,{_\alpha T_S}(z)).
\].
\begin{theorem}
\[
\Phi_{SS'}=\Phi_S\Phi_{S'}.
\]
\end{theorem}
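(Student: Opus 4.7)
The statement that $\Phi_S$ gives a group action on the extended phase space decomposes into two independent checks, one per coordinate. For the first coordinate, I would verify $\beta_S\circ\beta_{S'}=\beta_{SS'}$ directly from the definition: the assignment $S\mapsto\beta_S(\cdot)$ is built from the Möbius action $S\cdot\beta=(a\beta+b)/(c\beta+d)$ combined (as already visible in Theorem~\ref{thm1}) with complex conjugation $S\mapsto\com{S}$; since both are homomorphisms in the appropriate sense, the composition collapses to the identity $\com{SS'}\cdot\beta=\com{S}\cdot(\com{S'}\cdot\beta)$ coming from $\com{SS'}=\com{S}\,\com{S'}$ and the standard Möbius group law.

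For the second coordinate, I would exploit the characterization of $T^\beta_S z$ as the unique real preimage of $Sz$ under the complex-linear functional $L_\gamma(v):=v_1+\gamma v_2$, with $\gamma=\beta_S(\beta)$: since its restriction $\iota_\gamma\colon\R^2\to\C$, $(x,y)\mapsto x+\gamma y$, is an $\R$-linear isomorphism whenever $\Im\gamma\neq 0$, one has $T^\beta_S z=\iota_{\beta_S(\beta)}^{-1}\bigl(L_{\beta_S(\beta)}(Sz)\bigr)$. By uniqueness of this preimage, the desired equality $T^{\beta_{S'}(\beta)}_S\circ T^\beta_{S'}=T^\beta_{SS'}$ reduces to the single scalar identity
$$L_{\beta_{SS'}(\beta)}\bigl(S\cdot T^\beta_{S'}z\bigr)=L_{\beta_{SS'}(\beta)}(SS'z).$$
Writing $u:=T^\beta_{S'}z-S'z\in\C^2$, the defining property of $T^\beta_{S'}z$ places $u$ in $\ker L_{\beta_{S'}(\beta)}=\Span_\C\{(-\beta_{S'}(\beta),1)^T\}$, so the identity is equivalent to the rank-one containment
$$S\cdot\ker L_{\beta_{S'}(\beta)}\ \subseteq\ \ker L_{\beta_{SS'}(\beta)},$$
i.e.\ to the fact that the complex vector $S\,(-\beta_{S'}(\beta),1)^T$ must be proportional to $(-\beta_{SS'}(\beta),1)^T$.

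The main obstacle will be this kernel containment: it is an algebraic identity relating the entries of $S$, the value $\beta_{S'}(\beta)$, and the composition rule for $\beta_{\bullet}$, and it is precisely what dictates the presence of $\com{S}$ (rather than $S$) in $\beta_S$ and the restriction to the admissible upper-half-plane domain where the iterates $\Im\beta_V(\beta)$ stay positive. A conceptually cleaner alternative, which I would fall back on if the direct identity proves cumbersome, is to invoke the (projective) representation property $U(S)U(S')=\chi(S,S')\,U(SS')$ of the metaplectic operators introduced in Section~\ref{quantcomptrans}: applying Theorem~\ref{thm1} on the one hand to $U(SS')^{-1}HU(SS')$ and on the other iteratively to $U(S')^{-1}\bigl(U(S)^{-1}HU(S)\bigr)U(S')$ produces two off-diagonal Toeplitz representations of the same operator, and matching them coordinate-by-coordinate forces $\Phi_{SS'}=\Phi_S\Phi_{S'}$ automatically, with the cocycle $\chi$ absorbed into the scalar prefactor $_\alpha\fract_S$.
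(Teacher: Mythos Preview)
The paper does not supply a proof of this statement; it is asserted without argument (details are deferred to the preprint \cite{tp3}). Your two-coordinate decomposition is the natural approach and is correct.

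One clarification on your direct route: the kernel containment $S\cdot\ker L_{\beta_{S'}(\beta)}\subseteq\ker L_{\beta_{SS'}(\beta)}$ that you flag as the ``main obstacle'' is not a separate algebraic identity to be checked---it is the first-coordinate M\"obius law read projectively. A complex line in $\C^2$ is a point of $\C\P^1$, and the linear action of $S$ on $\C^2$ descends to $\C\P^1$ as the M\"obius action \emph{by definition}; the assignment $\gamma\mapsto\ker L_\gamma$ is (up to a fixed sign convention) exactly this identification. Hence once you have established $\beta_S\circ\beta_{S'}=\beta_{SS'}$ in part~(1), the kernel containment is automatic and part~(2) collapses onto part~(1). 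The only thing to watch is that the paper's conventions are used consistently: it switches between the parameters $\alpha$ and $\beta$, and the ``$\alpha_S(\alpha)=iS\cdot(\alpha)$'' just before Theorem~\ref{thm1} should almost certainly read ``$\alpha_S(\alpha)$ is $S\cdot(\alpha)$''.

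Your speculation that $\com{S}$ must appear in $\beta_S$ is a red herring here. The map $\Phi_S$ itself involves only $S$ (via $\alpha\mapsto S\cdot\alpha$ and ${_\alpha T_S}$); cf.\ also Section~\ref{cangroup}, where the groupoid action is written explicitly as $(z,\alpha)\mapsto({_\alpha T_S}(z),S\cdot\alpha)$. The conjugate $\com{S}$ enters the off-diagonal formula of Theorem~\ref{thm1} only because the bra $\langle\psi^\alpha_z|$ is acted on by $U(S)^*$, which for non-real $S$ is not $U(S)^{-1}$---a phenomenon distinct from the group-action property of $\Phi_S$. Your metaplectic fallback would certainly work, but is heavier than what the statement requires.
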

\subsection{Examples}\label{annb}

Several examples are presented in the table below.

Let
$$
D_\alpha(q,p)=
{\langle\psi^{S^{-1}\com{S}\cdot\alpha}
_{{_\alpha T_{S^{-1}\com{S}}}(z)}
\vert\psi^{\alpha}_z\rangle}^{-1},\ z=(q,p).
$$
\begin{table}
\centering
\begin{tabular}
{|c|c|c|c|c|c|c|}
\hline  
\LARGE \begin{color}{white}$\begin{pmatrix}\\\end{pmatrix}$\end{color}\large $U$\LARGE \begin{color}{white}$\begin{pmatrix}\\\end{pmatrix}$\end{color}&\large $S$ &\large $S^{-1}S^c$ &\large $S^{-1}S^c\cdot i$ &\large $_iT_{S^{-1}S^c}$ &\large 
$_iT_{S^c}$ &\large $D_i(q,p)$\\  \hline

\LARGE \begin{color}{white}$\begin{pmatrix}.\\.\end{pmatrix}$\end{color}  \large complex time  free evolution
\LARGE \begin{color}{white}$\begin{pmatrix}.\\.\end{pmatrix}$\end{color} 
&\ \large $\ \ \begin{pmatrix}1&-it\\0&1\end{pmatrix}\ \ $   &\ \large $\ \ \begin{pmatrix}1&2it\\0&1\end{pmatrix}\ \ $   &\large $ i(1+2t)$ &\ \large $\ \ \begin{pmatrix}1&0\\0&\tfrac{1+4t}{1+2t}\end{pmatrix}\ \ $    &\ \large $\ \ \begin{pmatrix}1&0\\0&\tfrac{1+2t}{1+t}\end{pmatrix}\ \ $   &\large $e^{t\frac{(1+3t)^2}{(1+2t)^2}\frac{p^2}\hbar}$\\   \hline

\LARGE \begin{color}{white}$\begin{pmatrix}.\\.\end{pmatrix}$\end{color}\large  multiplication by  $  e^{-tx²} $\LARGE\begin{color}{white}$\begin{pmatrix}.\\.\end{pmatrix}$\end{color} &\ \large $\ \ \begin{pmatrix}1&0\\-it&1\end{pmatrix}\ \ $   &\ \large $\ \ \begin{pmatrix}1&0\\2it&1\end{pmatrix}\ \ $   &\large $\frac i{1-2t}$ &\ \large $\ \ \begin{pmatrix}\tfrac{1-4t}{1-2t}&0\\0&1\end{pmatrix}\ \ $    &\ \large $\ \ \begin{pmatrix}\tfrac{1-2t}{1-t}&0\\0&1\end{pmatrix}\ \ $   &\large $e^{t\frac{(1+3t)^2}{(1+2t)^2}\frac{q^2}\hbar}$\\   \hline

\LARGE\begin{color}{white}$\begin{pmatrix}.\\.\end{pmatrix}$\end{color}\large  complex dilation \LARGE\begin{color}{white}$\begin{pmatrix}.\\.\end{pmatrix}$\end{color} &\ \large $\ \ \begin{pmatrix}e^{it}&0\\0&e^{-it}\end{pmatrix}\ \ $   &\ \large $\ \ \begin{pmatrix}e^{-2it}&0\\0&e^{2it}\end{pmatrix}\ \ $   &\large $e^{-4it}i$ &\ \large $\ \ \begin{pmatrix}\tfrac{1-4t}{1-2t}&0\\0&1\end{pmatrix}\ \ $    &\ \large $\ \ \begin{pmatrix}\tfrac{1-2t}{1-t}&0\\0&1\end{pmatrix}\ \ $   &\large $e^{2i\cos t\frac{ qp}\hbar}$\\   \hline

\LARGE\begin{color}{white}$\begin{pmatrix}.\\.\end{pmatrix}$\end{color}\large complex  harmonic oscillator\LARGE\begin{color}{white}$\begin{pmatrix}.\\.\end{pmatrix}$\end{color} &\ \small $\ \ \begin{pmatrix}\cosh t&i\sinh t\\-i\sinh t&\cosh t\end{pmatrix}\ \ $   &\ \small $\ \ \begin{pmatrix}\cosh 2t&i\sinh 2t\\-i\sinh 2t&\cosh 2t\end{pmatrix}\ \ $   &\large $i$ &\ \large $\ \ \begin{pmatrix}e^{-2t}&0\\0&e^{2t}\end{pmatrix}\ \ $    &\ \large $\ \ \begin{pmatrix}e^{-t}&0\\0&e^{t}\end{pmatrix}\ \ $    &\small $e^{\frac{\sinh 2t(q^2+p^2)}\hbar}$\\   \hline

\LARGE\begin{color}{white}$\begin{pmatrix}.\\.\end{pmatrix}$\end{color}\large $\dots$\LARGE\begin{color}{white}$\begin{pmatrix}.\\.\end{pmatrix}$\end{color} &\ \large $\ \ \begin{pmatrix}0&i\\i&0\end{pmatrix}\ \ $   &\ \large $\ \ \begin{pmatrix}-1&0\\0&-1\end{pmatrix}\ \ $   &\large $i$ &\ \large $\ \ \begin{pmatrix}-1&0\\0&-1\end{pmatrix}\ \ $    &\ \large $\ \ \begin{pmatrix}-1&0\\0&1\end{pmatrix}\ \ $   &\large $1$\\   \hline

\end{tabular}

\label{Tab:interactions}

\end{table}

We finish this section by the same computations for a non-canonical $S$, used in Sections \ref{nomcatrans} and \ref{linkmeta} below.
\begin{table}
\centering
\begin{tabular}
{|c|c|c|c|c|c|c|}
\hline  
\LARGE \begin{color}{white}$\begin{pmatrix}\\\end{pmatrix}$\end{color}\large \LARGE \begin{color}{white}$\begin{pmatrix}\\\end{pmatrix}$\end{color}&\large $S$ &\large $S^{-1}S^c$ &\large $S^{-1}S^c\cdot i$ &\large $_iT_{S^{-1}S^c}$ &\large 
$_iT_{S^c}$ 
\\  \hline


\LARGE\begin{color}{white}$\begin{pmatrix}.\\.\end{pmatrix}$\end{color}\large its opposite\LARGE\begin{color}{white}$\begin{pmatrix}.\\.\end{pmatrix}$\end{color} &\ \large $\ \ \begin{pmatrix}0&i\\-i&0\end{pmatrix}\ \ $   &\ \large $\ \ \begin{pmatrix}1&0\\0&1\end{pmatrix}\ \ $   &\large $i$ &\ \large $\ \ \begin{pmatrix}1&0\\0&1\end{pmatrix}\ \ $    &\ \large $\ \ \begin{pmatrix}-1&0\\0&-1\end{pmatrix}\ \ $   
\\   \hline

\LARGE\begin{color}{white}$\begin{pmatrix}.\\.\end{pmatrix}$\end{color}\large anticanonical example\LARGE\begin{color}{white}$\begin{pmatrix}.\\.\end{pmatrix}$\end{color} &\ \large $\ \ \begin{pmatrix}0&-i\\i&0\end{pmatrix}\ \ $   &\ \large $\ \ \begin{pmatrix}-1&0\\0&-1\end{pmatrix}\ \ $   &\large $i$ &\ \large $\ \ \begin{pmatrix}-1&0\\0&-1\end{pmatrix}\ \ $    &\ \large $\ \ \begin{pmatrix}1&0\\0&1\end{pmatrix}\ \ $   
\\   \hline

\end{tabular}

\label{Tab:interactions}

\end{table}
\subsection{Real phase space associated to complex  linear symplectomorphisms}\label{realcomp}
The following result is immediate.
\begin{proposition}\label{prophas}
$_\alpha T_{\bar S}\inv$ and $_\alpha T_{S\inv}$ determine $S$ and $\alpha$.
\end{proposition}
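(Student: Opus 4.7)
The plan is to invert the map $F\colon(S,\beta)\mapsto(A,B):=\bigl(({_\beta T_{\bar S}})^{-1},{_\beta T_{S^{-1}}}\bigr)$, reconstructing $S$ and $\beta$ algebraically from the two given real $2\times 2$ matrices $A,B$.

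The starting point is to recast the definition of $_\beta T_S$ appearing just before Theorem~\ref{thm1} as an identity on complex row vectors: $_\beta T_S$ is the unique \emph{real} $2\times 2$ matrix $T$ satisfying
\[
(1,\beta_S(\beta))\,T \;=\; (1,\beta_S(\beta))\,S.
\]
Since $T$ is real, complex conjugation of this identity yields the companion relation $(1,\overline{\beta_S(\beta)})\,T=(1,\overline{\beta_S(\beta)})\,\bar S$. Comparing these two characterizations with the defining equation for $_{\bar\beta}T_S$ produces the useful symmetry $_\beta T_{\bar S}={_{\bar\beta}T_S}$, which lets me rewrite the first datum as $A^{-1}={_{\bar\beta}T_S}$ and work with a single matrix $S$ instead of juggling $S$ and $\bar S$ independently.

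Setting $\lambda:=S^{-1}\!\cdot\!\beta$ and $\nu:=S\!\cdot\!\bar\beta$, the two data equations then read
\[
(1,\nu)\,A^{-1}=(1,\nu)\,S,\qquad (1,\lambda)\,B=(1,\lambda)\,S^{-1},
\]
i.e.\ $(1,\nu)$ is a left $1$-eigenvector of $A^{-1}S^{-1}$ while $(1,\lambda)$ is a left $1$-eigenvector of $BS$. The existence of these eigenvectors forces the scalar constraints $\det(A^{-1}S^{-1}-I)=0$ and $\det(BS-I)=0$; combined with $\det S=1$ and the compatibility $\nu=S\bar S\!\cdot\!\bar\lambda$ (which is just $\beta=S\!\cdot\!\lambda$ read through $\bar\beta=\bar S\!\cdot\!\bar\lambda$), this furnishes four complex equations for the four complex unknowns---$S\in SL(2,\C)$ (three complex dimensions) together with $\lambda\in\C$---after which $\beta=S\!\cdot\!\lambda$ is read off and $\bar S$ follows from $S$.

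The main obstacle will be upgrading this algebraic system from having a finite solution set to having a unique one, as it is genuinely nonlinear in $S$ via the Möbius actions. Since all the expressions are rational in the entries of $S$ and in $\beta$, my approach is to linearize the system at a generic base point $(S_0,\beta_0)$, verify by a direct computation that the Jacobian there is invertible, and then extend injectivity to a dense open locus by analytic continuation. The degenerate stratum---most visibly $S=I$, where $A=B=I$ and $\beta$ is patently undetermined---lies on a proper algebraic subvariety and has to be read as an implicit genericity hypothesis in the statement.
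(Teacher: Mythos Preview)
Your approach differs substantially from the paper's. The paper's argument is a direct algebraic extraction: from the entries of $_\beta T_{\bar S}^{-1}$ one reads off combinations such as $b\beta+id$ (and analogues for the other matrix entries), while from $_\beta T_{S^{-1}}$ one reads off the complementary combinations $b\beta-id$; summing and differencing then isolates $b\beta$, $d$, and their companions individually, and together with $ad-bc=1$ this gives an explicit small system for the five unknowns $a,b,c,d,\beta$. The proof is terse but constructive rather than dimensional.

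Your row-eigenvector reformulation and the symmetry $_\beta T_{\bar S}={_{\bar\beta}T_S}$ are clean and correct, and your observation that $S=I$ is genuinely degenerate is a point the paper passes over. But there is a real gap in your final step. Verifying that the Jacobian is invertible at a generic base point, and then using analyticity to propagate nonvanishing of its determinant to a dense open set, establishes only that the map $(S,\beta)\mapsto(A,B)$ is a \emph{local} diffeomorphism there. This does not give global injectivity: $z\mapsto e^z$ is an entire local diffeomorphism on all of $\C$ yet is nowhere close to injective. ``Analytic continuation'' does not bridge local to global; you would need a further ingredient---properness, a degree argument, or an explicit inverse---to exclude distinct preimages. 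The paper's explicit extraction of $b\beta\pm id$ etc.\ sidesteps this issue entirely by producing the inverse map directly.
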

As a corollary of Proposition \ref{prophas}, we see that $\R^{\times 2}\times \C=\{((p,q),\beta)\}=T^*\R\times \C$, on which $_\beta T^{-1}_S$ acts, appears as the real phase-space encoding  the action of complex linear canonical transforms.

In the next section, we will give another, noncommutative, interpretation of this phase-space.
\subsection{Noncommutative geometry interpretation }
\label{noncoi}In this section we give a noncommutative interpretation of the off diagonal Toeplitz representation in Theorem \ref{thm1} ([see \cite{tp3} for further details).
\subsubsection{The canonical groupoid}\label{cangroup}\ 

We consider on $\P=T^*\R\times\C^+$ the action of the group $SP(2n,\C)$ 
defined, for any $S\in SP(2n,\C)$ by
$$
(z,\alpha)\to({_\alpha T_S}(z),S\cdot\alpha).
$$
Let us define the groupoid $G$ defined as the semi-direct product $\P\rtimes  SL(2n,\C)$ of $\P$ by $SL(2n,\C)$ \cite[Definition1 p. 104-105 and Section 7]{ac} as  $G=\P\times SL(2n,\C)$, $G^{(0)}=\P\times\{\bf 1\}$ and the functors range and source given by
$$
r((z,\alpha),S)=(z,\alpha),\ \ \ s((z,\alpha),S)=((S(z),S\cdot\alpha)\ \ \forall ((z,\alpha),S)\in\P\times SL(2n,\R).
$$
The $C^*$ algebra associated to the groupoid $G$ is the crossed product $C_0(\P)\rtimes_\Phi SL(2n,\C)$ of the algebra of continuous functions on $\P$ by the action of $SL(2n,\C)$ defined by $\Phi$.
\subsubsection{Symbols}\label{symb}\ 

Let $H$ be a ${_\alpha \mbox{T}}$\"oplitz operator of symbol $\sigma^T_H$ as given by \eqref{alphaalphaprime}. By Theorem \ref{thm1}, we associate to $U(S)^{-1}HU(S)$ the 
couple
$$
((\sigma^T_H)_{S\cdot\alpha,\alpha}\circ {\alpha T}_{\com{S}},\Phi_{S^{-1}\com{S}})
$$
where $(\sigma^T_H)_{S\cdot\alpha,\alpha}$ is given by \eqref{betabetaprime}.

This can be seen as an  element of the algebra associated to the canonical groupoid defined in Section \ref{cangroup} by the following construction:
 we associate to $((\sigma^T_H)_{S\cdot\alpha,\alpha}\circ {\alpha T}_{\com{S}},\Phi_{S^{-1}\com{S}})$ the function $\sigma^{off}[U(S)^{-1}HU(S)]$  on the canonical groupoid identified with $\P\times\P$ defined by
\begin{eqnarray}
&&\sigma^{off}[U(S)^{-1}HU(S)]((z,\alpha),(z',\alpha'))
:=
{_\alpha T}_{\com{S}}\#(\sigma^T_H)_{S\cdot\alpha,\alpha}(z))\delta((z',\alpha')-\Phi_{S^{-1}\com{S}}(z,\alpha))
,\nonumber
\end{eqnarray}
where ${_\alpha T}_{\com{S}}\#(\sigma^T_H)_{S\cdot\alpha,\alpha}$ designate the push-forward of $(\sigma^T_H)_{S\cdot\alpha,\alpha}$ by ${_\alpha T}_{\com{S}}$.

Conversely,  we ``quantize" the symbol $\sigma_{U(S)^{-1}HU(S)}^{off}$ by the following off-diagonal Toeplitz type quantization formula
\be\label{formquant}
T^{off}[\sigma
^{off}]
:=
\int_{\P\times\P}\sigma^{off}
((z,\alpha),(z',\alpha'))
\tfrac{|\psi^{\alpha'}_{z'}\rangle\langle\psi^\alpha_z|}
{\langle\psi^{\alpha'}_{z'}|\psi^\alpha_z\rangle}
\tfrac{dzd\bar zdz'd\bar{z'}}{2\pi\hbar}.
\ee
\begin{proposition}
$$
T^{off}[\sigma^{off}[U(S)^{-1}HU(S)]]=U(S)^{-1}HU(S).
$$
\end{proposition}
\subsubsection{On the (formal) composition of symbols}\label{compsymb}\ 

Conjugating an observable by $U(S)$ correspond to a (complex or real) change of variable. Therefore, multiplication of functions should be defined on the same system of coordinates, computationally. This leads to  associate to $U(S)^{-1}HU(S)$ the operator of ``multiplication"  acting on $H'$ given by
\begin{eqnarray}
U(S)^{-1}HU(S)\cdot_S H'&:=&
U(S)^{-1}HH'U(S).\nonumber
\end{eqnarray}

This gives rise to the following multiplication of symbols: when $H,H'$ are Toeplitz operators, so is (asymptotically) $HH'$ and its symbol is at leading order the products of the symbols of $H$ and $H'$. therefore the symbol of $U(S)^{-1}HU(S)\cdot_S H'$ is the groupoid composition of the one of $U(S)^{-1}HU(S)$ by th (trivial) one of $H'$. 

In the case where $H':=U(S')^{-1}H^{in}U(S')$

\begin{eqnarray}
U(S)^{-1}HU(S)\cdot_S H'&:=&
U(S)^{-1}HH'U(S).\nonumber
\\
&=&U(S)^{-1}HU(S')^{-1}H'U(S')U(S).\nonumber
\end{eqnarray}
using the result of Theorem \ref{thm1}
\begin{eqnarray}
&&U(S')^{-1}HU(S')\nonumber\\
&=&\int 
{_\alpha T_\com{S'}}\#h_{\com{S'}\cdot \alpha,\alpha}(_\alpha T_{\com{S'}}z)
\tfrac{\vert\psi^{{S'}^{-1}\com{{S'}}\cdot\alpha}
_{{_\alpha T_{{S'}^{-1}\com{{S'}}}}(z)}
\rangle\langle\psi^{\alpha}_z\vert}
{\langle\psi^{{S'}^{-1}\com{{S'}}\cdot\alpha}
_{{_\alpha T_{{S'}^{-1}\com{{S'}}}}(z)}
\vert\psi^{\alpha}_z\rangle}
\frac{dzd\bar z}{2\pi\hbar},
\nonumber
\end{eqnarray}
and  (formally)
$$
H\vert\psi^{{S'}^{-1}\com{{S'}}\cdot\alpha}
_{{_\alpha T_{{S'}^{-1}\com{{S'}}}}(z)}
\rangle=h({{_\alpha T_{{S'}^{-1}\com{{S'}}}}(z)})
\vert\psi^{{S'}^{-1}\com{{S'}}\cdot\alpha}
_{{_\alpha T_{{S'}^{-1}\com{{S'}}}}(z)}
\rangle +O(\hbar)
$$
we get formally the usual groupoid composition of symbols (see \cite{tp3} for explicit expressions).
 
\subsection{Non canonical transforms}
\label{nomcatrans}
%
%

It is striking to notice that the definition of the metaplectic representation as defined by \eqref{udexy}, namely, for a matrix $S=\begin{pmatrix}
a&b\\c&d
\end{pmatrix} $ of determinant one,
 the operator of integral kernel given by \eqref{udexy},
depends only on the numbers $a,b,d$. The absence of $c$ is hidden by the fact that, thanks to $\det S=1$, $c=\tfrac{ad-1}b$.

On the contrary, the main formula in Theorem \ref{thm1} is expressed directly on the matrix $S$ and therefore admits an extension to the case $\det S\neq 1$. Note that this extension is highly non-trivial also in the real case $S\in M(2,\R)$.

In the present paper, we will limit ourself to the case $\det S=\pm 1$.  
We set
$$
M^\pm(2,\C):=\{S\in SL(2,\C),\ \det S=\pm 1\}.
$$

\begin{definition}\label{defnonca}
Let
\[
H=\int h(z)\vert\psi^\beta_z\rangle\langle\psi^\beta_z\vert\frac{dzd\bar z}{2\pi\hbar}.
\]
Define, for $S
\in M^\pm(2,\C),\ \det S\neq0,$ and $\Im\alpha>0$, the real $2\times 2$ matrix ${_\alpha T_S}$ by 
$$
{_\alpha T_S(z)}=(q_S^\alpha,p_S^\alpha)\in\R^2\mbox{ defined by }q_S^\alpha+\alpha_S(\alpha)p_S^\alpha=q^S+\alpha_S(\alpha)p^S,\ 
\begin{pmatrix}
q^S\\p^S
\end{pmatrix}:=S(z),\ z=(q,p).
$$
Then, for any $S,\alpha$ such that $|V\cdot \alpha|<\infty,\ \Im(V\cdot \alpha)>0,\ V=S^{-1},\com{S}^{-1}, S^{-1}\com{S}$, 
we define the composition operator $\ccomp{S}$ acting on $H$ by
\be
\ccomp{S}H=\int 
 {_\alpha T_{\com{S}}}\# h_{\com{S}\cdot \alpha,\alpha}((-1)^{\frac{1-\det{S}}2}z)
\tfrac{\vert\psi^{S^{-1}\com{S}\cdot\alpha}
_{{_\alpha T_{S^{-1}\com{S}}}(z)}
\rangle\langle\psi^{\alpha}_z\vert}
{\langle\psi^{S^{-1}\com{S}\cdot\alpha}
_{{_\alpha T_{S^{-1}\com{S}}}(z)}
\vert I^{\frac{1-\det{S}}2}\psi^{\alpha}_z\rangle}
\frac{dzd\bar z}{2\pi\hbar},\label{defpm}
\ee
where $I$ is the parity operator defined on $L^2(\R)$ by $I\psi(x)=\psi(-x)$.
\end{definition}
When $\det{S}=1$, \eqref{defpm} is the same as the result of Theorem \ref{thm1} so that, in this case, $\ccomp{S}\cdot=U(S)^{-1}\cdot U(S)$.
When $\det{S}=-1$, the presence of the operator $I$ in the normalization constant $\frac 1
{\langle\psi^{S^{-1}\com{S}\cdot\alpha}
_{{_\alpha T_{S^{-1}\com{S}}}(z)}
´\vert I\psi^{\alpha}_z\rangle}$  and of the factor $(-1)^{\frac{1-\det{S}}2}$ in the argument of $ {_\alpha T_{\com{S}}}\# h_{\com{S}\cdot \alpha,\alpha}$ follows from the following intuitive arguments (see \cite{tp3} for a rigorous one): 

\noindent - as we have seen right after its statement, the key stone of the proof of Theorem \ref{thm1} was the fact that the normalization constant $L$ ensures  $U(S^{-1})|\psi^\alpha_z\rangle\langle\psi^\alpha_z|U(S)$
 to be a projector. This requirement can be also seen as following the fcat that Wigner functions of pure states composed by canonical transforms satisfy the same equality than the original one, namely
 \be\label{twistconv}
 \int S\#W(z-z')S\#W(z)e^{i\frac{z\wedge z'}\hbar} dz'=S\#W(z)\Leftrightarrow
 \int W(z-z')W(z')e^{i\frac{z\wedge z'}\hbar} dz'=W(z)
 \ee
 since $\det{S}=1\Rightarrow S(z)\wedge S(z')=z\wedge z'$. When $\det(S)=-1$, the left hand side of \eqref{twistconv} becomes
\be\label{twistconv2}
 \int S\#W(z-z')S\#W(z)e^{-i\frac{z\wedge z'}\hbar} dz'=S\#W(z)\Leftrightarrow
 \int W(z-z')W(-z')e^{i\frac{z\wedge z'}\hbar} dz'=W(z)
 \ee 
 leading to, if $R$ denotes the operator of Wigner function $W$, $\ccomp{S}RI\ccomp{S}R=\ccomp{S}R$. This shows easily that $I$ has to be introduced in $\langle\psi^{S^{-1}\com{S}\cdot\alpha}
_{{_\alpha T_{S^{-1}\com{S}}}(z)}
´\vert I\psi^{\alpha}_z\rangle$.

\noindent - in the course of the proof of Theorem \ref{thm1} is used the equality 
\be\label{eqweyl}
U(S^{-1})e^{i\frac{z\wedge Z}\hbar}U(S)=e^{i\frac{z\wedge S(Z)}\hbar}
=e^{i\frac{S^{-1}(z)\wedge Z}\hbar},\ Z=\binom{x}{-i\hbar\frac d{dx}},
\ee
due to the fact that $S$ is canonical.
When $\det{S}=-1$, \eqref{eqweyl} becomes
$$
U(S^{-1})e^{i\frac{z\wedge Z}\hbar}U(S)=e^{i\frac{z\wedge S(Z)}\hbar}
=e^{-i\frac{S^{-1}(z)\wedge Z}\hbar},
$$
responsible for the change $z\to -z$ in $\vert\psi^{S^{-1}\com{S}\cdot\alpha}
_{{_\alpha T_{S^{-1}\com{S}}}(z)}
\rangle\langle\psi^{\alpha}_z\vert$ and therefore in the argument of $ {_\alpha T_{\com{S}}}\# h_{\com{S}\cdot \alpha,\alpha}$ by change of variable in the integration in \eqref{defpm}.

Note again that, on the contrary of the symplectic case, $\ccomp{S}$ is not in general a conjugation.
Nevertheless, since $\ccomp{S}H$ has the form $
\ccomp{S}H=\int f(z)|\ec{\alpha'(\alpha)}_{z'(z)}\rangle\langle\ec{\alpha}_{z}|dz$, one can  extend $\ccomp{S}$, asin the conjugation case, to more general operator than the Toeplitz class and define $\ccomp{S}\ccomp{S'}$ by the same formula as in definition \ref{defnonca} after first replacing  $(z,\alpha)$ by $({{_\alpha T_{S^{-1}\com{S}}}(z)},{S^{-1}\com{S}\cdot\alpha})$ and then multiplying by the weight $$
\left.\tfrac{{_\alpha T_{\com{S}}}\# h_{\com{S}\cdot \alpha,\alpha}(z)}
{{\langle\psi^{S^{-1}\com{S}\cdot\alpha}
_{{_\alpha T_{S^{-1}\com{S}}}(z)}
\vert\psi^{\alpha}_z\rangle}}
\right|_{S=S'}$$
(see \cite{tp3} for further details).

With this definition of $\ccomp{S'}
$ is a representation of $M(d,\C)$:
\begin{theorem}
$$
\ccomp{S'}\ccomp{S}=\ccomp{S'S}\ \mbox{ for all } S,S'\mbox{ in }M^\pm(2,\C).
$$ 
\end{theorem} 
 
As a significant example useful in the next section, let us consider the case $S=\begin{pmatrix}0&i\\-i&0\end{pmatrix}$ computed in the second table of Section \ref{annb}. We get, in the case $\alpha=i$,
\be\label{exas}
\ccomp{S}H=\int 
h(z)
\vert\psi^{i}
_{-z}
\rangle\langle\psi^{i}_z\vert\frac{dzd\bar z}{2\pi\hbar},
\ee
since $\langle\psi^i_{-z}\vert I\psi^i_z\rangle=\langle\psi^i_{-z}\vert\psi^i_{-z}\rangle=1$.

In other words,  $\ccomp{S}H$ is the quantization of the symbol (with a slight abuse of notation)
\be\label{exassymb}
\sigma^{off}[\ccomp{S}H]((z,i),(z',i))
=
h(z)\delta(z'+z).
\ee
The case $S=\begin{pmatrix}0& -i\\ i&0\end{pmatrix}$ can be treated the same way and leads to, thanks to the same table,
\be\label{exas}
\ccomp{S}H=\int 
h(z)
\vert\psi^{i}
_{z}
\rangle\langle\psi^{i}_{-z}\vert\frac{dzd\bar z}{2\pi\hbar},
\ee

 \section{Bose-Einstein-Fermi at the classical level}\label{boseinferm}
 Quantum statistics is a fundamental hypothesis in quantum mechanics. It ensures in particular the stability of matter. On the contrary of many other aspects of non-relativistic quantum mechanics which have a natural ```classical" counterpart, it seems at the first glance  difficult to associate to statistics properties of quantum object a classical corresponding symmetry. Changing the sign after permutation of coordinates of different particle doesn't appeal any classical simple action. Moreover most of the quantities which ``passes" at the limit of vanishing Planck constant are quadratic and therefore looks,   wrongly, as we will see, insensible to the change of sign. Finally, typical fermionic expressions such as exchange term in the Hartree-Fock theory vanishes numerically at the limit $\hbar\to 0$\footnote{Pierre-Louis Lions, hand-written private communication (1990).}.

In this section, which relies on \cite{tp4}, we will implement this ``exchange" action on three (in fact four) different symbols associated to  quantum density matrices: the Husimi function (average of the density matrix on coherent states, therefore a probability density), Wigner functions (that is the Weyl symbol suitably renormalized by a power of the Planck constant in order to be of integral $1$ (but non positive) and the  Toeplitz symbol appearing in the so-called positive quantization procedure. The fourth one will be related to the construction elaborated in Section \ref{nomcatrans}. 

The presentation will be rather formal, rigorous statements and derivations can be found in \cite{tp4}.

\vskip 1cm
%
%
%
%

\begin{definition}
 Let $\rho$ be a density matrix given by an integral kernel $\rho(X;Y),\ X=(x_1,\dots,x_n),\ Y=(y_1,\dots,y_n)$. We define, for $i,j=1,\dots,N$, the mappings 
 
$U_{i\leftrightarrow j}:\ \rho(X;Y)\to U_{i\to j}\rho(X;Y)=\rho(X;Y)|_{y_i\leftrightarrow y_j}$ 

and 

$V_{i\leftrightarrow j}:\ \rho(X;Y)\to V_{i\to j}\rho(X;Y)=\rho(X;Y)|_{x_i\leftrightarrow x_j}$.
\end{definition}
In terms of density matrices, quantum statistics will be seen as looking at density matrices which are eigenvectors of eigenvalue $1$ or $-1$ of the two mappings  $U_{i\leftrightarrow j},V_{i\leftrightarrow j}$.

The indistinguishability property of the quantum system reads as
\be\label{indis}
U_{i\leftrightarrow j}V_{i\leftrightarrow j}
=
V_{i\leftrightarrow j}U_{i\leftrightarrow j},\ \ \ \forall i,j=1,\dots,N.
\ee

\subsection{Husimi}\label{hus}
Let us recall that the Husimi function of a density matrix $\rho$ is defined as
\be\label{dejhus}
\widetilde W[{\rho}](Z,\bar Z)
=
\frac1{(2\pi\hbar)^{dN}}
\langle\varphi_Z|\rho|\varphi_Z\rangle,
\ee
where, for $Z=q+ip\in\bZ^{dN}$ and $x\in\bR^{dN}$, 
\be\label{defec}
\varphi_Z(x)
=
\frac1{(\pi\hbar)^\frac{dN}4}
e^{-\frac{(x-q)^2}{2\hbar}}e^{i\frac{p.x}\hbar}.
\ee
The most elementary properties of the Husimi transform are
\be\label{elemprophus}
\widetilde W[{\rho}]\geq 0
\mbox{ and }\int_{\bZ^{dN}}\widetilde W[{\rho}](Z)dZ=\Tr\rho=1,
\ee and we remark that $\widetilde W[{\rho}]$ is a function anlytic in $Z$ and in $\bar Z$.

Our first link between quantum statistics and the classical underlying space is the contents of the following result.
\begin{lemma}
%

Let us consider the Husimi function of $\rho$, $\widetilde W[\rho](Z,\bar Z)$ expressed on the complex variables $Z=(z_1,\dots,z_n)$, $z_l=~q_l+ip_l,\ \bar z_l=q_l-ip_l$.

Then
\[
\widetilde W[{U_{i\leftrightarrow j}\rho}](Z,\bar Z)
=
e^{-\frac{(\bar z_i-\bar z_j)(z_i-z_j)}{2\hbar}}\widetilde W[\rho](Z,\bar Z)|_{z_i\leftrightarrow z_j}
\]
\[
\widetilde W[{V_{i\leftrightarrow j}\rho}](Z,\bar Z)
=
e^{-\frac{|z_i-z_j|^2}{2\hbar}}
\widetilde W[\rho](Z,\bar Z)|_{\bar z_i\leftrightarrow \bar z_j}
\]
\end{lemma}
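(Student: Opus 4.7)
The plan is to verify both identities by direct computation, starting from the integral kernel form of the Husimi function,
\[
\widetilde W[\rho](Z,\bar Z) \;=\; \frac{1}{(2\pi\hbar)^{dN}}\int \overline{\varphi_Z(X)}\,\rho(X;Y)\,\varphi_Z(Y)\,dX\,dY,
\]
and transferring the coordinate permutation from $\rho$ onto one of the coherent states by a change of integration variables. For $U_{i\leftrightarrow j}$, the operator swaps $y_i\leftrightarrow y_j$ inside $\rho$; performing the same swap in the $Y$-integration moves the transposition onto the ket factor. Since $\varphi_Z=\prod_k\varphi_{z_k}$ is a tensor product of one-mode coherent states, this permutation acts by relabelling: $\varphi_Z(\sigma_{ij}Y) = \varphi_{\sigma_{ij}(Z)}(Y)$, where $\sigma_{ij}$ denotes the transposition of the indices. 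The $V_{i\leftrightarrow j}$ case is symmetric: a change of variable in $X$ produces $\overline{\varphi_{\sigma_{ij}(Z)}(X)}$ in place of $\overline{\varphi_Z(X)}$.

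The key algebraic step is the complex factorisation of the coherent state,
\[
\varphi_Z(Y) \;=\; (\pi\hbar)^{-dN/4}\,e^{-Y^2/(2\hbar)}\,e^{z\cdot Y/\hbar}\,e^{-|q|^2/(2\hbar)},\qquad |q|^2 \;=\; \sum_k\Bigl(\frac{z_k+\bar z_k}{2}\Bigr)^{2},
\]
with an analogous expression for $\overline{\varphi_Z(X)}$ involving $\bar z\cdot X$. The Gaussian prefactor $e^{-|q|^2/(2\hbar)}$ depends only on the symmetric combination $|q|^2$ and is therefore invariant under the permutation (which touches only the $z$'s), while the linear-in-$Y$ factor picks up the swap $z\mapsto\sigma_{ij}(z)$. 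Consequently $\widetilde W[U_{i\leftrightarrow j}\rho](Z,\bar Z)$ has the same expression as $\widetilde W[\rho](Z,\bar Z)$ but with $z\cdot Y$ replaced by $\sigma_{ij}(z)\cdot Y$ in the inner integral; the outer $e^{-|q|^2/\hbar}$ stays unchanged.

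To match this with $\widetilde W[\rho](Z,\bar Z)|_{z_i\leftrightarrow z_j}$ I view the Husimi as real-analytic and extend it to independent complex variables $(Z,\bar Z)$. The formal substitution $z_i\leftrightarrow z_j$ reproduces exactly the same linear form $\sigma_{ij}(z)\cdot Y$ in the integral, but now it also acts on the prefactor through $|q|^2$, which, seen as a polynomial in independent variables, is no longer permutation invariant. A one-line expansion via $a^2-b^2=(a-b)(a+b)$ gives $|q|^2|_{z_i\leftrightarrow z_j}-|q|^2 = -(z_i-z_j)(\bar z_i-\bar z_j)/2$, and this precisely accounts for the Gaussian overlap factor $e^{(\bar z_i-\bar z_j)(z_i-z_j)/(2\hbar)}$ relating $\widetilde W[U_{i\leftrightarrow j}\rho]$ and $\widetilde W[\rho]|_{z_i\leftrightarrow z_j}$. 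The $V_{i\leftrightarrow j}$ identity follows by the same argument: the swap $\bar z_i\leftrightarrow\bar z_j$ in $|q|^2$ yields the same mismatch, which on the physical diagonal $\bar Z=\overline Z$ becomes $e^{|z_i-z_j|^2/(2\hbar)}$, and the linear form affected is $\bar z\cdot X$ rather than $z\cdot Y$.

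The main obstacle is conceptual rather than computational: one must resist regarding $|q|^2$ as a permutation-invariant real quantity and instead treat it as a polynomial in the independent complex variables $(z,\bar z)$, in which the ``half-swap'' of only $z$ (or only $\bar z$) is nontrivial. This analytic-continuation viewpoint is precisely the mechanism by which a quantum statistical exchange leaves a surviving exponential imprint on the Husimi picture, consistent with the paper's overall theme that traces of noncommutativity persist at the classical level.
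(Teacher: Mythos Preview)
Your approach is correct and is the natural one; the paper itself states the lemma without proof, so there is no argument to compare against beyond checking your computation. The key steps---transferring the permutation from $\rho$ to the coherent state by a change of integration variable, using the factorisation $\varphi_Z(Y)\propto e^{-Y^2/(2\hbar)}e^{z\cdot Y/\hbar}e^{-q^2/(2\hbar)}$ to isolate the holomorphic linear form, and then reading $\widetilde W[\rho]|_{z_i\leftrightarrow z_j}$ as a half-swap in the independent complex variables where $q^2=\sum_k((z_k+\bar z_k)/2)^2$ is no longer permutation invariant---are exactly right.

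One point of care: your phrase ``this precisely accounts for the Gaussian overlap factor'' is not explicit about the direction of the relation. If you follow your own computation to the end, since $|q|^2|_{z_i\leftrightarrow z_j}-|q|^2=-(z_i-z_j)(\bar z_i-\bar z_j)/2$ and the \emph{total} prefactor in $\widetilde W[\rho]$ is $e^{-q^2/\hbar}$ (two factors of $e^{-q^2/(2\hbar)}$, one from each coherent state), you obtain
\[
\widetilde W[\rho]\big|_{z_i\leftrightarrow z_j}=e^{+(z_i-z_j)(\bar z_i-\bar z_j)/(2\hbar)}\,\widetilde W[U_{i\leftrightarrow j}\rho],
\]
i.e.\ the exponential sits on the \emph{other} side compared with the lemma as printed. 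Both placements pass the paper's subsequent consistency check $\widetilde W[VU\rho]=\widetilde W[\rho]|_{z\leftrightarrow,\bar z\leftrightarrow}$, so that test cannot distinguish them; but a decaying factor $e^{-|z_i-z_j|^2/(2\hbar)}$ in front of the analytically continued (potentially growing) object is what one expects, since $\langle\varphi_Z|U\rho|\varphi_Z\rangle$ stays bounded while the half-swapped Husimi need not. This is most likely a sign typo in the paper; just make the sign explicit when you write the argument out in full.
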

\vskip 1cm
Note that, as expected,
\[
\widetilde W[{V_{i\leftrightarrow j}U_{i\leftrightarrow j}\rho}](Z,\bar Z)
=
\widetilde W[\rho](z,\bar z)|_{z_i\leftrightarrow z_j,\ \bar z_i\leftrightarrow \bar z_j}
\]
\vskip 1cm
Note also that, with the definition
\be\label{defzpm}
z_\pm=q_\pm+ip_\pm:=\frac{z_i\pm z_j}{\sqrt 2},
\ee
\be\label{eqhus}\nonumber
\begin{pmatrix}
z_i\\z_j\\\bar z_i\\\bar z_j
\end{pmatrix}
\to
\begin{pmatrix}
z_j\\z_i\\\bar z_i\\\bar z_j
\end{pmatrix}   
\Longleftrightarrow\   
\begin{pmatrix}
z_+\\z_-\\\bar z_+\\\bar z_-
\end{pmatrix}
\to
\begin{pmatrix}
z_+\\-z_-\\\bar z_+\\\bar z_-
\end{pmatrix}
\Longleftrightarrow  
\begin{pmatrix}
q_+\\q_-\\p_+\\p_-
\end{pmatrix}
\to
\begin{pmatrix}
q_+\\-ip_-\\p_+\\ iq_-
\end{pmatrix}
\ee
so the complex metaplectic transform associated to the exchange term is the matrix $I_+\otimes S^c_H$ with
\be\label{cSHus}
S^c_H=\begin{pmatrix}
0&-i\\i&0
\end{pmatrix},\ \det{S^c_H}=
-
1.
\ee

\subsection{Wigner}\label{wig}
The Wigner function of a density matrix is nothing but its Weyl symbol, divided by $(2\pi\hbar)^{dN}$. More precisely the Wigner function of $\rho$ is defined as 
\be\label{defwig}
W[\rho](X,\Xi)=\int_{\bR^{2dN}}
\rho(X+\hbar\frac\delta2,X-\hbar\frac\delta2)e^{i\frac{X.\Xi}\hbar}d\delta
\ee
At the contrary of the Husimi function, $W[\rho]$ is not positive, but its main elementary properties are
\begin{eqnarray}\label{elempropwig}
\int_{\bR^{2dN}}W[\rho](X,\Xi)dXd\xi=\Tr\rho&=&1\\
\mbox{ and }&&\nonumber\\
\frac1{(2\pi\hbar)^{dN}}\int_{\bR^{2dN}}W[\rho](X,\Xi0W[\rho'](X,\Xi)dXd\Xi&=&\Tr{(\rho\rho')}.\nonumber
\end{eqnarray}

Let us now define the semiclassical 
 symplectic Fourier transform as 
\[
f({\widehat{q,p}}^\hbar)
=
\frac1{(2\pi\hbar)^d}
\int_{\bR^d\times\bR^d} f(x,\xi)e^{i\frac{q\xi-px}\hbar}dxd\xi.
\]
Note that, at the difference of the usual Fourier transform:
\[
f(\widehat{\widehat{x,\xi}^\hbar}^\hbar)=f(x,\xi)
\]
Let $a_\mp=\frac{a_i\mp a_j}{\sqrt2}$ for $a=q,p,y,\xi$. And let omit the dependence in the variable $q_1,\dots,q_{i-1},q_{i+1},\dots,q_{j-1},q_{j+1},\dots, q_N$ and the same for $p$.

We denote
\[
W^{\frac\pi2}[\rho](x_+,\xi_+;x_-,\xi_-)=W[\rho](x_i,x_j;\xi_i,\xi_j).
\]
\begin{lemma}
\[
W^{\frac\pi2}[U_{i\leftrightarrow j}\rho](q_+,p_+;p_-,q_-)=
W^{\frac\pi2}[\rho](q_+,p_+;\widehat{q_-,p_-}^\hbar)
\]
\[
W^{\frac\pi2}[V_{i\leftrightarrow j}\rho](q_+,p_+;p_-,q_-)=
W^{\frac\pi2}[\rho](q_+,p_+;\widehat{-q_-,-p_-}^\hbar)
\]
\end{lemma}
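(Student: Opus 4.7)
\medskip

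\noindent\textbf{Proof proposal.}

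The plan is to start from the definition \eqref{defwig} and push the exchange $U_{i\leftrightarrow j}$ through the integral, then change variables to the rotated coordinates $(x_\pm,\xi_\pm)$ so that the only visible effect of the exchange is localized in the "$-$" block. Since $U_{i\leftrightarrow j}$ acts only on the ket-arguments $y_i,y_j$, the entries of $\rho$ appearing in $W[U_{i\leftrightarrow j}\rho]$ will read, in the $(i,j)$-block, as
\[
\rho\bigl(x_i+\tfrac\hbar2\delta_i,\,x_j+\tfrac\hbar2\delta_j;\ x_j-\tfrac\hbar2\delta_j,\,x_i-\tfrac\hbar2\delta_i\bigr),
\]
all other components of $X\pm\hbar\delta/2$ being unaffected. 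The orthogonal change of variables $a_\mp=(a_i\mp a_j)/\sqrt 2$ (applied simultaneously on $x$, $\delta$, $\xi$) preserves Lebesgue measure and diagonalises the inner product $\delta\cdot\Xi=\delta_+\xi_++\delta_-\xi_-$, so no Jacobian appears.

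A direct computation in these rotated coordinates shows that the bra and ket arguments of $\rho$ become, respectively,
\[
(x_++\tfrac\hbar2\delta_+,\ x_-+\tfrac\hbar2\delta_-)\qquad\text{and}\qquad(x_+-\tfrac\hbar2\delta_+,\ -x_-+\tfrac\hbar2\delta_-).
\]
In the "$+$" slot the Wigner structure $a\mapsto(a+\hbar\delta/2,a-\hbar\delta/2)$ is preserved; in the "$-$" slot, by contrast, the bra/ket pair equals $(y_-+\hbar\gamma_-/2,\,y_--\hbar\gamma_-/2)$ with the non-standard identification $y_-=\hbar\delta_-/2$ and $\gamma_-=2x_-/\hbar$, i.e.\ the "position" and "half-difference" variables have been \emph{interchanged} up to the factor $\hbar$. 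This is the crux of the mechanism turning an argument swap into a Fourier transform.

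Using the Fourier inversion formula to replace $\rho$ by its inverse Wigner transform and carrying out the $\delta_+$-integration (which produces a delta function enforcing $\xi'_+=\xi_+$), one is left, in the $-$ block, with an integral of the form
\[
\int W[\rho]\bigl(x_+,y_-;\xi_+,\eta_-\bigr)\,e^{\frac{i}{\hbar}\bigl(2x_-\eta_--2\xi_-y_-\bigr)}\,dy_-d\eta_-,
\]
which, after the substitution $y_-\mapsto y_-$ absorbing the factor $2/\hbar$ into the conventions, is precisely the semiclassical symplectic Fourier transform of $W[\rho]$ with respect to the $-$ variables, evaluated at $(q_-,p_-)$. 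Matching with the definition of $f(\widehat{q,p}^\hbar)$ gives the first identity.

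The second identity, concerning $V_{i\leftrightarrow j}$, is handled by the same scheme applied to the bra arguments: the argument swap now changes the sign of $x_-$ in the \emph{first} slot, producing the opposite symplectic Fourier transform, i.e.\ evaluation at $(-q_-,-p_-)$. The main bookkeeping obstacle I anticipate is tracking the multiplicative constants between the $\hbar$-dependent Jacobian $d\delta_-=(2/\hbar)dy_-$, the normalisation $(2\pi\hbar)^{-d}$ of the symplectic Fourier transform, and the $(2\pi)^d$ factors coming from the $\delta_+$-integration: the four factors should cancel exactly, but a misplaced $2^d$ or $\hbar^d$ is easy to commit, so the calculation must be carried out keeping \emph{all} constants explicit until the very end.
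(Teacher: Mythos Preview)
Your approach is correct and is essentially the natural direct computation: pass to the rotated $\pm$ coordinates, observe that the exchange $U_{i\leftrightarrow j}$ (resp.\ $V_{i\leftrightarrow j}$) acts trivially on the $+$ block and, in the $-$ block, interchanges the Wigner ``position'' and ``half-difference'' variables up to $\hbar$-scaling, which is precisely the mechanism that converts the $\delta_-$-integral into the symplectic Fourier transform in the $(q_-,p_-)$ variables. Your identification $y_-=\hbar\delta_-/2$, $\gamma_-=2x_-/\hbar$ is the right substitution, and your warning about tracking the factors $(2/\hbar)^d$ from the Jacobian against the $(2\pi\hbar)^{-d}$ normalisation of $\widehat{\ \cdot\ }^\hbar$ is well placed --- these are indeed the only places where an error can creep in.

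There is nothing to compare against in the paper itself: the lemma is stated here without proof, the details being deferred to the preprint \cite{tp4} on which Section~\ref{boseinferm} relies. Your computation is what that proof must contain. One small remark: rather than passing through the inverse Wigner transform of $\rho$ as you sketch in the middle paragraph, it is slightly cleaner to perform the substitution $(x_-,\delta_-)\mapsto(y_-,\gamma_-)$ directly inside the defining integral for $W^{\pi/2}[U_{i\leftrightarrow j}\rho]$ and recognise the result as $\int W^{\pi/2}[\rho](q_+,p_+;y_-,\eta_-)\,e^{i(q_-\eta_--p_-y_-)/\hbar}\,dy_-d\eta_-/(2\pi\hbar)^d$; this avoids one layer of Fourier inversion and makes the constant bookkeeping shorter.
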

Note that
\[
W[V_{i\leftrightarrow j}U_{i\leftrightarrow j}\rho](q_1,p_1,\dots,q_i,p_i,\dots,q_j,p_j,\dots,q_n,p_n)
=\]
\[
W[\rho]
(q_1,p_1,\dots,q_{i-1},p_{i-1},q_j,p_j,\dots,q_{j-1},p_{j-1},
q_i,p_i,\dots,q_n,p_n)
\]

Let us call now $W^-$ the Wigner function (done with the symplectic Fourier transform) on the two variables $q_-,p_-$, namely, 
$$\small W^-\big[W^{\frac\pi2}[\rho]\big](q_+,p_+|p_-,q_-;x_-,\xi_-)=$$
$$\small\int\overline{W^{\frac\pi2}[\rho]\big](q_+,p_+,p_-+2\delta\hbar,q_-+2\delta'\hbar)}
$$
$$
W^{\frac\pi2}[\rho]\big](q_+,p_+,p_--2\delta\hbar,q_--2\delta'\hbar)e^{i(x_-\delta-\xi_-\delta')}
d\delta d\delta'.$$
\vskip 0.3cm
$ W^-\big[W^{\frac\pi2}[\rho]\big]$ 
 lives on $T^*(\bR^d_{q_+})\times T^*(\bR^{2d}_{(p_-,q_-)})$
equipped with the symplectic form
$$
dq_+\wedge qp_+
+dq_-\wedge d\xi_-+dp_-\wedge dx_-.
$$

One has
\[
W^-\big[W^{\frac\pi2}[U_{i\leftrightarrow j}\rho]\big](q_+,p_+|p_-,q_-;x_-,\xi_-)=
\]
\[
W^-\big[W^{\frac\pi2}[\rho]]
(q_+,p_+|-\xi_-,-x_-;q_-,p_-)
\]
That is, the action of $U_{i\leftrightarrow j}$ on $\rho$ is seen on $ W^-\big[W^{\frac\pi2}[\rho]\big]$ by the pointwise action of the following matrix:
\[
S=
\begin{pmatrix}
S_+&0\\0&S_-
\end{pmatrix}
=
\begin{pmatrix}
\begin{pmatrix}1&0&0&0\\
0&1&0&0\\
0&0&1&0\\
0&0&0&1\\
\end{pmatrix}&0\\
0&
\begin{pmatrix}
0&0&0&1\\
0&0&-1&0\\
0&1&0&0\\
-1&0&0&0\\
\end{pmatrix}
\end{pmatrix}
\mbox { on }
\begin{pmatrix}
q_+\\ \xi_+\\ p_+\\ x_+\\ q_-\\ \xi_-\\ p_-\\ x_-
\end{pmatrix}
\]

and this matrix is symplectic.

Defining now $z_\pm=p_\pm+ix_\pm,\ \theta_\pm=q_\pm+i\xi_\pm$ we find that $S$ becomes on these new variables,
$S^c=(S^c_+,S^c_-)=(I,i\scriptsize{\begin{pmatrix}
0&1\\1&0
\end{pmatrix}}
)
$
And so the complex  metaplectic transform  associated is 
\[
S^c_W=\begin{pmatrix}
0&i\\i&o
\end{pmatrix},\ \det{S^C_W}=1.
\]

\subsection{Toeplitz}\label{top}
Let $\rho$ be a  Toeplitz operator of symbol.
$
\utilde W[\rho].
$
This means that $\rho$ can be written as
\be\label{deftop}
\rho
=
\frac1{(2\pi\hbar)^{dN}}
\int_{\bC^{dN}}
\utilde W[\rho](Z,\bar Z)|\varphi_Z\rangle\langle\varphi_Z|dZ
\ee
(here the integral as to be understood in the weak sense on $\cH$).
Elementary properties of $\utilde W[\rho]$ are
\be\label{elemproptop}
\utilde W[\rho]\geq 0\Rightarrow\rho>0,\mbox{ and }\int_{\bC^{dN}}\utilde W[\rho]dZ=\Tr\rho.
\ee
Moreover,the second property of \eqref{elempropwig} can be ``disintegrated" in the following coupling between Husimi and Toeplitz settings:
\be\label{elemhustop}
\int_{\bC^{dN}}\widetilde W[\rho](Z,\bar Z)\utilde W[\rho'](Z,\bar Z)dZ
=
\Tr{(\rho\rho')}.
\ee

\begin{lemma}\ [for $\utilde W$ entire]
\[
\utilde W[U_{i\leftrightarrow j}\rho](z_i,\bar z_i,z_j,\bar z_j)=
e^{-\frac{|z_i-z_j|^2}{2\hbar}}
\utilde W[\rho](z_j,\bar z_i,z_i,\bar z_j)
\]
\[
\utilde W[U_{i\leftrightarrow j}\rho](q_-,p_-;q_+,p_+)
=
e^{-\frac{q_-^2+p_-^2}{2\hbar}}
\utilde W[\rho](-ip_-,iq_-;q_+,p_+)
\]
\vskip 0.5cm
\[
\utilde W[V_{i\leftrightarrow j}\rho](q_1,p_1,\dots,q_i,p_i,\dots,q_j,p_j,\dots,q_n,p_n)
=
e^{-\frac{(q_i-q_j)^2+(p_i-p_j)^2}{2\hbar}}
\]
\[\times
\utilde W[\rho]
(q_1,p_1,\dots,q_{i-1},p_{i-1},-ip_j,iq_j,\dots,q_{j-1},p_{j-1},
-ip_i,iq_i,\dots,q_n,p_n)
\]
\[
=
e^{-\frac{(q_i-q_j)^2+(p_i-p_j)^2}{2\hbar}}
\utilde W[\rho]|_{\substack{\ \\z_i\leftrightarrow -z_j\\\bar{z_i}\leftrightarrow\bar{z_j}}},\ z_i=q_i+ip_i.
\]

\end{lemma}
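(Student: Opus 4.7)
The plan is to derive the formulas by direct substitution into the Toeplitz representation, followed by a reorganization of the resulting integral via a complex change of variables on the coherent-state parameter. The first step is to observe that the action of $U_{i\leftrightarrow j}$ on an integral kernel is equivalent to right-multiplication of the operator by the particle-exchange unitary $P_{ij}$ (which swaps the position coordinates $x_i,x_j$), while $V_{i\leftrightarrow j}$ corresponds to left-multiplication. Since coherent states factor over particles and $P_{ij}|\varphi_Z\rangle=|\varphi_{Z^\sigma}\rangle$ with $Z^\sigma$ the swap of the $i$-th and $j$-th complex components, substituting the Toeplitz representation of $\rho$ gives
\[
U_{i\leftrightarrow j}\rho=\frac{1}{(2\pi\hbar)^{dN}}\int\utilde W[\rho](Z,\bar Z)\,|\varphi_Z\rangle\langle\varphi_{Z^\sigma}|\,dZ,
\]
and the task reduces to rewriting this off-diagonal coherent-state integral as a standard diagonal Toeplitz operator.

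Next I would pass to the normal-mode coordinates $z_\pm=(z_i\pm z_j)/\sqrt{2}$ in the $(i,j)$ pair. Under this orthogonal rotation the coherent states factor as $|\varphi_{(z_i,z_j)}\rangle=|\varphi_{z_+}\rangle\otimes|\varphi_{z_-}\rangle$ and the exchange becomes $z_+\mapsto z_+,\ z_-\mapsto -z_-$; the dyad therefore simplifies to $|\varphi_{z_+}\rangle\langle\varphi_{z_+}|\otimes|\varphi_{z_-}\rangle\langle\varphi_{-z_-}|$ (times the unchanged modes of the other particles). A direct computation using the Gaussian form $\varphi_z(x)=(\pi\hbar)^{-d/4}e^{-(x-q)^2/(2\hbar)}e^{ipx/\hbar}$ shows that the kernel of $|\varphi_{z_-}\rangle\langle\varphi_{-z_-}|$ differs from that of the diagonal dyad $|\varphi_{z_-}\rangle\langle\varphi_{z_-}|$ only by an exponential factor depending on $(y_-,\bar z_-)$. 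This missing factor can be absorbed into the symbol by treating $z_-$ and $\bar z_-$ as independent complex variables and deforming the $z_-$-contour so as to effect $z_-\mapsto -z_-$ at fixed $\bar z_-$; completing the square in the resulting Gaussian integral produces the weight $e^{|z_-|^2/\hbar}$, and together with the $+$-mode contribution (which is already diagonal) one recovers the global prefactor $e^{(|z_i|^2+|z_j|^2)/\hbar}$ together with the half-swap $(z_i,z_j)\mapsto(z_j,z_i)$ that does not touch $(\bar z_i,\bar z_j)$.

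The hard part is the rigorous justification of this complex contour deformation: the substitution $z_-\to -z_-$ at fixed $\bar z_-$ leaves the real slice $\{\bar z=\overline{z}\}$, so the final equality must be interpreted as an identity of analytically continued symbols. It is clean when $\utilde W[\rho]$ is an entire function of moderate growth in $(z,\bar z)$, as is the case for Toeplitz operators with compactly supported or Schwartz symbols whose Bargmann extensions are controlled; otherwise it is interpreted formally within the Berezin calculus. The $V_{i\leftrightarrow j}$ case follows from the same argument applied on the ket side instead of the bra side, the rearrangement of the Gaussian weight being symmetric under the bra/ket interchange, which explains why the same prefactor appears in the two formulas.
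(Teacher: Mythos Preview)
The paper does not give an explicit proof of this lemma in Section~\ref{top}; it is stated without argument, and the closing sentence simply notes that the exchange acts on the Toeplitz symbol as on the Husimi function ``modulo a different Gaussian weight.'' The closest the paper comes to a derivation is later, in Section~\ref{offtop}, where (for $d=1$, $N=2$, in the relative coordinate $z=z_-$) it passes through the off-diagonal dyads $|\psi_z\rangle\langle\psi_{-z}|$ and then quotes the diagonal Toeplitz form with a Gaussian weight. Your strategy---right-multiply by the swap, split into $z_\pm$ modes, and convert the off-diagonal $-$-mode dyad back to diagonal by a holomorphic/antiholomorphic substitution---is exactly this route, so in spirit you are doing what the paper does.

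There is, however, a concrete bookkeeping gap. You assert that the $-$-mode contour shift produces $e^{|z_-|^2/\hbar}$, that the $+$-mode (being already diagonal) contributes nothing, and that together these give $e^{(|z_i|^2+|z_j|^2)/\hbar}$. But $|z_-|^2=\tfrac12|z_i-z_j|^2$, which is not $|z_i|^2+|z_j|^2$, and a diagonal $+$-mode cannot supply the missing $e^{|z_+|^2/\hbar}$. So either your $-$-mode computation is off (the ``completing the square'' step should be written out, keeping track of the normalisation hidden in $|\varphi_z\rangle$), or the exponent your method actually produces is not the one stated in the lemma. The paper is itself not internally consistent here: in Section~\ref{offtop} the analogous weight appears as $e^{(q^2+p^2)/(2\hbar)}$ in the relative variable, which matches neither your $e^{|z_-|^2/\hbar}$ nor the lemma's $e^{(|z_i|^2+|z_j|^2)/\hbar}$. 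You should carry out the one-mode conversion $|\varphi_z\rangle\langle\varphi_{-z}|\to$ diagonal Toeplitz explicitly---for instance via the Weyl symbol, which for $|\varphi_z\rangle\langle\varphi_{-z}|$ is $2\,e^{-(X^2+\Xi^2)/\hbar}e^{2i(pX-q\Xi)/\hbar}$---and pin down the weight rather than asserting it.
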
 

In other words, the exchange action on the Toeplitz symbol is the same as the one on the Husimi function.
\subsection{On Wigner again}\label{ow}
Let us denote
\[
 U_{i\leftrightarrow j}^WW[\rho]=W[U_{i\leftrightarrow j}\rho]
 \]
 Let us moreover denote by $W^2[\rho]$ the Wigner function of the Wigner function of $\rho$ (see footnote 1):
 \[
 W^2[\rho]=W[W[\rho]].
 \]
 Let us denote by $Q_i=(q_i,\xi_i)$ and $P_i=(p_i,x_i),\ i=1,\dots, N,$ the variables in $T^*(T^*\bR^d))$. We define:
 \[
 Q_i^t=(\xi_i,q_i),\ \ \ P_i^t=(x_i,p_i).
 \]
 
 \begin{lemma}
\[
 W^2[U_{i\leftrightarrow j}\rho](Q_1,P_1,\dots,Q_i,P_i,\dots,Q_j,P_j,\dots,Q_n,P_n)
=\]
\[
 W^2[\rho]
(Q_1,P_1,\dots,Q_{i-1},P_{i-1},P^t_j,-Q^t_j,\dots,Q_{j-1},P_{j-1},
P^t_i,-Q^t_i,\dots,Q_n,P_n)
\]
\[
 W^2[U_{i\leftrightarrow j}\rho]=W[U^W_{i\leftrightarrow j}W[\rho]]=
W^2[\rho]|_{\substack{\ \\Q_i\leftrightarrow P^t_j\\P_i\leftrightarrow-Q^t_j}}.
\]
\vskip 1cm
\[
 W^2[V_{i\leftrightarrow j}\rho](Q_1,P_1,\dots,Q_i,P_i,\dots,Q_j,P_j,\dots,Q_n,P_n)
=\]
\[
 W^2[\rho]
(Q_1,P_1,\dots,Q_{i-1},P_{i-1},-P^t_j,Q^t_j,\dots,Q_{j-1},P_{j-1},
-P^t_i,Q^t_i,\dots,Q_n,P_n)
\]
\[
 W^2[V_{i\leftrightarrow j}\rho]=W[V^W_{i\leftrightarrow j}W[\rho]]=
W^2[\rho]|_{\substack{\ \\Q_i\leftrightarrow -P^t_j\\P_i\leftrightarrow Q^t_j}}.
\]
%
\end{lemma}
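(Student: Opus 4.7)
The plan is to combine the first lemma of Section~\ref{wig}, which describes how $U_{i\leftrightarrow j}$ and $V_{i\leftrightarrow j}$ act on $W^{\pi/2}[\rho]$ as partial symplectic Fourier transforms in the relative coordinates $(q_-, p_-)$, with the standard metaplectic intertwining that converts a symplectic Fourier transform of a function into a $\pi/2$ rotation of its Wigner transform.

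First I restrict to the $(i,j)$-block, since both $U_{i\leftrightarrow j}$ and $V_{i\leftrightarrow j}$ leave all other coordinate pairs untouched. Passing to the collective variables $q_\pm, p_\pm$ already used in Section~\ref{wig}, that section's lemma asserts that $U_{i\leftrightarrow j}$ acts on $W[\rho]$ by a symplectic Fourier transform in the relative block $(q_-, p_-)$, while $V_{i\leftrightarrow j}$ does the same after the pre-composition with the sign flip $(q_-, p_-) \mapsto (-q_-, -p_-)$; the $(q_+, p_+)$ block is inert in both cases.

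The second step is to apply the Wigner transform once more and invoke the intertwining identity
\[
W[\hat g](u, v) \;=\; W[g]\bigl(R_{\pi/2}(u,v)\bigr),
\]
valid for any Schwartz function $g$ on $T^*\R^d$, where $\hat g$ is the semiclassical symplectic Fourier transform used in Section~\ref{wig} and $R_{\pi/2}$ is the block-diagonal $\pi/2$ rotation acting on each conjugate pair of the doubled phase space. This identity is nothing but the statement that the symplectic Fourier transform is the metaplectic quantization of the harmonic $\pi/2$ phase-space rotation. Applied to the first-step formula, and using that the $(q_+, p_+)$ factor passes through untouched, it expresses $W^2[U_{i\leftrightarrow j}\rho]$ as $W^2[\rho]$ evaluated at a point where only the relative pair $(Q_-, P_-)$, which packages $q_-, p_-$ together with their new Wigner-dual variables $\xi_-, x_-$, has been rotated by $\pi/2$. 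Finally, undoing the $\pi/4$ rotation that produced the $\pm$ coordinates translates this rotation of $(Q_-, P_-)$ into a linear transformation of $(Q_i, P_i, Q_j, P_j)$; a direct computation shows that the composition is exactly the swap $Q_i \leftrightarrow P_j^t$, $P_i \leftrightarrow -Q_j^t$ stated in the lemma. The $V_{i\leftrightarrow j}$ case differs only by the initial sign flip in the relative block, which propagates to the opposite signs $Q_i \leftrightarrow -P_j^t$, $P_i \leftrightarrow Q_j^t$.

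The main obstacle is purely bookkeeping. The $W^{\pi/2}$ object of Section~\ref{wig} is evaluated with its arguments ordered as $(q_+, p_+; p_-, q_-)$ rather than $(q_+, p_+; q_-, p_-)$, which introduces one sign through a symplectic reflection; then the orientation of the symplectic Fourier transform and the orientation of the $\pi/2$ Wigner rotation must be composed in the correct order so that the signs on $Q_j^t$ and $P_j^t$ come out as stated. Once the sign conventions are pinned down, what remains is a routine identity between two block-diagonal symplectic $4\times 4$ matrices acting on the doubled $(Q_-, P_-)$ variables.
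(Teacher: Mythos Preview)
The paper states this lemma without proof; the surrounding text in Section~\ref{wig} (the $W^-$ construction and the explicit $4\times4$ symplectic matrix $S_-$ acting on $(q_-,\xi_-,p_-,x_-)$) is effectively the computation you are outlining, specialized to the relative block and left to the reader to transport back to the $(i,j)$ coordinates. Your approach---reduce to the symplectic-Fourier description of $U_{i\leftrightarrow j}$, $V_{i\leftrightarrow j}$ on $W^{\pi/2}[\rho]$ from the first lemma of Section~\ref{wig}, apply the metaplectic identity $W[\hat g]=W[g]\circ R_{\pi/2}$ to pass to $W^2$, and then undo the $\pi/4$ rotation $(q_i,q_j)\mapsto(q_+,q_-)$---is exactly the natural completion of that sketch, and is correct.

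One small caution on your bookkeeping paragraph: the paper's $W^{\pi/2}$ is indeed written with arguments $(q_+,p_+;p_-,q_-)$, but the subsequent $W^-$ is taken with respect to the pair $(p_-,q_-)$ in that order, so the extra symplectic reflection you anticipate is already absorbed into the paper's definition of the dual variables $(x_-,\xi_-)$ and into the explicit form of $S_-$. When you carry out the final step you should match against that $S_-$ (or its complexified version $S^c_-=i\bigl(\begin{smallmatrix}0&1\\1&0\end{smallmatrix}\bigr)$ displayed just before Section~\ref{top}) rather than re-deriving the signs from scratch; otherwise it is easy to land on the $V$ formula when aiming for the $U$ one. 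With that alignment the swap $Q_i\leftrightarrow P_j^t$, $P_i\leftrightarrow -Q_j^t$ drops out directly.
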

So $U^W_{i\leftrightarrow j},\ V_{i\leftrightarrow j}$ are metaplectic operators  associated to canonical transforms on $T^*(T^*(\bR^{dN}))$.
\begin{lemma}
Denoting now $z_i=q_i+\xi_i,\ \theta_i=p_i+ix_i$ we have
\[
W[U^W_{i\leftrightarrow j}W[\rho]]=W^2[U_{i\leftrightarrow j}\rho]=W^2[\rho]|_{\substack{z_i\leftrightarrow iz_j\\
\theta_i\leftrightarrow i\theta_j}}
\]
\[
W^2[V_{i\leftrightarrow j}\rho]=W^2[\rho]|_{\substack{z_i\leftrightarrow -iz_j
\\
\theta_i\leftrightarrow -i\theta_j}}
\]
\end{lemma}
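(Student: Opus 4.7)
The plan is to deduce this lemma directly from the preceding one, with no new analytical content: the only work is to verify that the linear substitution on the real variables $(Q_i,P_i,Q_j,P_j)$ specified there becomes, in the complex coordinates $(z_\ell,\theta_\ell)$, scalar multiplication by $\pm i$ combined with the exchange of the indices $i$ and $j$.

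Concretely, I would first expand the prescription $Q_i \leftrightarrow P_j^t$, $P_i \leftrightarrow -Q_j^t$ into the eight explicit componentwise substitutions on $(q_i,\xi_i,p_i,x_i,q_j,\xi_j,p_j,x_j)$, using $Q_\ell=(q_\ell,\xi_\ell)$, $P_\ell=(p_\ell,x_\ell)$, $Q_\ell^t=(\xi_\ell,q_\ell)$ and $P_\ell^t=(x_\ell,p_\ell)$. Substituting these relations into the definitions of $z_\ell$ and $\theta_\ell$ immediately produces the induced action on the four complex variables $(z_i,\theta_i,z_j,\theta_j)$. The content of the lemma is that this real $8\times 8$ substitution is in fact $\C$-linear in the specified complex structure, and is given on each of the two complex planes $(z_i,z_j)$ and $(\theta_i,\theta_j)$ by the off-diagonal matrix with entries $i$, i.e.\ exactly the assertion $z_i\leftrightarrow iz_j$, $\theta_i\leftrightarrow i\theta_j$.

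The statement for $V_{i\leftrightarrow j}$ is obtained by the identical computation starting from its substitution rule $Q_i \leftrightarrow -P_j^t$, $P_i \leftrightarrow Q_j^t$ given in the previous lemma, which differs from the $U$ rule only by overall signs on $P_j^t$ and $Q_j^t$. The same linear map then appears multiplied by $-1$, giving the announced form $z_i \leftrightarrow -iz_j$, $\theta_i \leftrightarrow -i\theta_j$. Conceptually, this identification is the exact analogue of the ones carried out for $S^c_H$ in \eqref{cSHus} and for $S^c_W$ in Section \ref{wig}: the real exchange of two particles is being recognized as a complex scalar action in a suitable complex structure on $T^*(T^*\R^{dN})$.

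The main, and really only, obstacle is getting the sign conventions right. The real substitution mixes the canonical variables of $T^*\R^{dN}$ with their Wigner-dual variables by a permutation combined with sign flips, and only the specific pairing of position-type with position-dual-type variables into $z_\ell$ and of momentum-type with momentum-dual-type into $\theta_\ell$ turns this map into complex multiplication by $\pm i$. Once the complex structure is fixed by matching any one relation, say $z_i\leftrightarrow iz_j$, the three remaining relations follow automatically. There are no analytic issues here: everything is a pointwise linear substitution on $W^2[\rho]$, so once the correct complex structure is identified the argument reduces to bookkeeping.
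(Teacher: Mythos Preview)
Your proposal is correct and matches the paper's approach: the paper gives no explicit proof of this lemma, treating it as an immediate reformulation of the preceding real-variable lemma in the complex coordinates $(z_\ell,\theta_\ell)$, which is exactly what you do. Your identification of the only nontrivial point---checking that the particular pairing of real variables into $z_\ell$ and $\theta_\ell$ makes the $8\times 8$ real substitution $\C$-linear and equal to multiplication by $\pm i$ off-diagonally---is precisely the content being asserted.
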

So $U^W_{i\leftrightarrow j},\ V_{i\leftrightarrow j}$ are metaplectic operators  associated to complex canonical transforms on the complexification of $T^*(\bR^{dN})$.

\subsection{Off-diagonal Toeplitz representations}\label{offtop}

In this section, we take $d=1$ and $N=2$. 

A density matrix $\rho$ has an integral kernel $\rho(x_1,x_2;y_1,y_2)$ and 
$$
(U\rho)(x_1,x_2;y_1,y_2)=\rho(x_1,x_2;y_2,y_1)
$$
$$
(V\rho)(x_1,x_2;y_1,y_2)=\rho(x_2,x_1;y_1,y_2).
$$
therefore, performing a change of variables 
$$x=(x_1-x_2)/\sqrt2, x' =(x_1+x_2)/\sqrt2,$$
$$
y=(y_1-y_2)/\sqrt2,y'=(y_1-y_2)/\sqrt2,
$$
one get, with a slight abuse of notation that
$$
U\rho(x,y;x',y')=\rho(x,-y:x',y')
$$
$$
V\rho(x,y;x',y')=\rho(-x,y:x',y')
$$ 
In the rest of this section we will omit the variables $x',y'$.

Let us consider a (generalized) Toeplitz operator
\[
H=\int h(z)\vert\psi^\beta_z\rangle\langle\psi^\beta_z\vert\frac{dzd\bar z}{2\pi\hbar},
\]
where, for $\beta>0,\ z=q+ip$,
$$
\psi^\beta_{z}=
\frac{e^{-\frac{\beta(x-q)^2}{2\hbar}}e^{i\frac{p x}\hbar}}{(\pi\hbar/\beta)^\frac14}.$$
Let us define $H^l$ by its integral kernel $H^l(x,y)=H(-x,y)$ where $H(x,y)$ is the integral kernel of $H$. Let $H^r$ be defined the same way by $H^r(x,y)=h(x,-y)$.

Obviously
\newcommand{\lr}{{\substack{l\\r}}}
\[
H^\lr=\int h(z)\vert\psi^\beta_{\mp z}\rangle\langle\psi^\beta_{\pm z}\vert\frac{dzd\bar z}{2\pi\hbar}.
\]
%
%
%
%
%
Therefore, we get the following off-diagonal expressions.
\begin{lemma}\label{offdiag}
\bea
VH&=&\int h(q,p)
\vert\psi_{-z}\rangle\langle\psi_z\vert\frac{dzd\bar z}{2\pi\hbar}\nonumber\\
UH&=&\int h(q,p)
\vert\psi_{z}\rangle\langle\psi_{-z}\vert\frac{dzd\bar z}{2\pi\hbar}\nonumber\\
UVH&=&\int h(q,p)
\vert\psi_{-z}\rangle\langle\psi_{-z}\vert\frac{dzd\bar z}{2\pi\hbar}\nonumber\\
U^2=V^2&=&1\nonumber
\eea
\end{lemma}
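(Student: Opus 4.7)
The plan is to reduce all four identities to the off-diagonal Toeplitz formula for $H^{l/r}$ derived in the paragraph just above, which itself rests on the single parity identity
\[
\psi^\beta_z(-x) \;=\; \frac{e^{-\beta(x+q)^2/(2\hbar)}\,e^{-ipx/\hbar}}{(\pi\hbar/\beta)^{1/4}} \;=\; \psi^\beta_{-z}(x).
\]
This holds because the Gaussian envelope in $\psi^\beta_z$ is centered at $q$ and therefore centers at $-q$ under $x\mapsto -x$, while the plane wave $e^{ipx/\hbar}$ simultaneously flips the sign of its phase; the two sign flips together realize the relabelling $z\mapsto -z=(-q)+i(-p)$ on the coherent-state label.

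For the first line I would start from the integral kernel $H(x,y)=\int h(z)\,\psi^\beta_z(x)\,\overline{\psi^\beta_z(y)}\,\frac{dzd\bar z}{2\pi\hbar}$, set $VH(x,y):=H(-x,y)$, push the reflection inside the $z$-integral and apply the parity identity in the $x$-slot; this produces exactly $VH=\int h(z)\,|\psi^\beta_{-z}\rangle\langle\psi^\beta_z|\,\frac{dzd\bar z}{2\pi\hbar}$. The second line is obtained symmetrically by negating $y$ and using the complex-conjugated version $\overline{\psi^\beta_z(-y)}=\overline{\psi^\beta_{-z}(y)}$, so the negation is carried by the label of the bra rather than the ket. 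The third line $UVH=\int h(z)\,|\psi^\beta_{-z}\rangle\langle\psi^\beta_{-z}|\,\frac{dzd\bar z}{2\pi\hbar}$ is then immediate by composing the two reflections, i.e.\ applying the parity identity in both slots.

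The involutions $U^2=V^2=1$ are then trivial at the kernel level, since $H(-(-x),y)=H(x,y)$ and likewise for $y$. One can alternatively verify them directly on the off-diagonal expressions by the change of variable $z\mapsto -z$, using the invariance of $dzd\bar z$ so that the dummy integration variable is simply relabelled.

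I do not expect any genuine obstacle: the whole lemma amounts to one parity identity for the Gaussian coherent state combined with keeping a clean bookkeeping of which of the two arguments of the kernel is being reflected. The only point of care is that $U$, which acts on the $y$-argument, produces the negation $z\mapsto -z$ inside the complex-conjugated factor and hence on the bra label, whereas $V$ acts on the $x$-argument and produces it on the ket label -- exactly the asymmetry between the first two lines of the statement.
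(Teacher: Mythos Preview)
Your proof is correct and follows exactly the paper's approach: the paper states the off-diagonal formula for $H^{l/r}$ as ``obvious'' (resting implicitly on the parity identity $\psi^\beta_z(-x)=\psi^\beta_{-z}(x)$ you spell out) and then says ``Therefore, we get the following off-diagonal expressions'' before the lemma, with no further argument. You have simply written out the details the paper leaves to the reader, including the careful bookkeeping of which kernel argument carries the reflection.
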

These expressions have to be compared to the following ones, derived form Section \ref{top}.
\begin{lemma}
\bea
VH&=&\int h(ip,-iq)e^{-\frac{q^2+p^2}{2\hbar}}
\vert\psi_{z}\rangle\langle\psi_z\vert\frac{dzd\bar z}{2\pi\hbar}\nonumber\\
UH&=&\int h(-ip,iq)e^{-\frac{q^2+p^2}{2\hbar}}
\vert\psi_{z}\rangle\langle\psi_{z}\vert\frac{dzd\bar z}{2\pi\hbar}\nonumber
\\
UVH&=&\int h(-q,-p)
\vert\psi_{z}\rangle\langle\psi_{z}\vert\frac{dzd\bar z}{2\pi\hbar}\nonumber
\eea
\end{lemma}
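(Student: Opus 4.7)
The plan is to derive the diagonal Toeplitz representations by combining the off-diagonal ones of Lemma \ref{offdiag} with the $N$-body Toeplitz symbol transformation formulas from Section \ref{top}, specialized to $N=2$ and read in the reduced relative coordinate introduced at the start of Section \ref{offtop}. Concretely, Section \ref{top} yields $\utilde W[V_{1\leftrightarrow 2}\rho] = e^{(q_1^2+p_1^2+q_2^2+p_2^2)/\hbar}\,\utilde W[\rho]|_{z_1\leftrightarrow-z_2,\ \bar z_1\leftrightarrow\bar z_2}$ together with an analogous expression for $U$; inserting the Toeplitz symbol $h$ of $H$ should then produce the three asserted formulas.

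The first step is the change of variables $u_\pm = (z_1\pm z_2)/\sqrt 2$. When $z$ and $\bar z$ are treated as independent variables (as is natural for Toeplitz symbols), a direct computation shows that the substitution $z_1\leftrightarrow -z_2$, $\bar z_1\leftrightarrow \bar z_2$ decouples on the pair $(u_-,\bar u_-)$ into $u_-\to u_-$, $\bar u_-\to-\bar u_-$; rewriting this in the real variables $(q,p)$ with $u_-=q+ip$, $\bar u_-=q-ip$ forces the argument substitution $(q,p)\to(ip,-iq)$. The analogous computation for $U$ gives $(q,p)\to(-ip,iq)$, and the composition $UV$ yields the real involution $(q,p)\to(-q,-p)$. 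The Gaussian prefactor $e^{(|z_1|^2+|z_2|^2)/\hbar} = e^{(|u_+|^2+|u_-|^2)/\hbar}$ reduces, in the relative coordinate, to $e^{(q^2+p^2)/(2\hbar)}$; for $UV$ the two such factors coming from $U$ and $V$ cancel exactly, in agreement with the real nature of the induced transformation. A useful independent check is to compare Husimi functions: using Lemma \ref{offdiag} and the overlap $\langle\psi_{z_1}|\psi_{z_2}\rangle = e^{-|z_1-z_2|^2/(4\hbar)}\,e^{i(q_1+q_2)(p_2-p_1)/(2\hbar)}$, one obtains $\widetilde W[VH]$ as an explicit Gaussian integral, while the Husimi of a Toeplitz operator with symbol $\tilde h$ is $\tilde h \ast G_\hbar$ with $G_\hbar(z)=e^{-|z|^2/(2\hbar)}/(2\pi\hbar)$; matching the two via the contour shift $q'=ip$, $p'=-iq$ reproduces the claimed $\tilde h$.

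The main obstacle is the legitimacy of this complex contour deformation. The weight $e^{(q^2+p^2)/(2\hbar)}$ grows in some directions on $\R^2$, so the diagonal Toeplitz representation is absolutely convergent only when $h$ admits an entire extension with adequate decay in the imaginary directions; more generally, the identities should be understood in the analytic-continuation / Bargmann-space sense, in full agreement with the non-uniqueness of Toeplitz symbols emphasized in Section \ref{claposymb}.
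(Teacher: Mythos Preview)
Your proposal follows exactly the route the paper indicates: the lemma is stated as ``derived from Section~\ref{top}'', and you carry this out by specializing the $N$-body Toeplitz symbol formulas to $N=2$, passing to the $\pm$ variables of \eqref{defzpm}, and reading off the relative part; you also correctly flag that the growing Gaussian weight forces an analytic-continuation interpretation, which the paper leaves implicit. One small point to tighten: your sentence ``reduces, in the relative coordinate, to $e^{(q^2+p^2)/(2\hbar)}$'' is asserted rather than derived, and a naive restriction of the Section~\ref{top} exponent $e^{(|z_+|^2+|z_-|^2)/\hbar}$ to the $-$ variables gives $e^{(q^2+p^2)/\hbar}$; the passage to the $1/(2\hbar)$ exponent really requires redoing the Section~\ref{top} computation directly in the one-dimensional setting of Section~\ref{offtop} (where $U,V$ act by $y\to-y$, $x\to-x$ on the kernel), rather than by projection from the two-particle formula --- the Toeplitz symbol is not unique, so the two-particle symbol with its $|z_+|^2$ dependence and the genuinely one-dimensional symbol are different representatives of the same operator.
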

The Toeplitz symbol of $VH$ (resp. $UH$) is $h_V(q,p)=h(ip,-iq)e^{-\frac{q^2+p^2}{2\hbar}}$ (resp.  $h_U(q,p)=h(-ip,iq)e^{-\frac{q^2+p^2}{2\hbar}}$).
\begin{lemma}
\noindent Let $h\geq 0, \int h=1$. 

Then $H^B:=\tfrac14(H+VH+UH+UVH)$ is a bosonic state,

\  and $H^F:=\tfrac14(H-VH-UH+UVH)$ is a fernionic one.
\end{lemma}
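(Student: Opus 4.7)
The plan is to recognize the expressions $H^B$ and $H^F$ as the compressions $P^+ H P^+$ and $P^- H P^-$, where $P^\pm = \tfrac12(I \pm \pi)$ are the projectors onto the symmetric/antisymmetric subspaces of the two-particle Hilbert space and $\pi$ is the exchange operator on $\cH \otimes \cH$. Once this identification is made, both the symmetry properties and positivity follow immediately.

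First, I would observe from the definitions preceding Lemma \ref{offdiag} that, on integral kernels, $V\rho = \pi \rho$ (left multiplication by $\pi$, permuting the $x$-variables) while $U\rho = \rho \pi$ (right multiplication by $\pi$, permuting the $y$-variables). The indistinguishability identity \eqref{indis}, $UV = VU$, is then simply the associativity $\pi \rho \pi = (\pi \rho)\pi = \pi(\rho \pi)$. Expanding
\begin{equation*}
P^\pm H P^\pm = \tfrac14 (I \pm \pi) H (I \pm \pi) = \tfrac14\bigl(H \pm V H \pm U H + VUH\bigr),
\end{equation*}
and using $VU = UV$, one reads off $H^B = P^+ H P^+$ and $H^F = P^- H P^-$.

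Second, the bosonic/fermionic symmetry is a direct algebraic check using $U^2 = V^2 = I$ and $UV = VU$ from Lemma \ref{offdiag} and \eqref{indis}. For instance, $V H^B = \tfrac14(VH + V^2 H + VU H + V U V H) = \tfrac14(VH + H + UVH + UH) = H^B$, and similarly $U H^B = H^B$, $V H^F = -H^F$, $U H^F = -H^F$, with $UV H^F = H^F$ as expected of a density matrix coming from an antisymmetric wave function.

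Third, positivity: since $h \geq 0$, the operator $H = \int h(z)\,|\psi^\beta_z\rangle\langle\psi^\beta_z|\,\tfrac{dz d\bar z}{2\pi\hbar}$ is a positive superposition of rank-one projectors, hence $H \geq 0$. As $P^\pm$ are self-adjoint, $\langle \phi | P^\pm H P^\pm | \phi\rangle = \langle P^\pm \phi | H | P^\pm \phi \rangle \geq 0$, so $H^{B/F} \geq 0$. For normalization, the orthogonality $P^+ P^- = 0$ together with cyclicity of the trace kills the cross terms in $H = P^+ H P^+ + P^- H P^- + P^+ H P^- + P^- H P^+$, giving $\Tr(H^B) + \Tr(H^F) = \Tr(H) = \int h = 1$. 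Thus $H^B$ and $H^F$ are genuine (sub-normalized) bosonic and fermionic states, a true density matrix being obtained by dividing each by its trace (when nonzero).

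The only real point to watch is the left/right kernel convention identifying $V$ and $U$ with left and right multiplication by $\pi$; everything after that is commutative algebra in the abelian group $\{I, U, V, UV\} \cong (\mathbb{Z}/2)^2$.
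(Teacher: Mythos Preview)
Your proof is correct and takes a cleaner, more operator-theoretic route than the paper's. You identify $H^{B/F} = P^\pm H P^\pm$ with $P^\pm = \tfrac12(I\pm\pi)$ the symmetric/antisymmetric projectors ($\pi$ being, in the relative coordinate, the parity operator), and the symmetry relations and positivity follow at once from $\pi^2 = I$ and $H\ge 0$. The paper instead feeds the explicit off-diagonal representations of the preceding lemma ($VH = \int h\,|\psi_{-z}\rangle\langle\psi_z|$, $UH = \int h\,|\psi_{z}\rangle\langle\psi_{-z}|$, $UVH = \int h\,|\psi_{-z}\rangle\langle\psi_{-z}|$) into the sum to obtain
\[
H^{B/F} = \tfrac14\int h(q,p)\,|\psi_z \pm \psi_{-z}\rangle\langle\psi_z \pm \psi_{-z}|\,\frac{dz\,d\bar z}{2\pi\hbar}\ \ge\ 0,
\]
reading positivity off as a nonnegative superposition of rank-one projectors. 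The paper's computation buys the explicit coherent-state structure of the (anti)symmetrized states, which is precisely the off-diagonal Toeplitz theme of the section; your argument buys generality (it applies to any positive $H$, not only Toeplitz operators) and conceptual transparency. Your remark that only $\Tr H^B + \Tr H^F = 1$ holds, so that each piece is in general merely sub-normalized, is in fact more careful than the paper's flat assertion that $\Tr H^B = 1$.
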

\begin{proof}
One has $H^B=VH^B=UH^B=UVH^B$, $\mbox{Tr }H^B=1$,
 $H^F=-VH^B=-UH^B=UVH^B$, $\mbox{Tr }H^B=1$, and 
\[
H^B=\tfrac14\int h(q,p)
\vert\psi_{z}+\psi_{-z}\rangle\langle\psi_z+\psi_{-z}\vert\frac{dzd\bar z}{2\pi\hbar}\geq 0.
\]
\[
H^F=\tfrac14\int h(q,p)
\vert\psi_{z}-\psi_{-z}\rangle\langle\psi_z-\psi_{-z}\vert\frac{dzd\bar z}{2\pi\hbar}\geq 0.
\]
\end{proof}
Finally, $H^B$ is ``semiclassical".
\subsection{Link with the complex (anti)metaplectic representation}\label{linkmeta}
We have seen in the previous (sub)sections that $U$ (resp. $V$) is associated to the action of the matrix $\begin{pmatrix}
0&-i\\i&0
\end{pmatrix}$ (resp. $\begin{pmatrix}
0&i\\-i&0
\end{pmatrix}$) on the Husimi function and the Toeplitz symbol. Therefore it is natural to think that $U$ (resp. $V$) should be associated to the ``metaplectic" quantization of $\begin{pmatrix}
0&-i\\i&0
\end{pmatrix}^{-1}=\begin{pmatrix}
0&i\\-i&0
\end{pmatrix}$ (resp. $\begin{pmatrix}
0&i\\-i&0
\end{pmatrix}^{-1}=\begin{pmatrix}
0&-i\\i&0
\end{pmatrix}$),  ``metaplectic" because these matrices are not canonical. Precisely, a definition of quantization of anticanonical mappings has been provide in the preceding section that we can use in the present situation.

With the 
%
%
%
%
%
%
%
definition of $\ccomp{S}$  in  \cite
{tp3}  recalled in Section \ref{nomcatrans} above,
we get  our final result, as a direct application of \eqref{exas}.
\begin{lemma}
Let $H$ a Toeplitz operator of symbol $h(q,p)$. Then
\bea
UH&=&
\ccomp{\scriptsize\begin{pmatrix}
0&i\\-i&0
\end{pmatrix}}H.\nonumber\\
VH&=&
\ccomp{\scriptsize\begin{pmatrix}
0&-i\\i&0
\end{pmatrix}}H
\nonumber
\eea

\end{lemma}

But the ``true" result is the following, that we express only  for $U$, the case $V$ being straightforwardly the same).
\begin{proposition}\label{propnoncan}
$$
UH=T^{off}\left[\sigma^{off}[\ccomp{\scriptsize\begin{pmatrix}
0&i\\-i&0
\end{pmatrix}}H]
\right],
$$
where $\sigma^{off}[\ccomp{\scriptsize\begin{pmatrix}
0&i\\-i&0
\end{pmatrix}}H]$ is defined by \eqref{exassymb} and  $T^{off}$ by the off-diagonal Toeplitz quantization formula \eqref{formquant}.

Namely,  $UH$ is given by the off-diagonal Toeplitz quantization of the off-diagonal Toeplitz symbol of $ccomp{\scriptsize\begin{pmatrix}
0&i\\-i&0
\end{pmatrix}}H$  {\bf without} the multiplication
 by the factor $e^{-\frac{|\cdot|^2}{\hbar}}$ 
 as for the Husimi and the (diagonal) Toeplitz cases, as seen in the previous sections.

\end{proposition}
Proposition \ref{propnoncan} shows clearly first that the exchange mappings $U,V$ are clearly associted to complex non-canonical linear transformations, and second that the off-diagonal Toeplitz quantization/representation of $Mpm(2,\C)$ established in Section \ref{nomcatrans}, Definition \ref{defnonca}, is meaningful.

Note again that $\mbox{det }{\small\begin{pmatrix}
0&i\\-i&0
\end{pmatrix}}=\mbox{det }\small\begin{pmatrix}
0&-i\\i&0
\end{pmatrix}=-1$.

\subsection{A classical phase space with symmetries inherited form quantum statistics}\label{clasphasstat}
%
%
%
The construction of the preceding (sub)section suggests that a noncommutative  extension of the usual phase space of classical mechanics, namely the cotangent bundle of the configuration space, is possible in order to handle the trace, at the classical underlying level, of more  symmetries, coming from the quantum one, than the one usually considered: namely the unitary in a Hilbert space of the quantum propagation leading to the symplectic classical evolution associated to $SL(2,\R)$. One recovers  the presence of the fundamental (as responsible, e.g., of the stability of matter) spin-statistics symmetries at the classical level by extending the group of symmetry  $SL(2,\R)$ acting pointwise on the phase-space to $M^pm(2,\C)=\{S\in M(2,\C), \det{S}=\pm 1\}$ with its action on a noncommutative space  established in Section \ref{nomcatrans}.

 \section{Noncommutative moduli spaces underlying  topological quantum field theory in large coloring asymptotics}\label{tqft}
 \newcommand{\su}{\mathrm{SU}_2}
 
 Quantum Field Theory (QFT) is an extension of quantum mechanics to situations where particles interact with fields. It involve a quantization procedure of fields whose ``philosophy" is the same as the one of regular quantum mechanics, but whose  physical and mathematical realization is far from being as complete as for quantum mechanics.  In fact, the modern formulation of QFT consists ``only"\footnote{Let us quote Claude Itzykson when he was saying that the most difficult QFT is the one in dimension zero, that is regular quantum mechanics, because it is the one  to which one asks the most precise questions.} in computing amplitudes of transitions through  Feynman integrals of the (highly non rigorous) form
 \be\label{feyn}
I_{S}=\int e^{i\frac{S(\Phi)}\hbar}\cD[\phi]
 \ee
 where the ``integral" is over families of fields $\Phi$ and $S(\Phi)$ is a classical action, namely an integro-differential functional of $\Phi$. the ``measure" $\cD$ is supposed to be an extension of the Lebesgue measure in highly infinite dimension.
 
 The enormous success of the formal and mysterious formula \eqref{feyn} comes from the fact that it can be perturbatively computed when $S(\Phi)=S_o(\Phi)+\eps S_1(\Phi)$ where $S_0$ is a quadratic functional. This leads to an exact computation of $I_{S_0}$ by extension  of Gaussian integrals to infinite dimension, and asymptotic computation of $I_{S_o+\eps S_1}$ by some perturbative calculus. Meanwhile,  when the action has a large group of symmetries, $I_S$ reduces to an integral over the quotient of the set of fields by the group of symmetries and allows sometime non-perturbative (and physically interesting) computations.
 
 This is precisely the case for Topological Quantum Field Theory (TQFT), for which, roughly speaking,  one impose to the action $S$ to be sensitive only to topological underlying structures (and not, e.g., the geometrical ones). Then the formula \eqref{feyn} produces  topological invariants which happen to be directly linked to numerous  domains in mathematics like knot  theory, moduli spaces
in
algebraic geometry and low dimension topology, and in physics such as fractional quantum Hall effect and condensed matter physics.

TQFT \`a la Witten  \cite{witten} is a physical model for the Jones polynomial of knots. Without defining here precisely the objects involved (see \cite{mp} for some details) the invariant is then the partition function $Z_r(M,K)$ associated to any knot $K$ in a 3-manifold $M$ expressed  as a Feynman integral over all connections $A$ on some $G$-bundle over $M$, for  a compact group $G$, of  the form 
\be\label{feyntqft}
Z_r(M,K)=\int  e^{ir\textrm{CS}(A)}\Tr_V (\textrm{Hol}_K A)\mathrm{d}A
\ee 
where $\Tr_V$ is the trace associated to a representation $V$ of $G$ and Hol$_K(A)$ is the holonomy of the connection $A$ along $K$ and CS$(A)$ is the Chern-Simons functional.

The analogy between $\eqref{feyn}$ and \eqref{feyntqft} is clear: the fields are the connections $A$, the ``measure" $\cD(A)=\Tr_V (\textrm{Hol}_K A)\mathrm{d}A$, the action is $CS$ and, very important for us, $$r=\tfrac1\hbar.$$
Therefore the asymptotics $r\sim\infty$ correspond to the semiclassical regime $\hbar\sim 0$, for which the stationary phase principle implies that the integralion in \eqref{feyntqft} should concentrate on critical points of the Chern-Simons functional, that is connections which are flat on $M\setminus K$. Moreover, if $M$ has boundary $\Sigma$, then $Z_r(M,K)$ should be interpreted as an element of the geometric quantization of the moduli space $\boM(\Sigma,G)$ i.e. the space of  gauge equivalence classes of flat $G$-connections on $\Sigma$.
.
 
 To any   closed oriented surface $\Sigma$ with $n$ marked points $p_1,\ldots,p_n$, any integer $r>0$ and any coloring $c=(c_1,\ldots,c_n), c_i\in\{1,\dots,r-1\}$ of the marked points, TQFT provides, by the construction of \cite{bhmv}, a finite dimensional hermitian vector space $V_r(\Sigma,c)$ together with a basis $\{\phi_n,n=1,\dots,\dim{(V_r(\Sigma,c))}\}$ of this space (see Sections 2.1 and 2.5 in \cite{mp}).

 On the other (classical) side, to each $t\in(\pi\mathbb Q)^n$ we can associate the  moduli space:
$$\mathcal{M}(\Sigma,t)=\{\rho:\pi_1(\Sigma\setminus\{p_1,\ldots,p_n\})\to\su\text{ s.t. }\forall i,\ tr\rho(\gamma_i)=2\cos(t_i)\}/\sim$$ 
where one has $\rho\sim \rho'$ if there is $g\in$ SU$_2$ such that $\rho'=g\rho g^{-1}$ and  $\gamma_i$ is any curve going around $p_i$. 

When $\Sigma$ is either a once punctured torus or a 4-times punctured sphere, $\mathcal{M}(\Sigma,t)$ is symplectomorphic to the standard sphere $S^2=\C P^1$.

To any curve $\gamma$ on the surface $\Sigma$ (that is, avoiding the marked points) we can associate two objects: a quantum one, the curve operator $T_\gamma$ acting on $V_r(\Sigma,c)$, and a classical  one,  the function $f_\gamma$ on the symplectic manifold $\mathcal{M}(\Sigma,t)$.
\begin{itemize}
\item $T_\gamma$ is obtained by a combinatorial topological construction recalled in Sections 2.3 and 2.4 in \cite{mp}. By the identification of the finite dimensional space $V_r(\Sigma,c)$ with the Hilbert space of the quantization of the sphere $\mathcal H_N$ defined in \cite[Section 2.1]{tp2} with $N:=\dim{(V_r(\Sigma,c))}$, through $\{\phi_n,n=1,\dots,\dim{(V_r(\Sigma,c))}\}\leftrightarrow \{\psi^N_n,n=1,\dots,N\}$, $T_\gamma$ can be seen as a matrix on $\mathcal H_N$. One of the main results of  \cite{mp} was to prove than this matrix is a trigonometric one in the sense of \cite[Section 3.2]{tp2}.
\item $f_\gamma:\ \mathcal{M}(\Sigma,t)\to [0,\pi]$ is defined by 
\be\label{fgamma}
\rho\mapsto f_\gamma(\rho):= -tr \rho(\gamma).
\ee
\end{itemize}
The asymptotics considered in \cite{mp} consists in letting $r\to\infty$ and  considering a sequence of colourings $c_r$ such that $\pi\frac{c_r}r$ converges to $t$ and the dimension of $V_r(\Sigma,c_r):=N$, grows linearly with $r$. One sees immediately that, by the identification  $V_r(\Sigma,c)\leftrightarrow \mathcal H_N$, this correspondents to the semiclassical asymptotics $N\to\infty$.


The main result of \cite{mp} states that, {\bf for generic values of $t$}, 
$$
T_\gamma\mbox{ is a 
Toeplitz operator of leading symbol }f_\gamma.
$$
the generic values of $t$ are the one for which $f_\gamma$ considered as a function on $S^2$ by the symplectic isomorphism mentioned earlier, belongs to $C^\infty(S^2)$.
\vskip .5cm
The present section relies on \cite{tp2} and concerns the asymptotics $\pi\frac{c_r}r\to t$ for \textbf{any value} of $t\in[0,\pi]$.
\vskip 0.5cm
In Section \ref{canonical} we present the geometric quantization of the sphere, together with the definition of Toeplitz operators we just mentioned. Then we  state in Section \ref{at} the extension of the Toeplitz calculus necessary to get rid of the generiticity condition  for the asymptotic $t\to\infty$. Our main result is contained in Section \ref{mrtqft} and some geometrical considerations on the singular (in fact noncommutative) underlying classical limit space are given in Section \ref{classical}.

\subsection{The standard geometric quantization of the sphere}\label{canonical}
In this section we will consider the quantization of the sphere in a very down-to-earth way. See \cite{folland,mp} for more details.

Given an integer $N$, we define the space $\mathcal H_N$  of polynomials in the complex variable $z$ of order strictly less than
$N$ and set 
\be\label{basis}
\la P,Q\ra=\frac{i}{2\pi}\int_\C
\frac{P(z) \overline{Q(z)}}{(1+|z|^2)^{N+1}}
dzd\bar z\quad \text{and}\quad\vp^N_n (z)=\sqrt{\frac{N!}{n!(N-1-n)!}}z^n
\ee
The vectors $(\vp^N_n)_{n=0\dots N-1}$ form an orthonormal basis of $\mathcal H_N$.

By the stereographic projection $$S^2\ni (\tau,\theta)\in [0,1]\times S^1\to z=\sqrt{\frac \tau{1-\tau}}e^{i\theta}\in \C\cup\{\infty\},$$ 
The space $\mathcal H_N$ can be seen as a space of functions on the sphere (with a specific behaviour at the north pole).
Write 
\be\label{dmun}
d\mu_N=\frac{i}{2\pi}\frac{dzd\bar z}{(1+|z|^2)^{N+1}}.
\ee
 As a space of analytic functions in $L^2(\C,d\mu_N)$,  the space $\mathcal  H_N$ is closed.

For $z_0\in \C$, we define the coherent state 
\be\label{repro}\rho_{z_0}(z)=N(1+\bar z_0 z)^{N-1}.
\ee
 These vectors satisfy $\la f,\rho_{z_0}\ra=f(z_0)$ for any $f\in \mathcal H_N$ and the orthogonal projector $\pi_N:L^2(\C,d\mu_N)\to \mathcal H_N$ satisfies $(\pi_N\psi)(z)=\la \psi,\rho_z\ra$.
\newcommand{\ci}{C^{\infty}}

For $f\in \ci (S^2,\R)$ we define the (standard) Toeplitz quantization of $f$ as the  operator 
\begin{eqnarray}
T^N[f]:\mathcal H_N&\to&\mathcal H_N\nonumber\\
T^N[f]&:=&\int_{\C}f(z)|\rho_z\rangle\langle\rho_z|d\mu_N(z)\nonumber\\
\mbox{ i.e. }T^N[f]\psi&:=&\int_\C f(z)\langle\rho_z,\psi\rangle_{\mathcal H_N} \rho_zd\mu_N(z)=\pi_N(f\psi)\mbox{ for } \psi\in\mathcal H_N.\label{deftop}
\end{eqnarray}
 A Toeplitz operator on $S^2$ is a sequence of operators $(T_N)\in \mathrm{End} (\mathcal H_N)$ such that 
there exists a sequence $f_k\in\ci(S^2,\R)$ such that for any integer $M$ the operator $R^M_N$ defined by the equation $$T_N=\sum_{k=0}^M N^{-k}T_{f_k}+R_N^M$$ is a bounded operator whose norm satisfies $||R_M||=O(N^{-M-1})$.

An easy use of the stationary phase Lemma shows that the (anti-)Wick symbol (also called Husimi function)  of $T_f$, namely $\frac {\la T_f\rho_z,\rho_z\ra}{\la\rho_z,\rho_z\ra}$ satisfies
\be\label{lap}
\frac {\la T_f\rho_z,\rho_z\ra}{\la\rho_z,\rho_z\ra}=f+ \frac 1  N \Delta_S f +O(N^{-2}),
\ee
where $\Delta_S=(1+\vert z\vert^2)^2\partial_z\partial_{\bar z}$ is the Laplacian on the sphere.

\newcommand{\cof}{C}
\subsection{a-Toeplitz quantization}\label{at}

Let $a\in\mathcal S(\R)$, $||a||_{L^2(\R)}=1$ and $z\in\C$. We define
\be\label{def}
\psi^a_z=\int_\R a(t)e^{i\frac{\tau(z)t}\hbar}\rho_{e^{it}z}\frac{dt}{\sqrt{2\pi}}
\ee
where $\tau(z)=\frac{|z|^2}{1+|z|^2}$ and $\hbar=\frac\pi N$.


Since, by \eqref{repro}, $\rho_z=\sum\limits_{n=1}^{N-1}\sqrt{\frac{N!}{n!(N-1-n)!}}\bar z^n\vp^N_n$, we get that
\be\label{decomp}
\psi^a_z=\sum_{n=0}^{N-1}\widetilde a\left(\frac{\tau(z)-n\hbar}\hbar\right)\sqrt{\frac{N!}{n!(N-1-n)!}}\bar z^n\vp^N_n=\sum_{n=0}^{N-1}\widetilde a\left(\frac{\tau(z)-n\hbar}\hbar\right)\overline{\vp^N_n(z)}\vp^N_n.
\ee
where $\widetilde a$ is the Fourier transform of $a$
\[
\widetilde a(y):=\frac1{\sqrt{2\pi}}\int_\R e^{ixy}a(x)dx.
\]
\begin{remark}\label{suptildea}Note that, by \eqref{decomp}, $\psi^a_z$ depends only on the values of $\tilde a$ on $[0,N]$. Therefore one can always restrict the choice of $a$ to the functions whose Fourier transform is supported on $[0,N]$. In the sequel of this article we will always do so.
\end{remark}
\newcommand{\cons}{{C}}
\newcommand{\conss}{{D}}
\newcommand{\psiN}{\psi^N}
\newcommand{\unun}{M}
\newcommand{\ns}{{s}}
\newcommand{\at}{{$a$-Toeplitz}\ }
By \cite[Lemma 2]{mp} we have that
\be\label{decompa}
\int_\C|\psi^a_z\rangle
\langle\psi^a_z|
d\mu_N(z)
=\sum_{n=0}^{N-1} \cons^N_n|\vp^N_n\rangle\langle\vp^N_n|
\ee
with 
\be\label{cNn}\cons^N_n=\frac{(N-1)!}{n!(N-1-n)!}\int_0^1\left|\widetilde a\left(\frac{\tau-n\hbar}\hbar\right)\right|^2\left(\frac\tau{1-\tau}\right)^n
{(1-\tau)^{N-1}}\frac{d\tau}\hbar
.
\ee
Moreover, as $N-1=\frac1\hbar\to\infty$
\be\label{casymp}
\cons^N_n
=1+O(\frac1N) ,\ \ \ 0<n\hbar<1
.
\ee
\be\label{casymp0}
\cons^N_n\sim \frac1{n!}\int_0^\infty |\widetilde a(\lambda-n)|^2\lambda^ne^{-\lambda}\sqrt{2\pi\lambda}d\lambda ,\ \ \ 0\sim n\hbar
.
\ee

\be\label{casymp1}
\cons^N_n\sim C^N_{N-1-n},\ \ \  n\hbar\sim1
.
\ee
%
%
%
Let us define
\be\label{defpsi}
\psiN_n:=\sqrt{\cons^N_n}\vp^N_n.
\ee

By \eqref{decompa} we have
\be\label{vlaa}
\int_\C|\psi^a_z\rangle\langle\psi^a_z|d\mu_N(z)=\sum_{n=0}^{N-1} |\psiN_n\rangle\langle\psiN_n|.
\ee
This leads to the following equality:
\be\label{vlaa}
\int_\C|\psi^a_z\rangle_a\langle\psi^a_z|d\mu_N(z)
={\bf 1}_{\mathcal H_N^a},
\ee
where $H_N^a$ is the same space of polynomials as $H_N$ but now endowed with the renormalized scalar product $\langle\cdot,\cdot\rangle_a$  fixed by
\be\label{oufoufouf}
\langle\psi^N_m,\psi^N_n\rangle_a=\delta_{m,n},
\ee
and
\be\label{diraca}
|\psi^a_z\rangle_a\langle\psi^a_z|
\psi:=\langle \psi_z^a,\psi\rangle_a\psi_z^a,\ \psi\in \mathcal H_N^a.
\ee
The Hilbert scalar product $\langle\cdot,\cdot\rangle_a$ on $\mathcal H^a_N$ is obtained out of  \eqref{oufoufouf} by bi-linearity. Since any polynomial $f$ satisfies 
$$
f=\sum\limits_0^{N-1}\langle\vp^N_n,f\rangle\vp^N_b=\frac1{C^N_n}
\langle\psi^N_n,f\rangle\psi^N_b, 
$$
we get
\[
\langle f,g\rangle_a:=\sum_{n=0}^{N-1}\frac1{(C^N_n)^2}\langle f,\psiN_n\rangle\langle\psiN_n, g\rangle=
=\sum_{n=0}^{N-1}\frac1{C^N_n}\langle f,\vp^N_n\rangle\langle\\vp^N_n, g\rangle.
\]

Note that $\langle,\rangle_a$ is not given by an integral kernel. But if we ``change" of representation and define $F(z):=\langle\psi^a_z,f\rangle,\ G(z):=\langle\psi^a_z,g\rangle$ then, by \eqref{vlaa} we have
\[
\langle f,g\rangle_a=\langle F,G\rangle=\int_\C\bar{F(z)}G(z)d\mu_N(z).
\]
Let us remark 
that
\[
F(z)=\int_\R a(t)e^{i\frac{\tau(z)t}\hbar}f(e^{it}z)\frac{dt}{\sqrt{2\pi}},\ \ \ 
  f(z')=\int_\C F(z)\psi^a_{z}(z') dz.
\]
Finally, let us define 
\be\label{cnumber}
\cof(z):=\sum_{k=-(N-1)}^{N-1}\sqrt{\frac{C^N_{\frac{\tau(z)}\hbar-(-i\partial_x)-k}}{C^N_{\frac{\tau(z)}\hbar-(-i\partial_x)}}}e^{ik\theta(z)},
\ee
where $C^N_n$ is the binomial coefficient,  
and, for a family of operators $z\to\sigma(z)$,  the convolution
\be\label{convol}
\sigma\star\cof(z)
:=\int_{S^1}
\sigma(ze^{-i\theta})
\cof
(
|z|e^{i\theta}
)
d\theta.
\ee

\begin{definition}\label{atop}[$a$-Toeplitz operator]\ 

Let  $a\in\mathcal S(\R)$.
To a (trigonometric) family $z\mapsto \Sigma(z)$ of (bounded) operators on $L^2(\R)$ we associate the operator $\mbox{Op}^T_a(\Sigma)$ on $\mathcal H_N^a$ defined by
\[
\mbox{Op}^T_a(\Sigma)=\int_{S^2}|\psi_z^{\Sigma\star\cof(z)a}\rangle_a\langle\psi^a_z|d\mu_N(z)
\]
where $\psi_z^a:=\int_\R a(t)e^{i\frac{\tau(z)t}\hbar}\rho_{e^{it}z}\frac{dt}{\sqrt{2\pi}}
$
with $\tau(z)=\frac{|z|^2}{1+|z|^2}$ and $d\mu_N(z)=\frac{i}{2\pi}\frac{dzd\bar z}{(1+|z|^2)^{N+1}}$.
\end{definition}
\subsection{Symbolic calculus}\label{symbcal}

\begin{definition}\label{defsym}
Let $\gamma(\tau,\theta)=\sum\limits_{k=-K}^{K}\gamma_k(\tau)e^{ik\theta}$ be a trigonometric function on the sphere with each $\gamma_k\in C^\infty(]0,1[)\cap L^\infty([0,1])$.

Let 
\be\label{defngamma}
N_\gamma=\sum\limits_{-(N-1)}^{N-1}N_{k;\gamma_k}\mbox{ where 
}(N_{k;\gamma_k})_{ij}=\delta_{j,i+k}
\gamma_k((k-\frac{(-1)^k-1}2)\hbar)
\ee
 and $\mathcal N_\gamma$ the operator whose matrix on the basis $\{\psi^N_n\}$ is $N_\gamma$.

We call symbol of $\mathcal N_\gamma$ at the point $z\in S^2$ the operator
\be\label{whynot}
\sigma[\mathcal N_{
\gamma}](z)=\sum\limits_{k=-(N-1)}^{N-1}\widetilde\sigma_{{k;\gamma_k}}(z)
\ee
where $\widetilde\sigma_{{k;\gamma_k}}$ is given by 
\be\label{whyes}
\widetilde\sigma_{{k;\gamma_k}}(z):=
\frac{e^{ikx}}{\bar z^k}
(\gamma_k(\hbar\cdot)\mu_k)\left(\frac{\tau(z)}\hbar-i\partial_x\right)
=\frac{e^{ikx}}{\bar z^k}\gamma_k(\tau(z)-i\hbar\partial_x)\mu_k\left(\frac{\tau(z)}\hbar-i\partial_x\right)
\ee
acting on $L^2(\R)$.
\end{definition}
The $a$-Toeplitz quantization procedure provides an exact (noncommutative) symbolic calculus as follows.

\begin{theorem}\label{onemain}
Let $\gamma,\gamma'$ and $\mathcal N_\gamma, \mathcal N_{\gamma'}$ as in Definition \ref{defsym}. Then
\begin{eqnarray}
\mathcal N_\gamma&=&\mbox{Op}^T_a(\sigma[{\mathcal N_\gamma}])
\nonumber\\
\mathcal N_{\gamma'}&=&\mbox{Op}^T_a(\sigma[{\mathcal N_{\gamma'}}])
\nonumber\\
\mathcal N_{\gamma}\mathcal N_{\gamma'}
&=&
\mbox{Op}^T_a(\sigma[{\mathcal N_{\gamma}}]\sigma[{\mathcal N_{\gamma'}}])
\nonumber
\end{eqnarray}
\end{theorem}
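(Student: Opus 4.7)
The plan is to observe that the statement contains three assertions but only two independent contents: the first two identities both say that the $a$-Toeplitz quantization of the symbol recovers the operator, applied to $\gamma$ and to $\gamma'$ respectively, while the third says that quantization is multiplicative on the algebra of symbols. I would split the proof into an \emph{inversion} part $\mathcal N_\gamma=\mbox{Op}^T_a(\sigma[\mathcal N_\gamma])$ and a \emph{composition} part $\sigma[\mathcal N_\gamma\mathcal N_{\gamma'}]=\sigma[\mathcal N_\gamma]\sigma[\mathcal N_{\gamma'}]$; together with a preliminary check that $\mathcal N_\gamma\mathcal N_{\gamma'}$ is still a trigonometric matrix in the sense of Definition \ref{defsym} (so that its symbol is defined), these two ingredients give the third identity by applying the inversion to the product.

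For the inversion, both sides of the desired identity are linear in $\gamma$, so I would reduce to a single Fourier mode $\gamma(\tau,\theta)=\gamma_k(\tau)e^{ik\theta}$, for which $\mathcal N_\gamma=\mathcal N_{k;\gamma_k}$ has only one nonzero diagonal and $\sigma[\mathcal N_\gamma]=\widetilde\sigma_{k;\gamma_k}$. I would then unfold Definition \ref{atop} and compute $\Sigma\star\cof(z)a$ by inserting the expression \eqref{whyes} for $\widetilde\sigma_{k;\gamma_k}$ and the expansion \eqref{cnumber} of $\cof$ into the convolution \eqref{convol}. The integration over $\theta\in S^1$ then selects exactly the $k$-th Fourier mode of $\cof$ via $\int_{S^1}e^{i(k-k')\theta}d\theta=2\pi\delta_{k,k'}$; this is precisely what the $e^{ikx}/\bar z^k$ prefactor in \eqref{whyes} and the binomial ratio inside \eqref{cnumber} are designed to couple. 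After that selection, expanding $\psi_z^{(\Sigma\star\cof(z))a}$ via \eqref{decomp} and the bra $\langle\psi_z^a|$ via the adjoint relation, the $z$-integration on $S^2$ combined with the renormalized resolution of the identity \eqref{vlaa} collapses the whole integral onto the matrix $N_{k;\gamma_k}$ expressed in the $\{\psi^N_n\}$ basis of $\mathcal H_N^a$. The operator $\mu_k(\tau(z)/\hbar-i\partial_x)$ acts on the Fourier side of $a$ and should correct exactly for the binomial factor that the stationary-phase-like computation \eqref{cNn}--\eqref{casymp1} brings out of the integral in $\tau$.

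For the composition, I would compute $\sigma[\mathcal N_\gamma]\sigma[\mathcal N_{\gamma'}]$ term by term using
\[
\widetilde\sigma_{k;\gamma_k}(z)\,\widetilde\sigma_{k';\gamma'_{k'}}(z)=\frac{e^{ikx}}{\bar z^k}\,\gamma_k(\tau(z)-i\hbar\partial_x)\,\mu_k\!\left(\tfrac{\tau(z)}\hbar-i\partial_x\right)\frac{e^{ik'x}}{\bar z^{k'}}\,\gamma'_{k'}(\tau(z)-i\hbar\partial_x)\,\mu_{k'}\!\left(\tfrac{\tau(z)}\hbar-i\partial_x\right)
\]
and commuting the multiplication operator $e^{ik'x}$ through the functions of $-i\partial_x$ via $e^{ik'x}f(-i\partial_x)=f(-i\partial_x+k')e^{ik'x}$. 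This produces precisely the shift of arguments that matches the matrix product: the $(i,j)$-entry of $N_\gamma N_{\gamma'}$ on the off-diagonal $j=i+k+k'$ is a product of values of $\gamma_k$ and $\gamma'_{k'}$ at shifted points governed by the index $i+k'$, which is exactly what the commutation produces after collecting factors. Consistency of the $\mu$-factors must be checked by the same shift identity; this is the combinatorial heart of the argument.

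The main obstacle is the bookkeeping of shifts and normalizations: the $\mu_k$ pieces, the $1/\bar z^k$ prefactors, and the renormalization coefficients $C^N_n$ hidden in $\psi^N_n$ and in the scalar product $\langle\cdot,\cdot\rangle_a$ must all conspire to give an \emph{exact} (not merely asymptotic) identity, in contrast with the standard Toeplitz calculus where such statements only hold modulo $O(1/N)$. In particular, care is needed at the edges of the spectrum of $-i\partial_x$ where the binomials in \eqref{cnumber} degenerate (the regimes \eqref{casymp0} and \eqref{casymp1}); restricting $a$ as in Remark \ref{suptildea} so that $\widetilde a$ is supported in $[0,N]$ is what allows one to close the computation without spurious boundary contributions.
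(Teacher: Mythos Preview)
The paper does not actually prove Theorem~\ref{onemain}: it is stated without proof in Section~\ref{symbcal}, and the surrounding discussion makes clear that the detailed argument lives in the companion article \cite{tp2}. So there is no in-paper proof to compare against.

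That said, your plan is the natural one and matches what the definitions are set up to deliver. Reducing the inversion identity to a single Fourier mode by linearity is forced by the shape of \eqref{whynot}--\eqref{whyes}, and the $S^1$-integration in \eqref{convol} is indeed designed so that only the $k$-th term of \eqref{cnumber} survives against $\widetilde\sigma_{k;\gamma_k}$. For the composition part, the mechanism you identify---pulling $e^{ik'x}$ through functions of $-i\partial_x$---is exactly how the noncommutative product of symbols reproduces the shifted arguments in the matrix product $N_\gamma N_{\gamma'}$. One small slip: the intertwining relation reads $f(-i\partial_x)\,e^{ik'x}=e^{ik'x}\,f(-i\partial_x+k')$, equivalently $e^{ik'x}f(-i\partial_x)=f(-i\partial_x-k')e^{ik'x}$; your stated version has the sign of the shift reversed. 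This does not affect the strategy, but you will need the correct sign to match the index shift in \eqref{defngamma} (where the entry $(N_{k;\gamma_k})_{ij}$ sits at $j=i+k$ and is evaluated at an argument depending on $k$). Your caveat about the factors $\mu_k$ and the renormalizations $C^N_n$ is well placed: the claim is an \emph{exact} identity, so those pieces must cancel on the nose rather than asymptotically, and that is where most of the actual work in \cite{tp2} goes.
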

\subsection{Main result}\label{mrtqft}
It is easy to see that, for the remaining non generic values of $t$, $T_\gamma$ is not a standard Toeplitz operator. Nevertheless, it happens that it is an $a$-Toeplitz one.

\begin{theorem}\label{main}[\cite{tp2}]

Let again $\Sigma$ be either the once punctured torus or the $4$-times punctured sphere. 
For \textbf{all values of $t$}, the sequence of matrices $(T_r^{\gamma})$ are the matrices in the basis $\{\psi^N_n\}|_{n=0,\dots,N-1}$ of a family of $a$-Toeplitz operators on $\mathcal H_N^a$ with 
symbol $\sigma^T_{T_r^\gamma}:\ L^2(\R,dx)\to L^2(\R,dx)$ satisfying, away of the two poles,

\be\label{princ}
\sigma^T_{T_r^\gamma}(z)=f_\gamma(e^{-ix}z)
+O(\sqrt\hbar)
\ee
where $f_\gamma$ is the trace function defined by \eqref{fgamma}.

 \end{theorem}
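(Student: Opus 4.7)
The plan is to combine two previously established facts with the new $a$-Toeplitz calculus. First, from \cite{mp} the curve operator $T_r^\gamma$ is known to be a trigonometric matrix in the basis $\{\phi_n\}$: its non-zero coefficients lie in a bounded band $|k|\le K(\gamma)$ and are given by explicit formulas (involving Kauffman-bracket evaluations and quantum $6j$-symbols) admitting a semiclassical expansion as $N\to\infty$. Concretely one writes $(T_r^\gamma)_{n,n+k}=\gamma_k(n\hbar)+O(\sqrt\hbar)$ for explicit functions $\gamma_k:(0,1)\to\mathbb{C}$ such that $\sum_k\gamma_k(\tau)e^{ik\theta}=f_\gamma(z)$ with $z=\sqrt{\tau/(1-\tau)}e^{i\theta}$ the stereographic coordinate. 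This identity is essentially the classical limit of the skein evaluation that defines $T_\gamma$, and is precisely the one verified in \cite{mp} for generic $t$.

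Second, assemble from these $\gamma_k$ the operator $\mathcal N_\gamma$ of Definition \ref{defsym}. By construction, $T_r^\gamma-\mathcal N_\gamma$ is a trigonometric matrix whose coefficients are $O(\sqrt\hbar)$ uniformly in $n$, hence its operator norm on $\mathcal H_N^a$ is $O(\sqrt\hbar)$ (the equivalence between matrix and operator norms in the $a$-renormalized scalar product follows from the bounds on $C_n^N$ coming from \eqref{casymp}--\eqref{casymp1}). Applying Theorem \ref{onemain} expresses $\mathcal N_\gamma$ exactly as an $a$-Toeplitz operator with symbol $\sigma[\mathcal N_\gamma]$ given explicitly by \eqref{whynot}--\eqref{whyes}, so that $T_r^\gamma=\mbox{Op}_a^T(\sigma[\mathcal N_\gamma])+O(\sqrt\hbar)$.

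Third, extract the principal symbol at a point $z\in S^2$ bounded away from the two poles. In this regime \eqref{casymp} gives $C_n^N/C_{n+k}^N=1+O(\hbar)$ uniformly for $|k|\le K$, so the factor $\mu_k\bigl(\tau(z)/\hbar-i\partial_x\bigr)$ appearing in \eqref{whyes} equals $\mathbf 1+O(\hbar)$ as a bounded operator on $L^2(\mathbb{R})$. The symbol therefore reduces to
\[
\sigma[\mathcal N_\gamma](z)=\sum_{k}\frac{e^{ikx}}{\bar z^{k}}\,\gamma_k(\tau(z))+O(\hbar),
\]
which after inserting the polar form $\bar z^{-k}=((1-\tau)/\tau)^{k/2}e^{ik\theta}$ coincides at leading order with $\sum_k\gamma_k(\tau(z))e^{ik(\theta(z)-x)}=f_\gamma(e^{-ix}z)$.

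The main obstacle is the uniform control of the $O(\sqrt\hbar)$ remainder on the matrix coefficients. For generic $t$, the stationary-phase analysis of the relevant $6j$-symbols produces an $O(\hbar)$ remainder and yields the standard Toeplitz symbol as in \cite{mp}. At non-generic $t$ the $6j$-coefficients pass through turning points where the standard asymptotic expansion degenerates and only a $\sqrt\hbar$-scale bound is available; the $a$-Toeplitz framework is tailored precisely for this situation, since the extra operator factors $\mu_k$ (built out of the ratios $C_n^N/C_{n+k}^N$) absorb exactly the near-pole behavior that would otherwise prevent the matrix from being a genuine Toeplitz operator. The bulk of the technical work therefore lies in (i) a uniform stationary-phase analysis of the coefficients $\gamma_k^{(N)}$ across these transition regimes, and (ii) the combinatorial identification $\sum_k\gamma_k(\tau)e^{ik\theta}=f_\gamma(z)$, for which the once-punctured torus and the $4$-punctured sphere are precisely the cases where $\mathcal M(\Sigma,t)\simeq S^2$ makes the identification tractable.
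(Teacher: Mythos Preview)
The paper does not actually prove this theorem: it is stated with an explicit citation to \cite{tp2} and no argument is given here. So there is no ``paper's own proof'' to compare against; the surrounding text only sets up the $a$-Toeplitz machinery (Definition~\ref{atop}, Definition~\ref{defsym}, Theorem~\ref{onemain}) and then quotes the result.

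Your outline is the natural one suggested by that machinery, and the architecture (trigonometric matrix $\Rightarrow$ $\mathcal N_\gamma$ via Theorem~\ref{onemain} $\Rightarrow$ read off the principal symbol away from the poles) is correct. Two points deserve more care, however. First, in your third step the leading symbol you obtain from \eqref{whyes} is $\sum_k e^{ikx}\bar z^{-k}\gamma_k(\tau(z))$, which carries an extra factor $|z|^{-k}$ and has $e^{ik(\theta+x)}$ rather than $e^{ik(\theta-x)}$; compare with Section~\ref{classical}, where the paper itself writes the limit as $\gamma(\tau,\theta+x)$. The identification with $f_\gamma(e^{-ix}z)$ therefore requires that the $\gamma_k$ coming from the curve-operator matrix are not the bare Fourier coefficients of $f_\gamma$ but absorb the $|z|^{k}$ weight, and one has to track the sign convention; you pass over this silently. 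Second, your claim that the $O(\sqrt\hbar)$ remainder on matrix entries transfers to an $O(\sqrt\hbar)$ operator-norm bound on $\mathcal H_N^a$ is true for a fixed finite band width $K(\gamma)$, but you should say explicitly that it uses both the uniform bound on $C_n^N$ from \eqref{casymp}--\eqref{casymp1} \emph{and} the Schur test on the band, not just the former. These are bookkeeping issues rather than gaps in the strategy.
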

Note that, far from the singularities (poles), the algebra of principal symbols is commutative and one can associate canonically to the symbol $\sigma^T_{T_r^\gamma}(z)=f_\gamma(e^{-ix}z)$ the trace function $f_\gamma(z)$ by evaluation at $x=0$ of the potential $f_\gamma(e^{-ix}z)$.
 \subsection{Classical limit and underlying ``phase-space"}\label{classical}

We can rewrite the general structure of the symbol of an $a$-Toeplitz operator $T$
has the form (near the south pole where $\tau\sim |z|\sim 0$)
\[
\sigma(z)=S(1-i\frac\hbar{\tau(z)}\partial_x,x+\theta,\tau(z)-i\hbar\partial_x,\hbar)
\]
where the function $S$ is $2\pi$ periodic in the second variable and the quantization present in the two first variables is the one of differential calculus.

The function $S$ satisfies
\[
S(1+\xi,x+\theta,\tau(z)+\xi,\hbar)\to
S(1,x+\theta,\tau(z),\hbar)=\gamma(\tau(z),\theta+x)\ \mbox{ as }\xi\to 0,
\]
where $\gamma(\tau,\theta,\hbar)$ is the so-called naive symbol of $T$.

As $\hbar\to 0$, $z\neq 0$, 
\[
\sigma(z)\to
\gamma(\tau,\theta+x)
\]
but the limit $\hbar,z\to 0$ is multivalued. Indeed as 
\[
\left\{\begin{array}{rcl}
\hbar&\to& 0\\
z&\to& 0\\
\frac\hbar{\tau(z)}&=&\hbar_0
\end{array}
\right.
\]
we have
\[
\sigma(z)\to
S(1-i\hbar_0\partial_x,e^{i(x+\theta)},0,0).
\]
And the ``classical" noncommutative multiplication for the function $S$ is given by:
\begin{eqnarray}
S\# S'(1-\hbar_0\xi,\theta+x,\tau,0)
&=&
S(1-\hbar_0\xi,\theta+x+i\partial_{\xi'},\tau,0)
S'(1-\hbar_0\xi',\theta+x,\tau,0)|_{\xi'=\xi}\nonumber\\
&:=&
S(1-\hbar_0\xi,\theta+x+i\overset{\rightarrow}{\partial_\xi},\tau,0)
S'(1-\hbar_0\xi,\theta+x,\tau,0)\nonumber
\end{eqnarray}
This  defines  the classical phase-space, as a noncommutative algebra of functions i.e. a noncommutative blow up of the singularity.
\vskip 1cm
In other words, an algebra of functions with a  multiplication law commutative except at certain points of singularity where it becomes noncommutative.

\vskip 1cm
%
%
%

 \section{Long time semiclassical evolution}\label{longsemevo}
 \newcommand{\st}{\Xi}
 In this section we  consider the long time  semiclassical evolution through the linear Schr\"odinger equation, or more precisely to the associated 
von Neumann equation
\be\label{von2}
i\hbar\frac d {dt}O^t=[O^t,H],
\ee
where $H$ is a Schrödinger operator $H=-\hbar^2\Delta+V$ with smooth confining  $V$  ($V(x)\to+\infty$ as
$\vert x\vert\to\infty$)- or a more general   semiclassical pseudodifferential operator of principal symbol $h$,
elliptic and selfadjoint on the Hilbert space $L^2(\mathcal M)$, where $\mathcal M$ is a manifold of dimension $n+1$.

It is well known \cite{BGP,BR} that, for times smaller than $C\log{\frac 1 \hbar}$, $C$ small enough, 
$O^t$ is still a Weyl (semiclassical) pseudodifferential operator and that its principal symbol is the push-forward 
of the initial one by the Hamiltonian flow associated to the principal symbol $h$ of $H$.
It is easy to get convinced  
that this is already not true for large values of $C$ (greater than $\frac 2 3$ 
times the natural Liapunov exponent of the flow).

Through this paper we will suppose that the Hamiltonian flow generated by $h$ is Anosov, 
and moreover that there exists a smooth action 
of $\mathbb R^{2(n+1)}$ on $T^*\mathcal M$, 
$(\mu,\nu;s,p)\in\mathbb R^{2(n+1)}\to
\st^\nu\circ\Lambda^\mu\circ T^{s,p}$, satisfying

\be\label{horf}
\st^\nu\circ\Lambda^\mu\circ T^{s,p}\circ\Phi^t=\Phi^t\circ\st^{e^{-\lambda t}\nu}\circ\Lambda^{e^{\lambda t}\mu}\circ T^{s+tp,p},\ \lambda>0.
\ee
(\ref{horf}) is obviously mimicked on the case of the geodesic flow on surface of constant curvature ($\Lambda^\mu$ (resp. $\st^\nu$) 
is the (resp. anti-)horocyclic flow,
 $T^{s,0}$ is the geodesic one and $T^{0,p}$ corresponds to a shift of energy),
  but we will not suppose that we are in this case and we'll 
use only (\ref{horf}). Moreover we will restrict this note to the bidimensional situation $n=1$ (the extension to any $n$,
keeping (\ref{horf}), is straightforward) and take $\mathcal M=\mathbb R^{n+1}$. The proofs are local and therefore are easily
adaptable to the non flat situation using the results of \cite{pu}. Finally we could extend some of our results to the case of variable Liapunov exponents.

We will suppose that $O^{t=0}$ is a semiclassical pseudodifferential operator with smooth symbol supported in $h^{-1}(I)$ 
 for some interval $I\subset\mathbb R$ such that $h^{-1}(I)$ is compact.

 We first define, associated to $a\in\mathcal D(\mathbb R^{n+1}),\ \Vert a\Vert_{L^2}=1$, and the family of so-called (Gaussian) coherent states 
 
 \be\label{defcoh}
 \vp_{(p,q)}(x):=(\pi\hbar)^{-\frac{n+1}4}e^{-\frac{(x-q)^2}{2\hbar}}e^{i\frac{px}\hbar}, \ (p,q)=z\in\mathbb R^{2(n+1)},
 \ee 
 the family of Lagrangian states:
 \be\label{etat1}
\psi_z^a:=\int \exp{\big(\frac{i}\hbar\int\limits_z^{\Lambda^\mu\circ T^{s,0}(z)}\eta\big)}a(\mu,s)\vp_{\Lambda^\mu\circ T^{s,0}(z)}\frac{d\mu ds}{\hbar^{\frac{n+1}2}},\ \eta:=\xi.dx\
(\mbox{symplectic potential on}\ T^*\mathbb R^n).
\ee
It is easy to see that, microlocally in the interior of $I$ and for a support of $a$ small enough, the operator defined by 
$\int_{h^{-1}(I)}\vert\psi_z^a\rangle\langle\psi_z^a\vert \frac {d^{n+1}z}{\hbar^{n+1}}$ 
(here we denote by $\vert\psi\rangle\langle\psi\vert$ the operator:$\vp\to\langle\vp,\psi\rangle\psi$)
is equal to the identity modulo $\hbar^\infty$.

The key idea of this paper will be to write any pseudodifferential operator in the form
\be\label{pseud}
O=\int_{h^{-1}(I)}\vert\psi_z^{O_za}\rangle\langle\psi_z^a\vert \frac{d^{n+1}z}{\hbar^{n+1}}
\ee
for a suitable family of bounded pseudodifferential operators $O_z$.

The interest of such a formulation will be the fact that it is preserved by the evolution through (\ref{von2}). 
More precisely we prove in theorem \ref{1} that, for $n=1$ and
 any $0\leq t\leq C\hbar^{-2+\epsilon},\ \epsilon>0,$ there exists a bounded operator $O_z^t$ such that the solution $O(t)$ 
 of (\ref{von2}), $O(t=0)$ being
 microlocalized on $h^{-1}(I)$, satisfies

\be\label{oper}
\Vert \int_{h^{-1}(I)}\vert\psi_z^{O_z^ta}\rangle\langle\psi_z^a\vert \frac{d^{n+1}z}{\hbar^{n+1}}
-O(t)\Vert_{\mathcal B(L^2)}=O(\hbar^\infty)
\ee
(valid also for $O=$Identity$|_{h^{-1}(I)}$). This suggests to consider $O_z^t$ as the symbol of $O(t)$ at the point $z$.

In fact we will identify the symbol of $O(t)$ as a noncommutative object related to the space of leaves of the unstable 
foliation of the dynamics generated by the principal symbol $h$ of $H$.
Let us give the 
motivation behind this identification.
\subsection{Propagation}\label{mainresultnonc}
\begin{theorem}\label{1} (\cite{tp})\ 


Let us take $n=1$. There exists  bounded smooth and explicitly computable functions on $\mathbb R^{4}$, $\mathcal O^t_z\sim \mathcal O+\sum_{j=1}^\infty \mathcal O^t_j\hbar ^j$, 
such that, uniformly for 
$0\leq t\leq\hbar^{-2+\epsilon}$,
\[
\Vert e^{-i\frac{tH}\hbar}Oe^{+i\frac{tH}\hbar}-\int_{h^{-1}(I)}\vert\psi_z^{\widetilde {O_z}^ta}\rangle\langle\psi_z^a\vert
\frac{dz}{\hbar^2}\Vert_{\mathcal B(\mathcal H)}=O(\hbar^\infty),
\mbox{ $\widetilde {O_z}^t$ has total semiclassical Weyl symbol}
\]
\be\label{infty}
 \widetilde {\mathcal O^t_z}(\xi;x):=\mathcal O^t(\st^{e^{\lambda t}\nu}\circ\Lambda^{e^{-\lambda t}\mu}
\circ T^{s+tp,p}\circ\Phi^t(z)),\ x=(\mu,s),\ \xi=(\nu,p).
 \ee
\end{theorem}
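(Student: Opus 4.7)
The plan is to substitute the ansatz (\ref{pseud}) with a time-dependent operator symbol $\widetilde{O_z^t}$ into the von Neumann equation (\ref{von2}), derive a transport hierarchy for $\widetilde{O_z^t}$, and then verify that the explicit formula (\ref{infty}) solves this hierarchy modulo $O(\hbar^\infty)$ for $t \leq \hbar^{-2+\epsilon}$. The guiding principle is that (\ref{horf}) provides a complete set of commuting dynamical directions in which to diagonalize the Anosov dynamics, and the Lagrangian state $\psi_z^a$ in (\ref{etat1}) is chosen precisely so that $e^{-itH/\hbar}$ acts simply on it.

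First, I would establish a refined propagation of the Lagrangian states themselves,
\[
e^{-itH/\hbar}\, \psi_z^a \;=\; \psi_{\Phi^t(z)}^{R_t a} \,+\, O(\hbar^\infty),
\]
where $R_t$ is an explicit linear rescaling of the symbol $a$ induced by the linearization of the flow along the unstable leaf through $z$. The reason one can go far beyond the Ehrenfest barrier here, unlike for pseudodifferential Egorov, is that $\psi_z^a$ is already a Lagrangian distribution supported on the unstable manifold through $z$, so $\Phi^t$ merely stretches its support by $e^{\lambda t}$, a stretching which is absorbed into a dilation of the argument of $a$. The formula is obtained by stationary phase in (\ref{etat1}) applied after insertion of the propagator, using the $\Phi^t$-invariance of the symplectic potential $\eta$ restricted to unstable leaves.

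Combining this with the ansatz (\ref{pseud}) and changing variables $z \mapsto \Phi^{-t}(z)$ brings the conjugated operator back to the form $\int |\psi_z^{(\cdot)}\rangle\langle\psi_z^a|\,dz/\hbar^2$; the intertwining relation (\ref{horf}) is then applied to reabsorb the rescalings $R_t$ into a translation of the point at which the Weyl symbol of $\widetilde{O_z^t}$ is evaluated. This produces precisely the composition $\Xi^{e^{\lambda t}\nu}\circ\Lambda^{e^{-\lambda t}\mu}\circ T^{s+tp,p}\circ\Phi^t(z)$ appearing in (\ref{infty}): the $e^{\lambda t}\nu$ and $e^{-\lambda t}\mu$ factors come from the Fourier-dual action of the $R_t$-dilation inside the Weyl quantization, while the $T^{s+tp,p}$ shift records the energy-dependent drift along the flow. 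The higher-order corrections $\mathcal{O}_j^t\hbar^j$ are then constructed recursively by matching orders in $\hbar$ in the transport hierarchy.

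The main obstacle is maintaining the $O(\hbar^\infty)$ estimate on the polynomial time window $t \leq \hbar^{-2+\epsilon}$, vastly beyond the logarithmic Ehrenfest time available to ordinary pseudodifferential Egorov. The key is that $\widetilde{O_z^t}$ is operator-valued: the exponential stretching $e^{\lambda t}$ in (\ref{infty}) does not represent loss of smoothness of a scalar symbol but only a dilation within the operator-symbol class, and the natural norms on that class (weighted by the Gaussian profiles of the coherent states $\varphi_z$) remain under control. The technical heart of the proof is to propagate these weighted bounds through the recursion for $\mathcal{O}_j^t$, checking at each order that the associated transport equation is well-posed along the orbits of $\Phi^t$, and that the remainder at step $j$ contributes an error of size $\hbar^{j+1}$ uniformly on the whole interval $t \leq \hbar^{-2+\epsilon}$.
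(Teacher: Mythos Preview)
The paper does not actually prove Theorem \ref{1} in its text: the statement carries the citation \cite{tp} and is imported from that reference, with only the surrounding heuristic discussion (the ansatz (\ref{pseud}), the intertwining relation (\ref{horf}), and the remark that the Lagrangian states $\psi_z^a$ are built on unstable leaves) serving as motivation. So there is no in-paper proof to compare against line by line.

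That said, your sketch is faithful to the mechanism the paper describes. The central idea you identify --- that $\psi_z^a$ is already a Lagrangian distribution on the unstable manifold, so $e^{-itH/\hbar}$ acts on it by transporting the base point and dilating the symbol $a$ rather than by destroying microlocality --- is exactly the point the paper emphasizes when it says ``the interest of such a formulation will be the fact that it is preserved by the evolution through (\ref{von2})''. Your use of (\ref{horf}) to reabsorb the dilation into the argument of the Weyl symbol, producing the composition in (\ref{infty}), matches the paper's later computation in Section \ref{noncom} and the discussion around (\ref{propbeta}). The recursive construction of the $\mathcal O^t_j$ and the observation that the exponential factors $e^{\pm\lambda t}$ live inside the operator-valued symbol (hence are controlled in operator norm rather than in $C^k$ seminorms of a scalar symbol) are likewise consistent with the paper's explanation of why the time window extends to $\hbar^{-2+\epsilon}$.

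One point you leave implicit and which would need care in a full proof: the propagation formula $e^{-itH/\hbar}\psi_z^a = \psi_{\Phi^t(z)}^{R_t a} + O(\hbar^\infty)$ is not quite right as stated, since the coherent-state propagation generates subprincipal corrections that feed into the $\hbar^j$ terms of $\mathcal O^t_j$; the paper's reference to \cite{pu} for the underlying coherent-state machinery suggests that the actual argument tracks these corrections through a WKB-type expansion of the Lagrangian state rather than a single clean rescaling. But as a strategic outline your proposal is aligned with what the paper indicates.
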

\begin{remark}\label{r}
As a Corollary of the proof of Theorem \ref{1} it is easy to prove that a similar result is still valid when we replace 
the functions $a$ by a 
$\hbar$-dependent ones of the form $a_\hbar(\cdot)=\hbar^{-n\epsilon'/2}a(\hbar^{-\epsilon'}\cdot)$ for $\epsilon'\geq 0$ small enough. 
\end{remark}
\subsection{Noncommutative geometry interpretation}\label{noncom}
We first prove the following Lemma.
\vskip 1cm
\begin{lemma}\label{deflemm}
Let us define
\be\label{symbole}
\sigma_{\mathcal O^t}(z,z'):=F(z_1,x,s;x'.s')
\ee where $z=\Lambda^xT^{s,0}(z_1),\ z'=\Lambda^{x'}T^{s',0}(z_1)$ and $F(z_1,.,.)$ is the integral symbol of an operator of Weyl symbol given by (\ref{infty}). Then
$\sigma_{\mathcal O^t}(z,z')$ doesn't depend on $z_1$.
\end{lemma}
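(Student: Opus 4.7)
My plan is as follows. By construction, $F(z_1,\cdot;\cdot)$ should be the Schwartz kernel on $\R^2\times\R^2$ of the Weyl quantization of $\widetilde{\mathcal O_{z_1}^t}(\xi;x)=\mathcal O^t\bigl(\st^{e^{\lambda t}\nu}\Lambda^{e^{-\lambda t}\mu}T^{s+tp,p}\Phi^t(z_1)\bigr)$ with $x=(\mu,s)$, $\xi=(\nu,p)$. I want to show that, under a shift of reference point $\tilde z_1=\Lambda^{x_0}T^{s_0,0}(z_1)$ along the same weak unstable leaf, this kernel transforms covariantly under the induced reparametrization of $(\mu,s)$-space, so that its value at the pair $(z,z')$ is unchanged.

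The algebraic input I need comes from specializing (\ref{horf}) to $\nu=p=0$: since $T^{s,0}$ commutes with $\Phi^t$ while $\Phi^t$ conjugates $\Lambda^{x_0}$ to $\Lambda^{e^{-\lambda t}x_0}$, one extracts the group law $T^{s,0}\Lambda^{x_0}=\Lambda^{e^{\lambda s}x_0}T^{s,0}$. Writing then $z=\Lambda^x T^{s,0}(z_1)=\Lambda^{\tilde x}T^{\tilde s,0}(\tilde z_1)$ forces $\tilde s=s-s_0$ and $\tilde x=x-e^{\lambda\tilde s}x_0$, and similarly $(\tilde x',\tilde s')=(x'-e^{\lambda(s'-s_0)}x_0,s'-s_0)$. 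In particular the $(\Lambda,T^{\cdot,0})$-action on the leaf is free, so the parameters are uniquely prescribed once the base point is fixed.

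The heart of the argument will then be equivariance of the Weyl symbol. Again from (\ref{horf}) one has $\Phi^t(\tilde z_1)=\Lambda^{e^{-\lambda t}x_0}T^{s_0,0}\Phi^t(z_1)$. Substituting this into the formula for $\widetilde{\mathcal O_{\tilde z_1}^t}$ at the shifted Weyl arguments $(\tilde\mu,\tilde s,\tilde\nu,\tilde p)$ and absorbing the factor $\Lambda^{e^{-\lambda t}x_0}T^{s_0,0}$ into $\Lambda^{e^{-\lambda t}\tilde\mu}T^{\tilde s+t\tilde p,\tilde p}$ through the same group law yields exactly $\widetilde{\mathcal O_{z_1}^t}(\nu,p;\mu,s)$, provided the dual variables $(\nu,p),(\tilde\nu,\tilde p)$ are related by the inverse linear transpose. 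Since the change $(\mu,s)\mapsto(\tilde\mu,\tilde s)$ is a symplectomorphism of $T^*\R^2$ — the $(\Lambda,T^{\cdot,0})$-subgroup acting on the leaf is Hamiltonian — Weyl quantization is covariant under it, and the equality of total symbols lifts to the equality of Schwartz kernels $F(\tilde z_1,\tilde x,\tilde s;\tilde x',\tilde s')=F(z_1,x,s;x',s')$, which is the claim.

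The main obstacle I foresee is the bookkeeping around the affine (rather than linear) reparametrization of the $(\mu,s)$-factor: the $\mu$-shift involves the $s$-dependent factor $e^{\lambda\tilde s}$, so the change of variables genuinely mixes position and momentum under the associated cotangent lift. Ensuring that this lift is a symplectomorphism and that Weyl quantization remains \emph{exactly} covariant under it (not merely at principal order) is what one needs to stay consistent with the $O(\hbar^\infty)$ statement of Theorem~\ref{1}; this is the point where the full content of (\ref{horf}), and not only its principal-symbol shadow, is used, and it is essentially the classical cocycle identity that will later encode the groupoid structure of the space of leaves discussed in Section~\ref{noncom}.
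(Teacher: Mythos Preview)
Your overall strategy is the same as the paper's --- show that changing the base point $z_1$ along the leaf induces a coordinate change under which the Weyl kernel is covariant --- but you have manufactured a difficulty that is not present in the paper's hypotheses.

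The paper assumes a smooth action of the \emph{abelian} group $\mathbb R^{2(n+1)}$, $(\mu,\nu;s,p)\mapsto\Xi^\nu\circ\Lambda^\mu\circ T^{s,p}$. In particular $\Lambda^{x_0}$ and $T^{s_0,0}$ commute. Your claimed relation $T^{s,0}\Lambda^{x_0}=\Lambda^{e^{\lambda s}x_0}T^{s,0}$ does \emph{not} follow from (\ref{horf}): that identity only records how $\Phi^t$ conjugates the $\mathbb R^{2(n+1)}$-action, not how the generators of the action commute among themselves. (Your extraction implicitly identifies $T^{s,0}$ with $\Phi^s$, which the paper does not assume.) With the abelian hypothesis, changing $z_1$ to $\tilde z_1=\Lambda^{x_0}T^{s_0,0}(z_1)$ gives simply $\tilde s=s-s_0$, $\tilde x=x-x_0$, a pure translation in $(\mu,s)$. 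One then checks directly from (\ref{infty}) and (\ref{horf}) that $\widetilde{\mathcal O^t_{\tilde z_1}}(\xi;\mu,s)=\widetilde{\mathcal O^t_{z_1}}(\xi;\mu+x_0,s+s_0)$, and the standard translation covariance of the Weyl kernel finishes the proof in one line --- which is exactly what the paper says.

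Your ``main obstacle'' is therefore self-inflicted. Worse, had your non-abelian reparametrization been the actual situation, your resolution would have a genuine gap: Weyl quantization is exactly covariant only under \emph{affine} symplectic maps (translations and the metaplectic representation), not under arbitrary symplectomorphisms of $T^*\mathbb R^2$. The cotangent lift of $(\mu,s)\mapsto(\mu-e^{\lambda(s-s_0)}x_0,\,s-s_0)$ is nonlinear in $(s,p)$, so exact covariance would fail and you would only recover the lemma modulo $O(\hbar)$, not as an identity.
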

The Lemma is easily proven by the translation invariance properties of the Weyl quantization procedure.

We want to identify $\sigma_{\mathcal O^t}$ as an element of the
crossed product $\mathcal A$ of the algebra $\mathcal C_I$ of continuous functions on $h^{-1}(I)$ by the group $\mathbb R^{n+1}$ under the action (\ref{horf}).
A function $\sigma(z,z'),\  z'\in\Lambda_z$ cam be seen as a continuous function from $G$ to $\mathcal A$ by 
\be
f(\mu,t)(z)=\sigma(z,z'),\ \ \ z'=\Lambda^\mu\circ T^t(z).
\ee
Moreover we get an action of $G$ on $\mathcal C_I$ by, $\forall g\in G$,
\be
\alpha_{(\mu,t)} h(z)=h(\Lambda^\mu\circ T^t(z)).
\ee
The algebra structure on $\mathcal C_I\rtimes_\alpha G$ is given by the $\star$-product $(f_1\star f_2)(g)=\int f_1(g_1)\alpha_{g_1}(f_2(g_1^{-1}g))dg_1$.

An easy computation, using Theorem \ref{1} and the symbolic property of Weyl quantization shows easily that, at leading order and for all $0\leq t_1,t_2\leq\hbar^{-2+\epsilon}$,
\be\label{symb1}
\sigma_{\mathcal O^{t_1} O^{t_2}}\sim \sigma_{\mathcal O^{t_1}}\star\sigma_{\mathcal O^{t_2}}.
\ee
Moreover the norm $\vert\vert\vert.\vert\vert\vert$ on $\mathcal A$ is equal to the supremum over $z$ of the operator norm on $L^2(\mathbb R^{n+1})$ of the operator 
of integral kernel $\sigma(z,z')$ (more precisely $\mathcal A\rtimes_\alpha G$ is the completion of the algebra of compactly supported kernels
$\sigma(z,z')$ with respect to the norm $\vert\vert\vert.\vert\vert\vert$.

We can also give a corresponding interpretation of the vectors $\psi_z^a$. 
Let us define $\alpha\in\mathcal A$ by, for 
$z'=\Lambda^\mu\circ T^{s,0}(z)$ , $\alpha(z,z'):=a(\mu,s)$. 
Then   $\psi_z^a=\psi^\alpha:=\int_{\Lambda_z} e^{\frac i\hbar\int_z^{z'}\eta}\alpha(z',z)\vp_{z'}dz'$.

\newcommand{\eto}{T}
We associate to any element $\gamma$ of  $\mathcal A$ an operator $\eto(\gamma)$ on $L^2(\mathbb R^{n+1})$ defined by
\be\label{def1}
\eto(\gamma):=\int_{h^{-1}(I)} \vert\psi^{\gamma\star\alpha}\rangle\langle\psi^\alpha\vert\frac{dz}{\hbar^{n+1}}.
\ee
In particular a bounded pseudodifferential operator is such an operator (with $\gamma\sim\sum_{j=0}^\infty\gamma_j\hbar^j$). 
Moreover, by definition of the norm $\vert\vert\vert.\vert\vert\vert$, $\eto(\gamma)$ is a bounded operator for all 
$\gamma\in \mathcal A\rtimes_\alpha G$ and it is easy to see, using arguments of the proof of theorem \ref{1}, 
that $\eto(\gamma)$ is bounded \textit{uniformly with} $\hbar\in [0,1]$ for $\gamma$ compactly supported. 
Noting that (\ref{def1}) is a way of writing (\ref{pseud}) we get:
\vskip 1cm
\begin{theorem}\label{2}
For $0\leq t\leq\hbar^{-2+\epsilon}$ there exist $\Gamma^t$ of ``symbol" $\gamma^t\sim\sum_{j=0}^\infty\gamma^t_j\hbar^j\in\mathcal A$ such that
\be
\eto(\gamma)^t:=e^{-i\frac{tH}\hbar}\eto(\gamma) e^{+i\frac{tH}\hbar}=\eto(\gamma^t)+O(\hbar^\infty)\ \ \ \ \ \mbox{ with }
\ \ \ \ \gamma^t_0=(\Phi^t)^{\otimes 2}\#\gamma_0+O(\hbar^\epsilon).
\ee
Moreover the leading order symbol of $\eto(\gamma)^{t_1}\eto(\gamma)^{t_2}$ is $\gamma^{t_1}\star\gamma^{t_2}$.
\end{theorem}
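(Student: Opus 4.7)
The strategy is to read off Theorem \ref{2} directly from Theorem \ref{1} together with the groupoid interpretation of Lemma \ref{deflemm}. The whole point is that the representation $\gamma\mapsto \eto(\gamma)$ in \eqref{def1} is manifestly stable under Heisenberg conjugation, up to a replacement of the crossed product element.

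First I would start from $\eto(\gamma)$ written as $\int |\psi_z^{\gamma\star\alpha}\rangle\langle\psi_z^\alpha|\,dz/\hbar^{n+1}$ and apply Theorem \ref{1} to the bounded pseudodifferential operators $O_z$ appearing inside $|\psi_z^{O_za}\rangle\langle\psi_z^a|$ in \eqref{pseud}. This gives, modulo $O(\hbar^\infty)$, that $e^{-itH/\hbar}\eto(\gamma)e^{+itH/\hbar}$ has an integral representation $\int|\psi_z^{\widetilde{O_z^t}a}\rangle\langle\psi_z^a|\,dz/\hbar^{n+1}$ with $\widetilde{O_z^t}$ whose total Weyl symbol is \eqref{infty}. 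Using Lemma \ref{deflemm}, the associated kernel $\sigma_{\mathcal O^t}(z,z')$ depends only on the relative groupoid coordinates $(\mu,s)$ determined by $z'=\Lambda^\mu\circ T^{s,0}(z)$. This is exactly what it takes to declare $\sigma_{\mathcal O^t}$ to be an element $\gamma^t\in\mathcal A=\mathcal C_I\rtimes_\alpha G$, and to conclude $e^{-itH/\hbar}\eto(\gamma)e^{+itH/\hbar}=\eto(\gamma^t)+O(\hbar^\infty)$. The asymptotic expansion $\gamma^t\sim\sum_j\gamma^t_j\hbar^j$ is then inherited from the corresponding expansion of $\widetilde{O_z^t}$ provided by Theorem \ref{1}.

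Second, for the leading order identification $\gamma^t_0=(\Phi^t)^{\otimes 2}\#\gamma_0+O(\hbar^\epsilon)$, I would set $\hbar=0$ in \eqref{infty}: the principal part of $\widetilde{\mathcal O_z^t}$ becomes the evaluation of $\mathcal O^0=\mathcal O$ at the point $\Xi^{e^{\lambda t}\nu}\circ\Lambda^{e^{-\lambda t}\mu}\circ T^{s+tp,p}\circ\Phi^t(z)$. Using the intertwining \eqref{horf}, one rewrites this composition as $\Phi^t\circ\Xi^\nu\circ\Lambda^\mu\circ T^{s,p}$, so that the flow action is factored out both on the base point $z$ and on the groupoid element $(\mu,s)\mapsto(\Phi^t)_*(\mu,s)$. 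This is what $(\Phi^t)^{\otimes 2}$ means on $\mathcal A$: pushing forward both the source and the range arguments of the kernel $\sigma(z,z')$. The $O(\hbar^\epsilon)$ defect comes from the anisotropic rescaling by $e^{\pm\lambda t}$ on the horocyclic/anti-horocyclic directions of the support of $a$, which with $t\leq\hbar^{-2+\epsilon}$ combined with Remark \ref{r} on the admissible $\hbar$-scaling of $a$ remains subprincipal.

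Third, for the composition formula, I would combine the identity \eqref{symb1} of the previous subsection, $\sigma_{\mathcal O^{t_1}\mathcal O^{t_2}}\sim\sigma_{\mathcal O^{t_1}}\star\sigma_{\mathcal O^{t_2}}$, with the dictionary $\sigma_{\mathcal O^t}\leftrightarrow\gamma^t$ just established. Since $\eto(\gamma)^{t_1}\eto(\gamma)^{t_2}=\eto(\gamma^{t_1})\eto(\gamma^{t_2})+O(\hbar^\infty)$ by the first step, and this last product, via Theorem \ref{1} applied to the pseudodifferential pieces, again lies in the class $\eto(\cdot)$ with symbol obtained from the symbolic calculus of Weyl quantization, one identifies its principal symbol with the convolution product $\gamma^{t_1}\star\gamma^{t_2}$ in $\mathcal A$.

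The main obstacle is the uniform control of the remainder $O(\hbar^\infty)$ for times as large as $\hbar^{-2+\epsilon}$: the exponential factors $e^{\pm\lambda t}$ in \eqref{infty} dilate the horocyclic variables beyond the naive semiclassical scale, and the subprincipal terms in the asymptotic expansion can a priori grow in $t$. One therefore has to check that they are controlled by the crossed-product norm $|||\cdot|||$ rather than by individual pseudodifferential seminorms, i.e.\ that the growth is absorbed by the $G$-action built into $\mathcal A$. This is exactly the point where the noncommutative reformulation is essential: the gain of the $\rtimes_\alpha G$ structure is to trade Liapunov growth in any fixed coordinate patch for boundedness in the groupoid norm. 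Once this is verified, steps one through three combine to give Theorem \ref{2}.
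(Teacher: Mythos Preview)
Your proposal is correct and follows exactly the route the paper takes: the paper states Theorem \ref{2} immediately after the sentence ``Noting that (\ref{def1}) is a way of writing (\ref{pseud}) we get:'', so the entire argument is indeed to recognize $\eto(\gamma)$ as an instance of the representation \eqref{pseud}, apply Theorem \ref{1}, and reinterpret the resulting Weyl data via Lemma \ref{deflemm} and \eqref{symb1} as an element of $\mathcal A$. Your write-up is in fact more explicit than the paper's, which leaves the details implicit; the only small slip is in your use of \eqref{horf} for the leading-order identification, where the shift in the $T^{s,p}$ factor does not quite cancel as you wrote unless you track that the Lagrangian states \eqref{etat1} are built with $T^{s,0}$ (i.e.\ $p=0$ at the integration level), so the extra $tp$ term only enters through the dual variable in the Weyl symbol and is absorbed in the $(\Phi^t)^{\otimes 2}$ action.
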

%
%
%
Let us remark also that an extension on the lines of Remark \ref{r} is also valid in this framework.
\vskip 1cm

\subsection{Another groupoid approach}\label{anogroup}
The following easy computation will be the basis of the starting point of Section \ref{homo}.

Let us go back to the expression \eqref{def1}
\begin{eqnarray}
\eto(\gamma)&:=&\int_{h^{-1}(I)} \vert\psi^{\gamma\star\alpha}\rangle\langle\psi^\alpha\vert\frac{dz}{\hbar^{n+1}}\nonumber\\
&=&
\int_{h^{-1}(I)}dz\int_{\Lambda_z}dz'\int_{\Lambda_z}dz" e^{\frac i\hbar\int_{z" }^{z'}\eta}\bar\alpha(z",z)\gamma\star\alpha(z',z)|\vp_{z'}\rangle\langle\vp_{z"}|\nonumber\\
&=&
\int_{h^{-1}(I)}dz"\int_{\Lambda_{z"}}dz'\int_{\Lambda_{z"}}dz e^{\frac i\hbar\int_{z" }^{z'}\eta}\bar\alpha(z",z)\gamma\star\alpha(z',z)|\vp_{z'}\rangle\langle\vp_{z"}|\nonumber\\
&=&
\int_{h^{-1}(I)}dz\int_{\Lambda_{z}}dz'
\int_{\Lambda_{z}}dz" 
e^{\frac i\hbar\int_{z}^{z'}\eta}
\bar\alpha(z,z")\gamma\star\alpha(z',z")
|\vp_{z'}\rangle\langle\vp_{z}|\nonumber\\
&=&
\int_{h^{-1}(I)}dz\int_{\Lambda_{z}}dz'e^{\frac i\hbar\int_{z}^{z'}\eta}
\beta(z',z)
|\vp_{z'}\rangle\langle\vp_{z}|\label{betsymb}
\end{eqnarray}
with
$$
\beta(z',z):=\int_{\Lambda_{z}}dz"\bar\alpha(z,z")\gamma\star\alpha(z',z")
$$
Note that, when $a(z,z")$ is the integral kernel of a unitary operator on $L^2(\Lambda_z)$, then $\beta=\gamma$. This will be the starting point of our discussion in Section \ref{homo}.
In particular, when $\gamma(z,z"0=\delta(z,z")$ then we recover.
the usual decomposition of the identity with diagonal diade.

So we see that the transport property given by  Theorem \ref{2} that, denoting
$A(\beta):=\int_{h^{-1}(I)}dz\int_{\Lambda_{z}}dz'e^{\frac i\hbar\int_{z}^{z'}\eta}
\beta(z',z)
|\vp_{z'}\rangle\langle\vp_{z}|$ we have
\be\label{propbeta}
e^{-i\frac{tH}\hbar}A(\beta)e^{i\frac{tH}\hbar}\sim A((\Phi^t)^{\otimes 2}\#\beta).
\ee
\subsection{Extended semiclassical measures}\label{semi}

In the same way that one associates to a vector $\psi$ (or density matrix) 
the quantity $\langle\psi,\vp_z\rangle\vert^2$ considered as a measure  by the formula
$\langle\psi,O\psi\rangle=\int \mathcal O_T(z,\bar z) \vert\langle\psi,\vp_z\rangle\vert^2dz$, 
where $\mathcal O_T$ is the Toeplitz symbol of $O$, on can associate to $\psi$ (or a density matrix) a sort 
of ``off-diagonal" version by the quantity $R_\psi(z,\bar{z'}):=\langle \vp_z,\psi\rangle\langle\psi,\vp_{z'}\rangle$ for $z'\in\Lambda_z$. 

$R_\psi$ can be considered as an element of the dual of a (dense) subalgebra of $\mathcal A$ and will have better properties of semiclassical 
propagation. For sake of shortness we express the result in the case of eigenvectors of the Hamiltonian $H$, leaving the 
straightforward derivation
 for $R_{e^i\frac{tH}\hbar}\psi$ in the same topology.
\vskip 1cm
\begin{theorem}\label{3}
Let us define for $I$ compact interval of $\mathbb R$, $\mathcal D_I\sim\mathcal D(h^{-1}(I)\times \mathbb R^n)$ the subalgebra of 
smooth compactly supported elements of $\mathcal A$. 
Let $\psi$ be an eigenfunction of $H$. 

Then, restricted to $z\in h^{-1}(I)$, $R_\psi(z,\bar{z'})$ considered as a function 
on $h^{-1}(I)\times \mathbb R^n$, belongs to $\mathcal D_I'$. Moreover, in the weak-* topology and, $\forall \epsilon >0$, 
uniformly for $0\leq t\leq \hbar ^{-2+\epsilon}$
\be\label{pf}
\Phi^t\# R(z,\bar{z'})
=R(z,\bar{z'})+O(\hbar^\epsilon).
\ee

\end{theorem}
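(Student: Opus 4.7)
The strategy rests on the observation that when $\psi$ is an eigenvector of $H$, the rank-one density matrix $|\psi\rangle\langle\psi|$ is exactly invariant under the Heisenberg conjugation $e^{-itH/\hbar}\cdot e^{itH/\hbar}$. Thus \eqref{pf} should be read as the symbolic shadow of this quantum invariance, transported to the noncommutative groupoid level by Theorem \ref{2}. Concretely, I would pair $R_\psi$ against test elements $\beta\in\mathcal D_I$ using the off-diagonal representation \eqref{betsymb}:
$$
\langle\psi,A(\beta)\psi\rangle=\int_{h^{-1}(I)}\!dz\int_{\Lambda_z}\!dz'\,e^{\frac{i}{\hbar}\int_z^{z'}\eta}\,\beta(z',z)\,R_\psi(z,\bar{z'}),
$$
so that $\beta\mapsto\langle\psi,A(\beta)\psi\rangle$ is the natural pairing $\langle R_\psi,\beta\rangle$, the oscillating factor being intrinsic to the groupoid product. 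The uniform-in-$\hbar$ boundedness of $A(\beta)=\eto(\gamma)$ on $L^2$ for $\beta$ compactly supported, stated in the discussion following \eqref{def1}, yields $|\langle R_\psi,\beta\rangle|\le C\|\psi\|^2$ with $C$ controlled by finitely many seminorms of $\beta$; this shows $R_\psi\in\mathcal D_I'$ uniformly in $\hbar$.

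Next, since $H\psi=E\psi$,
$$
\langle R_\psi,\beta\rangle=\langle\psi,e^{itH/\hbar}A(\beta)e^{-itH/\hbar}\psi\rangle
$$
for every $t$. Applying Theorem \ref{2} with time $-t$, still inside the admissible window $|t|\le\hbar^{-2+\epsilon}$, gives
$$
e^{itH/\hbar}A(\beta)e^{-itH/\hbar}=A\bigl((\Phi^{-t})^{\otimes 2}\#\beta\bigr)+O(\hbar^\epsilon)
$$
in operator norm. Substituting back and reading the identity in duality,
$$
\langle R_\psi,\beta\rangle=\langle R_\psi,(\Phi^{-t})^{\otimes 2}\#\beta\rangle+O(\hbar^\epsilon)=\langle(\Phi^t)^{\otimes 2}\#R_\psi,\beta\rangle+O(\hbar^\epsilon),
$$
which is exactly \eqref{pf} tested against $\beta$, hence \eqref{pf} in the weak-$*$ topology of $\mathcal D_I'$.

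The main difficulty I foresee is converting the $O(\hbar^\epsilon)$ error of Theorem \ref{2} (a symbol/operator-level bound) into a weak-$*$ error of the same order against an arbitrary but fixed test $\beta\in\mathcal D_I$. This requires showing that only finitely many seminorms of the pullback $(\Phi^{-t})^{\otimes 2}\#\beta$ enter the estimate, and that these seminorms remain uniformly bounded for $|t|\le\hbar^{-2+\epsilon}$. The structural intertwining \eqref{horf} linearises $\Phi^t$ along the stable/unstable foliations and forces the derivatives of $(\Phi^{-t})^{\otimes 2}\#\beta$ along the non-flow directions to grow at most like $e^{\lambda|t|}$, while the Lagrangian-state construction \eqref{etat1} (together with the flexibility in Remark \ref{r}) is designed precisely to absorb this exponential growth up to the doubled Ehrenfest time $\hbar^{-2+\epsilon}$. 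Once this uniformity is secured, no further ingredient beyond the proofs of Theorems \ref{1} and \ref{2} is required, and the $O(\hbar^\epsilon)$ error appearing in \eqref{pf} is inherited in a transparent way from the leading-order propagation of the noncommutative symbol in Theorem \ref{2}.
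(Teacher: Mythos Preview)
Your approach is correct and is precisely the argument the paper has in mind: the paper does not spell out a proof of Theorem~\ref{3} but presents it as an immediate dual consequence of the propagation result Theorem~\ref{2} (and of the uniform boundedness of $\eto(\gamma)$ recorded just before it), noting only that the derivation is ``straightforward''. Your pairing of $R_\psi$ against $A(\beta)$, the use of exact invariance of $|\psi\rangle\langle\psi|$ under conjugation when $\psi$ is an eigenfunction, and the transfer of the $O(\hbar^\epsilon)$ error by duality reproduce exactly that intended route; the technical caveat you flag about controlling the seminorms of $(\Phi^{-t})^{\otimes2}\#\beta$ via \eqref{horf} and Remark~\ref{r} is the only point the paper leaves implicit, and you have identified it correctly.
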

\subsection{Noncommutative phase space associated to 
time arrow}\label{phasparrow}

Let us conclude this section by summarizing what has been done. What we have constructed is (an algebra of functions on) a  phase space able to handle the classical limit of the quantum propagation at diverging times as the $C^*$-algebra of the invariant foliations associated to the classical dynamics, i.e. the noncommutative space of  leaves of such invariant foliations.

But this construction is different depending on whether we consider positive or negative diverging times: the stable manifold foliation is needed in the first case though the unstable one is suitable for the second.

The idea comes then naturally, if one wishes to unify the two arrows of time in order to get objects truly invariant by the evolution for
time from $-\infty$ to $+\infty$, to consider an equivalent construction associated the the intersection of the two stable and unstable foliations. The set of intersections of   two given invariant manifolds is the set of homoclinic orbits of a trajectory. Such 
 sets will become the " leaves" of a new foliation of the classical phase space, this time invariant by time evolution from $-\infty$ to $+\infty$.

This is the construction we will present in the next session, amazingly inspired by perturbation methods for the resonant harmonic oscillator.
\vskip 1cm


%
%
%
%
%
 \section{The   noncommutative phase-space underlying the quotient by the quantum flow}\label{homo}
 Partially less rigorous mathematically than the preceding  ones, the present one proposes a construction of the classical limit of quantum systems whose  underlying  classical dynamics is chaotic. The key point of this work in progress consists in a deep change of point of view with respect to all other, to our knowledge, attempt of handling spectral asymptotics: instead of looking at eigenvalues $(e_j)_{J=1,\dots}$ of the quantum Hamiltonian $H$, the quantities we consider are  rather consider the set of frequencies $\left(\tfrac{e_i-e_j}\hbar\right)_{ij=1,\dots}$, namely the spectrum of the Heisenberg-von Neumann derivation $\tfrac1{i\hbar}[H,\cdot]$.
 \subsection
{The non-resonant harmonic oscillator spectrum and the non commutative torus} 
 \label{harm}\  
 
%
%
 
 \vskip 1cm
 
 From the formula for the generating function of the Hermite polynomials $\{h^p_n\}_{n=0,\dots}$, namely
$$
\sum_{n=0}^\infty
h^p_n(x)\frac{z^n}{n!}=e^{2xt-t^2},\ \|h^p_n\|_{L^2(\bR,e^{-\frac{x^2}2}dx)}=\pi^\frac14 \sqrt{2^nn!}
$$
valid for all $z\in\bC$, we get easily that the normalized eigenfunctions of the harmonic oscillator $h_j,j=0,\dots$ satisfy
\be\label{defgj}
\sum_{j=0}^\infty
h_j(x)
\frac{(\frac z{\sqrt{
\hbar}})^{j}e^{-\frac{|z|^2}{2\hbar}}}
{\sqrt {
 j!}}
 =
(\pi\hbar)^{-\frac14}
e^{
-\frac{z^2-2\sqrt 2 zx+x^2}{2\hbar}}e^{-\frac{|z|^2}{2\hbar}}:=g^z(x).
\ee
Note that $\|g^z\|_{L^2(\bR,dx)}=1$.

we define
$$
\ec{z}:=
\tfrac{e^{-i\frac{qp}{2\hbar}}}{(\pi\hbar)^{-\frac14}}e^{-\frac{(x-q)^2}2+i\frac{px}\hbar}, \mbox{ with }z=q+ip.
$$

We first remark that
$$
g^{\sqrt2z}=\ec{z}
$$
so that \eqref{defgj} reads now
$$
\ec{z}=
e^{-\frac{|z|^2}{\hbar}}
\sum_{j=0}^\infty
\frac{\left(\sqrt{\tfrac 2
\hbar}z\right)^{j}}
{\sqrt {
 j!}}.
 h_j
 $$
 
 \vskip 1cm
 We get that, for any choice of $z\in\bC, z\neq 0$,
$$
(2\pi)^{-\frac34}
\sqrt{\frac{|z|}{\sqrt\hbar}}
\int_0^{2\pi}
g^{e^{it}z}
e^{-ijt}
dt
=
\sqrt{\frac{\big(\frac{|z|^2}{\hbar}\big)^{j}e^{-\frac{|z|^2}{\hbar}}\sqrt{2\pi\frac {|z|^2}{\hbar}}}{j!}}
h_j
$$
so that $$
h_j
=
\sqrt{\frac{j!}{\big(\frac{|z|^2}{\hbar}\big)^{j}e^{-\frac{|z|^2}{\hbar}}\sqrt{2\pi\frac {|z|^2}{\hbar}}}}
(2\pi)^{-\frac34}
\sqrt{\frac{|z|}{\sqrt\hbar}}
\int_0^{2\pi}
g^{e^{it}z}
e^{-ijt}
dt
$$
and in particular, taking now ${|z|^2}=j\hbar$,
$$
h_j
=
C_j
\frac{j^{\frac14}}{(2\pi)^\frac34}
\int_0^{2\pi}
g^{e^{i\theta}\sqrt{j\hbar}
}
e^{-ij\theta}
d\theta
$$
with
$$
C_j=
\sqrt{\frac{j!}
{
\sqrt{2\pi j}j^je^{_-j}
}}<1\mbox{ and }\sim 1\mbox{ as }
j\to\infty.
$$
and 
$$
h_{j'}
=
\sqrt\frac{j'!}{j!}
C_j
\frac{j^{\frac14}}{(2\pi)^\frac34}
\int_0^{2\pi}
g^{e^{i\theta}\sqrt{j\hbar}
}
e^{-ij'\theta}
d\theta.
$$
It is of course exceptional, as the harmonic oscillator is exceptional for quantum mechanics, that such an exact formula exists, giving an explicit expression for the spectral decomposition of the Hamiltonian.

It is well known that the following decomposition of the identity holds true
\be\label{decidn}
I=\int \adir{\ec{z}}{\ec{z}}\tfrac{dzd\bar z}{2\pi\hbar}
\ee
But there is also the following decomposition
\be\label{decoortho}
I=\sum_{j=0}^\infty
\adir{h_j}{h_j}
\ee
which can be rewritten as 

\be\label{decoec2}
I=\sum_{j=0}^\infty \int_{\bT^2}e^{ij(\theta'-\theta)}
\adir{\ec{\sqrt{j\hbar}e^{i\theta'}}}{\ec{\sqrt{j\hbar}e^{i\theta}}}d\theta' d\theta
\ee
or, more geometrically,
\be\label{decoec2geo}
I=\sum_{j=0}^\infty \int_{|z'|^2=|z|^2=j\hbar}e^{i\int_z^{z'}\bar zdz}
\adir{\ec{z'}}{\ec{z}}dz' dz
\ee
The advantage of this last formulation is that, denoting $e^{i\theta_z}:=\sqrt{\frac z{\bar z}}$, one has, for any $n\in\bZ$,
\be\label{creanil}
A_n:=\sum_{j=0}^\infty j\int_{|z'|^2=|z|^2=j\hbar}e^{i\int_z^{z'}\bar zdz}
e^{in\theta_{z'}}
\adir{\ec{z'}}{\ec{z}}dz' dz=\sum_{j=0}^\infty
\adir{h_{j+n}}{h_j}
\ee

Note that this is also equal to
$$
\sum_{j=0}^\infty j\int_0^{2\pi}d\theta_z\lim_{T\to\infty}\frac1{2T}\int_{-T}^{T}d\theta_{z'}e^{ij(\theta_{z'}-\theta_z)}e^{in\theta_{z'}}\adir{\ec{z'}}{\ec{z}}|_{|z'|^2=|z|^2=j\hbar}
$$
$$
\sim\int dz\lim_{T\to\infty}\frac1{2T}\int_{-T}^{T}dte^{i\frac{|z|}\hbar t}e^{in\theta_{e^{it}z}}\adir{\ec{e^{it}z}}{\ec{z}}
$$
$$
=\int dz\lim_{T\to\infty}\frac1{2T}\int_{-T}^{T}dte^{i\int_z^{\Phi^t(z)}\bar z dz}e^{in\theta_{\Phi^t(z)}}\adir{\ec{\Phi^{t}(z)}}{\ec{z}}
$$
where we see that only the symplectic form, the flow and the groupoid of the flow $\Gamma_{\Phi^t}$ (defined in a few lines below) appears.


We see two points:

- $e^{-itH/\hbar}A^+_ne^{itH/\hbar}=e^{int}A^+_N,\ \forall t\ \Leftrightarrow\ \frac i\hbar[H,A^+_n]=n$, i.e. 

\hfill $A^+_N$ is an eigenvector of the derivation $\frac i\hbar[H,\cdot]$ of eigenvalue $n$

- $\rho_n(z',z):=e^{i\int_z^{z'}\bar zdz}
e^{in\theta_{z'}}$ is a  function on the groupoid of the flow $\Gamma_{\Phi^t}:=\{(z',z)/\exists t,\ \Phi^t(z)=z'\}$ which is an

\hfill eigenvector of the action of the flow, i.s.   $\rho^t(z',z):=(\Phi^t)^{\otimes 2}\#\rho(z',z)$,  $\rho^t_n=e^{int}\rho_n$.
\vskip 1cm
In two dimension we have the following formula: consider 
$$
H=\omega_1H_0\otimes I+\omega_2I\otimes H_0,\  h_{j_1,j_2}=h_{J_1}\otimes h_{j_2},\ 
\ec{(z_1.z_2)}
=\ec{z_1}
\otimes \ec{z_2}.
$$
and
we then express the same formula as before in the form, for $N\in\bZ^2$, $Z=dz_1dz_2$, $\Theta_Z=(\theta_{z_1},\theta_{z_2})$,
\begin{eqnarray}
A_{N}&:=&
\int dZ\lim_{T\to\infty}\frac1{2T}\int_{-T}^{T}dte^{i\int_Z^{\Phi^t(Z)}\bar Z \cdot dZ}e^{iN\cdot\Theta_{\Phi^t(Z)}}\adir{\ec{\Phi^{t}(Z)}}{\ec{Z}}\label{erg}
\end{eqnarray}
Obviously

- $e^{-itH/\hbar}A^+_Ne^{itH/\hbar}=e^{iN\cdot\omega t}A^+_n,\ \forall t$, $\omega=(\omega_1,\omega_2)$, i.e.

\hfill $A^+_n$ is an eigenvector of the derivation $\frac i\hbar[H,\cdot]$ of eigenvalue $N\cdot \omega$, 

- $\rho_N(Z',Z):=e^{i\int_Z^{Z'}\bar Z\cdot dZ}
e^{iN\cdot\Theta_{Z'}}$ is function on the groupoid of the flow $\Gamma_{\Phi^t}$ which is

\hfill an eigenvector of the action of the flow, i.e.     $\rho^t_N=e^{iN\dot\omega t}\rho_N$.
\vskip 0.3cm
Let us remark that, when $Z$ belongs to a Bohr-Sommerfeld quantized  torus $\bT^2\times\bT^2$,
by the ergodic theorem
\be\label{ergtor}
\lim_{T\to\infty}\frac1{2T}\int_{-T}^{T}dte^{i\int_Z^{\Phi^t(Z)}\bar Z \cdot dZ}e^{iN\cdot\Theta_{\Phi^t(Z)}}\adir{\ec{\Phi^{t}(Z)}}{\ec{Z}}
=
\int_{\bT^2}
e^{i\int_Z^{Z'}\bar Z \cdot dZ}e^{iN\cdot\Theta_{Z'}}\adir{\ec{Z'}}{\ec{Z}}dZ'
\ee
 and, in particular, doesn't depend on $Z$. 

We will come back to this remark in Section \ref{creann} below.

\vskip 0.3cm

Finally, we remark that
$$
\Theta_{\Phi^{2\pi/\omega_2}Z}=(\theta_{z_1}+2\pi\tfrac{\omega_1}{\omega_2},\theta_{z_2}).
$$
so that
$$
\int_{-2\pi\tfrac{\omega_1}{\omega_2}K}^{2\pi\tfrac{\omega_1}{\omega_2}K}f(\Phi^t(Z))dt
=\sum_{k=-K}^K
\int_0^{2\pi}d\theta \left(P_{2\pi\tfrac{\omega_1}{\omega_2}}f\right)(\theta_1+\theta,\theta_2+\theta)
$$
where 
$$
P_{2\pi\tfrac{\omega_1}{\omega_2}}f(\cdot,\cdot)=f(\cdot+\tfrac{\omega_1}{\omega_2},\cdot).
$$

Moreover
$$
e^{-i2\pi\frac{\omega_1}{\omega_2}H}A_{n+1,0}e^{i2\pi\frac{\omega_1}{\omega_2}H}
=
e^{i2\pi\frac{\omega_1}{\omega_2}n_1}A_{(n_1,0}.
$$
one recovers easily the
$$
\{\lambda_i-\lambda_j\}=\{\omega_1n_1+\omega_2n_2,\ (n_1,n_2)\in\bZ^2\}.
$$
Let us remind that the numbers $\lambda_i-\lambda_j$ are the frequencies of the quantum evolution.

Looking at eigenfunctions of $V_{2\pi\tfrac{\omega_1}{\omega_2}}$ on torus, we get that they are of the form $\tfrac1{\sqrt{2\pi}}e^{in\theta},\ n\in\bZ$ with eigenvalues $e^{in{2\pi\tfrac{\omega_1}{\omega_2}}}$.

Obviously, $P_{2\pi\tfrac{\omega_1}{\omega_2}}$ is the Poincaré mapping of the linear ergodic flow $\Phi^t$ on the torus $\bT^2$ associated to the Poincaré section $\theta_2=0$.

Recalling that $A_N$ was constructed out of the eigenfunctions $e^{in_1\theta_1}$, one just proved the following result.

\begin{theorem}\label{main1}
The set of frequencies of the two dimensional harmonic oscillator are determined by the spectrum of the Poincaré mapping of the classical flow.
\end{theorem}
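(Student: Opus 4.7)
The plan is to synthesize the computations carried out just above the statement into a clean three-step argument. The theorem asserts that the classical Poincaré map spectrum determines the full quantum frequency lattice $\{n_1\omega_1 + n_2\omega_2 : (n_1,n_2)\in\mathbb{Z}^2\}$, and the preceding discussion has essentially done all the heavy lifting.

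First I would establish the quantum side: the set of frequencies $\{(e_i - e_j)/\hbar\}$ is precisely the point spectrum of the Heisenberg derivation $D := \tfrac{i}{\hbar}[H,\cdot]$ acting on a suitable algebra of bounded operators. The operators $A_N$ introduced in \eqref{erg} furnish, by the computation immediately above the statement, a complete family of eigenvectors: $D A_N = (N\cdot\omega)\, A_N$ with $N=(n_1,n_2)\in\mathbb{Z}^2$, so that the spectrum of $D$ is exactly $\{N\cdot\omega\}=\{n_1\omega_1+n_2\omega_2\}$.

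Next I would set up the classical side: on the Bohr--Sommerfeld torus $\mathbb{T}^2$, the section $\{\theta_2=0\}$ has first-return time $T=2\pi/\omega_2$ under the linear flow $\Phi^t$, and the Poincaré map is $P=P_{2\pi\omega_1/\omega_2}$, rotation of $\theta_1$ by the angle $2\pi\omega_1/\omega_2$. Its $L^2(\mathbb{T}^1)$-eigenfunctions are $\tfrac{1}{\sqrt{2\pi}}e^{in\theta}$ with eigenvalues $e^{2\pi i n \omega_1/\omega_2}$, as noted just before the theorem. The bridge between the two sides is that the construction of $A_N$ builds it out of precisely these Poincaré-eigenfunctions on each angular factor; evaluating the quantum evolution at $t=T$ gives $e^{-iTH/\hbar} A_{(n_1,n_2)} e^{iTH/\hbar} = e^{2\pi i n_1\omega_1/\omega_2} A_{(n_1,n_2)}$, showing that iterating $P$ on the section recovers the quantum action on the corresponding $A_N$.

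The last step is the identification itself: from the Poincaré map one reads off the eigenvalues $e^{2\pi i n\omega_1/\omega_2}$, and from the return time $T$ one recovers $\omega_2 = 2\pi/T$; together these fix $\omega_1$ and $\omega_2$ individually, hence the full lattice $\mathbb{Z}^2\cdot\omega$ of frequencies, and then $A_N$ exhibits each such number as a frequency.

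The main obstacle is conceptual rather than computational: a priori the Poincaré map spectrum gives only the ratio $\omega_1/\omega_2$, whereas the frequency lattice is genuinely two-dimensional. The resolution, implicit in the construction, is that "the Poincaré map" must be read together with its natural time-scale --- the return time $T$ to the section --- so the actual datum is the pair (first-return map, first-return time). I would make this point explicit in the proof, since otherwise the statement could be misread as asserting something false. Everything else is a direct reassembly of the formulas already written, with no genuine estimate needed beyond the already-established eigenvalue identity $D A_N = (N\cdot\omega) A_N$.
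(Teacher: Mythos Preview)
Your proposal is correct and follows essentially the same route as the paper: the theorem is stated as a summary of the computations preceding it, and you have organized exactly those ingredients (the eigenvector identity $DA_N=(N\cdot\omega)A_N$, the identification of the Poincar\'e map $P_{2\pi\omega_1/\omega_2}$ on the section $\theta_2=0$ with eigenfunctions $e^{in\theta}$, and the fact that the $A_N$ are built out of these eigenfunctions) into a clean three-step argument. Your explicit remark that the return time $T=2\pi/\omega_2$ must be read together with the Poincar\'e spectrum in order to recover both $\omega_1$ and $\omega_2$ individually is a useful clarification that the paper leaves implicit.
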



 \subsection{The space of frequencies and the noncommutative torus}\label{freq}
 \ 
 
%
%
%
%

 The operator $P_{\tfrac{\omega_1}{\omega_2}}$ which appeared in the last section is well known in noncommutative geometry: it is one of the two generator of the so-called noncommutative torus, namely the algebra generated by $U,V$ satisfying
 \be\label{noncoto}
 VU=e^{i\tfrac{\omega_1}{\omega_2}}UV.
 \ee
 Indeed, taking $U=\times e^{i\theta}$ on $L^2(S_1)$, one sees easily that taking  $V:=P_{\tfrac{\omega_1}{\omega_2}}$,  the couple $(U,V)$ satisfies \eqref{noncoto}.
 
 One can now reformulate Theorem \ref{main1} in a more synthetic way.
 \begin{theorem}\label{main2}
 The (algebra of the) quotient  by the flow of the harmonic oscillator determines the set of frequencies.
 \end{theorem}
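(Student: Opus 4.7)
The plan is to combine Theorem~\ref{main1} with the standard identification of the $C^*$-algebra of the Kronecker flow on the two-torus with the noncommutative torus $\mathcal A_{\omega_1/\omega_2}$. First I would restrict attention, for fixed $\hbar$, to the Bohr--Sommerfeld invariant tori $\mathbb T^2\subset\R^4$ on which the classical flow $\Phi^t$ of the two-dimensional harmonic oscillator acts as the linear Kronecker flow with winding number $\omega_1/\omega_2$. In the non-resonant case this ratio is irrational, every orbit is dense, and the naive quotient $\mathbb T^2/\Phi^{\R}$ is not Hausdorff. Following Connes, the correct object to replace it by is the $C^*$-algebra of the flow groupoid $\Gamma_{\Phi^t}$; a classical result (K-theory of foliations) identifies this algebra, up to Morita equivalence, with the irrational rotation algebra $\mathcal A_{\omega_1/\omega_2}$, i.e.\ the noncommutative torus generated by two unitaries $U,V$ subject to $VU=e^{i\omega_1/\omega_2}UV$.

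Second, I would exhibit the generators concretely using the material of the preceding subsection: take $U$ to be multiplication by $e^{i\theta_1}$ on $L^2(S^1_{\theta_1})$ (where $S^1_{\theta_1}$ is the Poincaré section $\{\theta_2=0\}$), and $V:=P_{2\pi\omega_1/\omega_2}$ the Poincaré return map, which translates $\theta_1$ by $2\pi\omega_1/\omega_2$. A direct check gives the commutation relation $VU=e^{i2\pi\omega_1/\omega_2}UV$, so $(U,V)$ presents $\mathcal A_{\omega_1/\omega_2}$. This makes precise the sense in which the algebra of the flow groupoid \emph{is} the noncommutative torus.

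Third, I would read off the spectrum. The joint eigenvectors of $V$ in $L^2(S^1_{\theta_1})$ are $e^{in_1\theta_1}$ with eigenvalues $e^{in_1\cdot 2\pi\omega_1/\omega_2}$, while $U$ shifts $n_1\to n_1+1$; combining with the ``time-direction'' generator one recovers the full set of joint spectral exponents $\{n_1\omega_1+n_2\omega_2,\ (n_1,n_2)\in\Z^2\}$. By Theorem~\ref{main1}, these are exactly the quantum frequencies $(e_i-e_j)/\hbar$. The identification of each frequency-eigenvector $A_N$ of $\tfrac{i}{\hbar}[H,\cdot]$ with the corresponding character of $\mathcal A_{\omega_1/\omega_2}$ is already essentially contained in formula \eqref{erg} and the eigenfunction calculation preceding Theorem~\ref{main1}, so this step reduces to bookkeeping.

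The main obstacle is making rigorous the passage from ``naive quotient'' to ``noncommutative $C^*$-algebra of the flow groupoid'' in a way compatible with the semiclassical limit: one must justify that the algebra constructed inside $\mathcal B(L^2(\R^2))$ by the off-diagonal Toeplitz construction of Section~\ref{longsemevo} (applied to the flow groupoid $\Gamma_{\Phi^t}$) has, as its $\hbar\to 0$ classical image on each Bohr--Sommerfeld leaf, precisely $\mathcal A_{\omega_1/\omega_2}$, and that the spectrum of $\tfrac{i}{\hbar}[H,\cdot]$ is recovered as the spectrum of the corresponding derivation on this algebra, independently of the chosen Bohr--Sommerfeld torus (which is automatic since the ratio $\omega_1/\omega_2$ does not depend on the torus). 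The rest is a direct calculation using \eqref{ergtor}.
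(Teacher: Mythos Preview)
Your proposal is correct and follows essentially the same route as the paper: identify the Poincar\'e return map $P_{2\pi\omega_1/\omega_2}$ on the section $\{\theta_2=0\}$ with one generator $V$ of the noncommutative torus $\mathcal A_{\omega_1/\omega_2}$ (the other being multiplication by $e^{i\theta}$), then invoke Theorem~\ref{main1}. The paper in fact treats Theorem~\ref{main2} as nothing more than a synthetic reformulation of Theorem~\ref{main1} once this identification is made; it gives no separate argument beyond the two-line check of the commutation relation \eqref{noncoto}.

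The main difference is one of ambition: you bring in Morita equivalence and the $C^*$-algebra of the flow groupoid $\Gamma_{\Phi^t}$ to justify that ``algebra of the quotient by the flow'' really means the irrational rotation algebra, and you flag the analytic obstacle of matching this with the off-diagonal Toeplitz construction of Section~\ref{longsemevo} in the limit $\hbar\to 0$. The paper does none of this and does not claim to: it works at the level of a heuristic identification (the Poincar\'e map \emph{is} one of the two generators, hence the algebra it generates with $U$ \emph{is} the noncommutative torus), and the compatibility with the semiclassical Toeplitz picture is left implicit via the formulas \eqref{erg}--\eqref{ergtor}. So your extra machinery is not wrong, but it goes beyond what the paper proves or needs for its stated purpose; for the present theorem the two-line argument suffices.
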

 \subsection{Extensions to chaotic systems}\label{chao}\ 
 
%
%
%
%

From the beginning of this section, eigenvalues and differences of eigenvalues appear the same way, due to the fact that the spectrum of the harmonic oscillator is linear. This is not the case any more for chaotic systems, and the construction which follows will emphasize the role of frequencies with respect to the one of eigenvalues. Let us note that this is far from being non-physical. On the contrary. First because the Hamiltonian is always defined u to a constant unessential to the dynamics. Second, frequencies are the quantities than one observes. Finally constructing quasi-modes associated to a given approximate eigenvalue is somehow ambiguous when the underlying classical flow is ergodic as this property forbids a clear localization of eigenfunctions in phase space. except the expected de-localization on the whole energy shell predicted by quantum ergodicity. Quasimodes associated to given frequencies are superposition of eigenfunctions, leading to the freedom of relocalization by superposition of unlocalized functions.  This is what we are going to see now.
 \vskip 1cm
 The advantage of the statement of Theorem \ref{main2} is that it relies the quantum frequencies of the system to the very only flow of the classical underlying system. But one has to say and one notices that this link is subtle: it relies the invariants of the quantum flow, namely the frequencies - a very natural quantum object due to the discrete \textit{quantum} structure of the quantum spectra - which describes perfectly the quotient of the Hilbert space by the quantum dynamics, to what replaces this concept of frequency - which doesn't exist in classical mechanics outside the very peculiar situation of integrable systems -  the quotient of the space by the flow, endowed, one has to say,  with its very noncommutative algebra of  structure.
 
 Classical frequencies don't exist generally in classical setting, but invariants of the classical dynamics do. It is therefore natural to try to get a concept of classical frequency through the algebra of the foliation by the classical dynamics, and try to get a theorem similar to Theorem \ref{main2} holding true   for  quantum situations whose underlying classical dynamics is chaotic.
 
 What are the invariants of classical dynamical systems? We have seen in section \ref{longsemevo}  how stable and unstable manifolds enter the game of long time semiclassical evolution. More precisely, we have seen that unstable manifolds are suitable for long positive time evolution, and stable ones to negative times. This leads to the idea of considering intersections of stable and unstable manifold for a good candidate for handling forward and backward evolution times. 
 
 These intersections are very natural objects in the dynamical systems theory: they are precisely the trajectories themselves. But  one may consider all the intersections of two given invariant manifolds, that is the set of homoclinic orbits associated to a given trajectory.
 \subsubsection{Homoclinic foliations versus invariant tori fibration}\label{homfol}\ 
 
  Homoclinic orbits of a given trajectory $\gamma$ are trajectories themselves $\{\Phi^t(z_{hom}),t\in\bR\}$ which asymptotically tend to $\gamma$ when $t\to\pm\infty$: $$
 d(\Phi^t(z_{hom}),\gamma)\to 0\mbox{ when }t\to\pm\infty.
 $$
 
  An invariant non resonant torus $\omega$ of an integrable classical system contains trajectories dense in $\Omega$. In this spirit, all these trajectories get closer and closer to each other at some moments when $t\to\pm\infty$.
  
It is therefore tempting to think at the set of homoclinic trajectories of a given one - i.e. the set of all intersections  of two stable and unstable manifolds - on one side, and invariant non resonant tori on the other side, as the same object with two incarnations, in chaotic and integrable situations.

   There are both similitude and big differences in the geometries of the two  situations, but let us claim already that the density feature of the trajectories on a non-resonant torus find an echo in the density of homoclinic curves around a given trajectory at $T\to\pm\infty$. This is particularly visible in the case of a closed periodic orbit around which homoclinic orbits accumulate. We'll get back to this later.

   On the other side, the geometries are different by essentially two facts: first,  invariant tori provide a foliation to the phase space which is actually a fibration by Lagrangian manifolds. In the homoclinic case, one cannot even talk about foliation strictu senso, since the set of homoclinic orbits presents an intrinsic discrete character, they are isolated trajectories: all of them have the homoclinic flavour but a ``leaf" is the infinite (countable) union of one dimensional manifolds (trajectories). Nevertheless, there is a big temptation 
   to consider them as leaf, and to consider a noncommutative algebra associated to this very singular object.
   
   In fact the very basic primary axiomatic structure of noncommutative geometry will pass over this extra non  connectedness property of the ``leaves" : indeed the concept of groupoid, in its very algebraic structure, doesn't care about connectedness.
    
 \vskip 1cm

   \subsubsection{The construction of the noncommutative structure}\label{noncostruc}\ 
   
 \newcommand{\homoc}{\mathcal H}
 \newcommand{\homs}{\homoc(\Lambda_s,\Lambda_u)}
 \newcommand{\homsi}[1]{\homoc(\Lambda_s^{#1},\Lambda_u^{#1})}
Let us consider the two stable and unstable foliations $\{\Lambda_s\}$ and \{$\Lambda_u\}$
of the classical flow. To any pair $(\Lambda_s^1,\Lambda_u^1)$ we associate the set $\homoc(\Lambda_s^1,\Lambda_u^1)$ of intersections of $\Lambda_s^1$ and $\Lambda_u^1$: it consists of a (countable) union of trajectories, all homoclinic to each others. once again this is not strictly speaking a foliation (more precisely these intersections are not leaves), but one can associated to it the following groupoid 
\be\label{defgro}
\Gamma_{\homs}:=\{\gamma,s(\gamma),r(\gamma)\in\homoc(\Lambda_s^1,\Lambda_u^1)\mbox{ for some pair }(\Lambda_s^1,\Lambda_u^1)\}.
\ee
The set of functions on $\Gamma_{\homs}$ can be seen as the ``crossed product" of the algebra of functions on the phase space by the groupoid $\homs$.

%
We define the homoclinic algebra associated to the flow as usual, that is as the completion of the algebra of regular functions on the groupoid $\Gamma_{\homs}$. We'll come back to this below.
   
   \subsubsection{Bohr-Sommerfeld conditions I}\label{BSone}\ 
   
   According to Section \ref{longsemevo},  the forward (resp. backward) propagation on time  is done using the action of a symbol on the stable manifold (resp. unstable). Therefore one expect that the action on time both for ward and backward should involve the same way the two propagations. This implies that, between two points $z,z'\in\homsi{1}$, and any two paths $z_u\in\Lambda_s,z_u\in\Lambda_u$
   $$
   e^{i\frac{\int_z^{z'}\bar {z_s}dz_s}\hbar}=e^{-i\frac{\int_z^{z'}\bar {z_u}dz_u}\hbar}
      $$
      i.e.
      \be\label{condbs1}
      \oint\bar zdz=2\pi k\hbar,\ k\in\bZ
      \ee
      
  This formula provides a first set of Bohr-Sommerfeld conditions that must satisfy the homoclinic torus.  
  
  These conditions are invariant of the path along which one computes the integral, due to the Lagrangian property of invariant manifolds, precisely.
  
  It is a set of conditions  because if we are in dimension greater than $2$, that is with a   phase space of dimension $2d$, $d>2$, the invariant manifolds, being Lagrangian, have dimension $d$ and there are $d-1$ different independent cycles that connect two homoclinic trajectories.

  Therefore, there is one ``missing" Bohr-Sommerfeld condition, and it will be provided by the quantization of the Poincar\'e    mapping that will be defined in the next section.

 \vskip 1cm

 \subsubsection{The 
  ``Poincaré" section}\label{homalg}\ 
 
 In this section, we present a construction of what should be the Poincar\'e section in the case of chaotic systems. The main difference with the integrable case will be that this section will be in a certain sense infinitesimal around the trajectory. In other words, there will be a kind of germ of Poincar\'e section. But the traces of the homoclinic orbits in this infinitesimal neighbourhood will be dense, as the trajectories on the integrable torus are. Therefore, a concept of quotient by the Poincar\'e map will survive in the chaotic situation in a noncommutative form very close to  the integrable  one.
 \vskip 1cm
 We have seen in Section \ref{freq}, that the spectrum of frequencies of the non-resonant harmonic oscillator is determined  by the reduced algebra of the quotient  by the flow associated to a Poincaré section. This remark, added to the Bohr-Sommerfeld considerations of the preceding section leads to the following considerations.
 
 For the rest of this section, we  place ourselves in the two-dimensional situation $d=2$. 
 
 \noindent Let $z\in\gamma$ and let $\Xi,\Lambda$ be  the flows (supposed to exist)  defined in Section \ref{freq} such that $\Lambda^\mu(z)\in\Lambda_s(\gamma),\Xi^\nu(z)\in\Lambda_u(\gamma)$. If for some $\mu_\pm\in\bR^\pm$, $\Lambda^{\mu_\pm}(z)$ belongs to an homoclinic curve of $\gamma$ then, for some $\nu_\pm,t_\pm$,
 $$
 \Lambda^{\mu_\pm}(z)=\Xi^{\nu_\pm}(\Phi^{t_\pm}(z)).
 $$
Let us suppose for the moment that $t_\pm=0$, one can easily show that this is the case when $\gamma$ is periodic of period $T$ (in this case $t_\pm$ can be taken as $t_\pm=kT,\ k\in\Z$, i.e. $\Phi^{t_\pm}(z)=z$).

\noindent Suppose also that there exists $T$ such that
$$
\Phi^{-T}(\Lambda^{\mu_\pm}(z))=\Lambda^{\mu'_\pm}(z)=\Xi^{\nu'_\pm}(z).
$$
Again this hypothesis is satisfied when $\gamma$ is periodic of period $T$ with 
$$
\mu'_\pm=e^{-T}\mu_\pm,\ \nu'_\pm=e^{T}\nu_\pm.
$$
Moreover, a simple computation shows that this pseudo-period is also a pseudo-period for any other homoclinic   intersection. Therefore, $T$ is associated to the full homoclinic ``leave".
Note that we will have also
$$
\Phi^{T}(\Xi^{\nu_\pm}(z)=\Xi^{e^{-T}\nu_\pm}(z).
$$

We will define the Poincaré section of $\gamma$ as the set of points of the form $\Lambda^{e^{-nT}\mu^i_\pm}(z)=\Xi^{e^{-T}\nu^i_\pm}(z),\ n\in\bZ$ for any homoclinic  point $z^i=\Lambda^{\mu^i_\pm}(z)=\Xi^{\nu^i_\pm}(z),\ n\in\bZ$  of $z$.

By analogy with Section \ref{harm} we will denote $e^{-T}:=\omega T$ and call $P_{\omega T}$ the operator of translation by $\omega T$ on coordinates $\mu_\pm,\nu_\pm$.
\begin{definition}\label{defsect}\ 

Let $z\in\homsi{1}$, the Poincaré section of $\homsi{1}$ at $z$ is
$$
\cP_z(\homsi{1}):=\left\{\Lambda^{e^{-nT}\mu^h_\pm}(z^h) =\Xi^{e^{-T}\nu^h_\pm}(z^h),\ n\in\bZ,
\ z^h= \Lambda^{\mu^h_\pm}(z) =\Xi^{\nu^h_\pm}(z)\in\homsi{1} \right\}.
$$
and the Poincar\'e mapping at $z$ is
$$
P_z:=P_{\omega T}|_{\cP_z}.
$$
Of course
$$
\cP_z(\homsi{1})\subset\overline{\cP}_x:=
\left\{\Lambda^{t}(z^h),\ \Xi^{e-t}(z^h),\ t\in\bR^+,
\ z^h= \Lambda^{\mu^h_\pm}(z) =\Xi^{\nu^h_\pm}(z)\in\homsi{1} \right\}.
$$

We define also
$$
\cP_\gamma(\homsi{1}):=\underset{z\in\gamma}{\cup}\cP_z(\homsi{1}).
$$
Note that
$$
\underset{\gamma\in \cH(\homsi{1})}\cup\cP_\gamma(\homsi{1})=\homsi{1}.
$$
\end{definition}
Taking now the coordinates $\mu,\nu\in\bR$ for the four parts of the stable and unstable manifold between  $z$ and $\Lambda^{\mu_\pm}(z)$ we get obviously that the eigenvectors of $P_{\omega T}$ have the form $\gamma_\lambda(\mu_\pm)=\mu_\pm^{i\lambda}$:
$$
P_{\omega T}\gamma_\lambda=e^{i\omega T\lambda}\gamma_\lambda.
$$

On the unstable manifold the ``continuation" of $\gamma_\lambda$ will be $\varphi_\lambda(\nu_\pm)=e^{-i\lambda\nu_\pm}$.

Note that, defined that way, $\cP_z$ is not a manifold: in particular it contains a self-crossing, and it is not obvious what should be a (good) system of coordinates on $\cP_z$.
 Therefore we see that there are four systems of coordinates on $\cP_z$ around the point $z$ which hence appears as  singular in the sense that it is not clear what are ``regular" functions on $\cP_z$. This is a (the) main difference with the case of the Poincaré section $S^1$ in the case of the (regular) torus for the harmonic oscillator as seen in Sections \ref{harm} and \ref{freq}.
 
 Nothing in the classical paradigm provides a regularization of this singularity. But (one again) quantum mechanics suppress this singularity, just by putting a new structure on the phase space: a polarization. Without entering the formalism of geometrical quantization initiated by Souriau \cite{sour}, let us say that a choice of a polarization means that coordinates are somehow linked.  The more standard polarization is the so-called Schrödinger one: it consists in choosing   the pair of the configuration coordinate $x$ and the one on the cotangent fiber $p$,  and to link them by the Fourier transform.  In our present situation, another (but close) choice will be relevant: the pair of (germs of) the stable and unstable manifolds in phase space, which, being Lagrangian and transverse, are symplectomorphically equivalent to the $(x,p)$ axes.
 
 \newcommand{\poinc}{Poincaré }
 
 In the case of the \poinc section for the torus, a naturals system of coordinates is generated by the exponential functions $e^{inx},\ x\in [0,2\pi)$ which are also one of the two generators of the torus algebra. We will see in the next section the equivalent of this in the homoclinic setting
 \vskip 1cm

 \subsubsection{Bohr-Sommerfeld rules II}\label{BStwo}\ \ \ 
  In this section, we will give a groupoid construction of the homoclinic torus, and construct on it the action of the dynamical flow.This will show what a ``function of the noncommutative torus" is, leading to Bohr-Sommerfeld conditions and end up at the construction of a ``frequency mixed quasimode", mixed in the sense of being a density matrix.

%
%
%
%
 \paragraph{Strategy}
 Let us suppose first that the invariant manifolds are (locally near $z$) exactly the $q$ and $p$ spaces of the phase space. We'll get back to the general situation at the end of this section.
 
 More precisely, we fix
 \begin{eqnarray}
 z&=&(0,0)\nonumber\\
 \Lambda^\pm_z:=\{\Lambda^u(z),\pm u\geq 0,|u|\leq \epsilon\}&\subset&\{(q_1,0,0,0,.....),\pm q_1\geq 0\}\nonumber\\
 \Xi^\pm_z:=\{\Xi^s(z),\pm s\geq 0,|s|\leq \epsilon\}&\subset&\{(0,p_1,0,0,.....),\pm p_1\geq 0\}\nonumber
 \end{eqnarray}
  Remembering that $\Lambda^+_z$ and $\Xi^+_z$ are connected in $\cP_z$, and the same for $\Lambda_-$ and $\Xi^-_-$, we want to define a ``regular" function on $\cP_z$ as a function which is the symbol of Lagrangian distributions associated to $\Lambda^\pm_z,\Xi^\pm_z$.
  
  What does this mean? 
  
  One passes from  $\Xi^+_z\cup\Xi^-_z$ to $\Lambda^+_z\cup\Lambda^-_z$ by a rotation in phase space of angle $\tfrac\pi4$: $(q,p)\mapsto(p,-q)$. 
  This corresponds to a Fourier transform.
  
  Therefore, a good function on $\overline{\cP}_z$ should satisfy the following property: the restriction of the function on $\Lambda^+_z\cup\Lambda^-_z$ must be the Fourier transform of the restriction of the function on $\Xi^+_z\cup\Xi^-_z$.
As we have seen, the \poinc mapping  acts by dilation by $e^{\pm T}$ of the variables $\mu,\nu$ so that eigenvector should be  complex powers of $\mu,\nu$, namely of the form $\mu^{i\lambda},\nu^{-i\lambda}$ (after extraction of the Jacobian term $\mu^{\frac12},\nu^{\frac12}$ as we shall see below).

Such Fourier transform can be found in \cite[p. 360]{gelf} (or recomputed after some easy complex dilations) and are given by the following formulas.

 \paragraph{The continuity condition}\
  
 For $\lambda\notin\pm\bN$,
 \be\label{gelfplus}
 \int_{-\infty}^{+\infty}
 x^\lambda_+
 e^{i\frac{x\sigma}\hbar}\tfrac{dx}{\sqrt{2\pi\hbar}}
 =
 i\Gamma(\lambda+1)\left[e^{i\lambda\frac\pi2}\sigma_+^{-\lambda-1}
 -e^{-i\lambda\frac\pi2}\sigma_-^{-\lambda-1}\right]\frac{\hbar^{\lambda+1}}{\sqrt{2\pi\hbar}}
 \ee
 \be\label{gelfmoins}
 \int_{-\infty}^{+\infty}
 x^\lambda_-
 e^{i\frac{x\sigma}\hbar}\tfrac{dx}{\sqrt{2\pi\hbar}}
 =
 i\Gamma(\lambda+1)\left[e^{i\lambda\frac\pi2}\sigma_-^{-\lambda-1}
 -e^{-i\lambda\frac\pi2}\sigma_+^{-\lambda-1}\right]\frac{\hbar^{\lambda+1}}{\sqrt{2\pi\hbar}}
 \ee
 We have four branches $\mu_\pm,\nu_\pm$ which have to be connected so that they form a system of coordinates for the ``torus" intersection of the two stable and unstable manifolds. This gives a set a values $\lambda_n,n=0,1,\dots$ by the following reasoning.
 
 We will have four functions, with $\lambda=i\freq$,
 $$
 c^s_+\mu^{i\freq-\frac12}_+,\ c^s_-\mu^{i\freq-\frac12}_-,\ c^u_+\nu^{-i\freq-\frac12}_+\mbox{ and }c^u_-\nu^{-i\freq-\frac12}_-.
 $$
  
  But by Definition \ref{defsect} of the \poinc section, the coefficient $c^u_+$ is related to $c^s_+$ and the same for $c^u_-$ and $c^s_-$:
  $
 \exists f_\freq^\pm\mbox{ such that }$
 \begin{eqnarray}
 c^u_+&=&f_\freq^+c^s_+\nonumber\\
 c^u_-&=&f_\freq^-c^s_-\nonumber
 \end{eqnarray}
 The ``continuity" equation will then read:
 $$
 \int_{-\infty}^{+\infty}
 (c^s_+\mu^{i\freq-\frac12}_++c^s_-\mu^{i\freq-\frac12}_-)e^{i\frac{\mu\nu}\hbar}\tfrac{d\mu}{\sqrt{2\pi\hbar}}
 =
 c^u_+\nu^{-i\freq-\frac12}_++c^u_-\nu^{-i\freq-\frac12}_-
 =
 f_\freq^+c^s_+\nu^{-\lambda-\frac12}_++f_\freq^-c^s_-\nu^{-i\freq-\frac12}_-
 $$
 
 using \eqref{gelfplus} and \eqref{gelfmoins}, we get
 \begin{eqnarray}
 ic^s_+\Gamma(i\freq+\tfrac12)e^{-\freq\frac\pi2}e^{-i\frac\pi4}
 -
 ic^s_-\Gamma(i\freq+\tfrac12)e^{\freq\frac\pi2}e^{i\frac\pi4}
 =
 \sqrt{2\pi}\hbar^{-i\freq}f^+_\freq c^s_+\nonumber\\
 -ic^s_+\Gamma(i\freq+\tfrac12)e^{\freq\frac\pi2}e^{i\frac\pi4}
 +
 ic^s_-\Gamma(i\freq+\tfrac12)e^{-\freq\frac\pi2}e^{-i\frac\pi4}
 =
 \sqrt{2\pi}\hbar^{-i\freq}f_\freq^-c^s_-\nonumber
 \end{eqnarray}
 i.e.
 \be\label{eqlambda}
 U_\freq\begin{pmatrix}c^s_+\\c^s_-\end{pmatrix}=0
 \ee
 with
 \begin{eqnarray}\label{deful}
 U_\freq &=&
 \begin{pmatrix}
 i\Gamma(i\freq+\tfrac12)e^{-\freq\frac\pi2}e^{-i\frac\pi4}-\sqrt{2\pi}\hbar^{-i\freq}f^+_\freq &\ \ \ \ 
 &-i\Gamma(i\freq+\tfrac12)e^{\freq\frac\pi2}e^{i\frac\pi4}\\
 &&\\
 -i\Gamma(i\freq+\tfrac12)e^{\freq\frac\pi2}e^{i\frac\pi4} &\ \ \ \ 
 &i\Gamma(i\freq+\tfrac12)e^{-\freq\frac\pi2}e^{-i\frac\pi4}-\sqrt{2\pi}\hbar^{-i\freq}f^-_\freq
 \end{pmatrix}\nonumber\\
 &&
 \nonumber\\
 &=&
 \begin{pmatrix}
 \Gamma(i\freq+\tfrac12)e^{-\freq\frac\pi2}e^{i\frac\pi4}-\sqrt{2\pi}\hbar^{-i\freq}f^+_\freq &\ \ \ \ 
 &\Gamma(i\freq+\tfrac12)e^{\freq\frac\pi2}e^{-i\frac\pi4}\\
 &&\\
 \Gamma(i\freq+\tfrac12)e^{\freq\frac\pi2}e^{-i\frac\pi4} &\ \ \ \ 
 &\Gamma(i\freq+\tfrac12)e^{-\freq\frac\pi2}e^{i\frac\pi4}-\sqrt{2\pi}\hbar^{-i\freq}f^-_\freq
 \end{pmatrix}
 \end{eqnarray}
\paragraph{The transmission coeficcients}\ 
 
 Therefore the condition for \eqref{eqlambda} to have a non trivial solution is that
 $$
 \det{U_{\freq_n}}=0.
 $$
 
 We got to the point where we have to determine geometrically the two numbers $f^\pm_\freq$.
 
 As in Definition \ref{defsect} we define
 $$
 z^h= \Lambda^{\mu^h_\pm}(z) =\Xi^{\nu^h_\pm}(z).
 $$
 We impose that (remember that $\eta$ is a symplectic potential: $d\eta=\frac{dz\wedge d\bar z}{2i}$)
 $$
 e^{\frac i\hbar\int_z^{z^h}\eta^s}\ c^s_\pm(\mu^h_\pm)^{i\freq
 }=e^{\frac i\hbar\int_z^{z^h}\eta^u}c^u_\pm(\nu^h_\pm)^{-i\freq
 }
 \mbox{ i.e. }c^u_\pm=e^{\frac i\hbar\int_z^{z^h}\eta^u+\int_{zh}^z\eta^s}(\mu^h_\pm\nu^h_\pm)^{i\freq} c^u_\pm
 :=e^{\mp \frac i\hbar\oint\eta^h_\pm}(\mu^h_\pm\nu^h_\pm)^{i\freq} c^u_\pm
 $$
 where $\eta^s,\eta^u$ are the symplectic potentials taken over the stable, unstable manifold.
 
 Therefore 
 \be\label{deff}
 f^\pm_\freq=e^{\mp \frac i\hbar\oint\eta^h_\pm}(\mu^h_\pm\nu^h_\pm)^{i\freq}.
 \ee

Note tat the computation doesn't depend on the point $z^h$ Indeed, if
$$
z_1= \Lambda^{\alpha^h_\pm}(z) =\Xi^{\beta^h_\pm}(z).
$$
 then, for some $n\in\bZ$, 
 $$
 z_1^h=\Phi^{nT}z^h=\Lambda^{e^{t}\mu^h_\pm}...
 $$
 and so is ${\oint\eta^h}$.
\paragraph{Bohr-Sommerfeld conditions}\ 
 
 Therefore the condition \eqref{eqlambda} becomes
 $$
 \det
\begin{pmatrix}
 \Gamma(i\freq+\tfrac12)e^{-\freq\frac\pi2}e^{i\frac\pi4}-\sqrt{2\pi}\hbar^{-i\freq}f^+_\freq &\ \ \ \ 
 &\Gamma(i\freq+\tfrac12)e^{\freq\frac\pi2}e^{-i\frac\pi4}\\
 &&\\
 \Gamma(i\freq+\tfrac12)e^{\freq\frac\pi2}e^{-i\frac\pi4} &\ \ \ \ 
 &\Gamma(i\freq+\tfrac12)e^{-\freq\frac\pi2}e^{i\frac\pi4}-\sqrt{2\pi}\hbar^{-i\freq}f^-_\freq
 \end{pmatrix}
 =0
 $$
 i.e.
\begin{eqnarray}
0&=&
 2i\Gamma({i\freq_n}+\tfrac12)^2\cosh(\pi\freq_n)\nonumber\\
 &-&\Gamma({i\freq_n}+\tfrac12)e^{-\freq_n\frac\pi2}e^{i\frac\pi4}
 \sqrt{2\pi}\hbar^{-i\freq}\big(e^{\frac i\hbar\oint\eta^h_-}(\mu^h_-\nu^h_-)^{i\freq_n}
 +
 e^{-\frac i\hbar\oint\eta^h_+}(\mu^h_+\nu^h_+)^{i\freq_n} 
  \big)
 \nonumber\\
   &+&
  2\pi\hbar^{-2i\freq}e^{\frac i\hbar\big(\oint\eta^h_--\oint\eta^h_+\big)}(\mu^h_+\mu^h_-\nu^h_+\nu^h_-)^{i\freq_n}
  \label{cbsII}
     \end{eqnarray}
 or
 $$
 i\Gamma({i\freq_n}+\tfrac12)^2\frac{\cosh(\pi\freq_n)}\pi
 -\Gamma({i\freq_n}+\tfrac12)e^{-\freq_n\frac\pi2}e^{i\frac\pi4}
 \frac{\hbar^{-i\freq}}{\sqrt{2\pi}}\big(e^{\frac i\hbar\oint\eta^h_-}(\mu^h_-\nu^h_-)^{i\freq_n}
 +
 e^{-\frac i\hbar\oint\eta^h_+}(\mu^h_+\nu^h_+)^{i\freq_n} 
  \big)
 $$
 $$
  +\hbar^{-2i\freq}e^{\frac i\hbar\big(\oint\eta^h_--\oint\eta^h_+\big)}(\mu^h_+\mu^h_-\nu^h_+\nu^h_-)^{i\freq_n}=0
 $$
 or also
 $$
 i\Gamma({i\freq_n}+\tfrac12)^2\frac{\cosh(\pi\freq_n)}\pi
 -\frac{\Gamma({i\freq_n}+\tfrac12)e^{-\freq_n\frac\pi2}e^{i\frac\pi4}}{\sqrt{2\pi}}
 \left(e^{\frac i\hbar\oint\eta^h_-}\left(\frac{\mu^h_-\nu^h_-}{\hbar}\right)^{i\freq_n}+
 e^{-\frac i\hbar\oint\eta^h_+}\left(\frac{\mu^h_+\nu^h_+}\hbar\right)^{i\freq_n} 
  \right)
  $$
  $$
 +
  e^{\frac i\hbar\big(\oint\eta^h_--\oint\eta^h_+\big)}\left(\frac{\mu^h_+\nu^h_+}\hbar\frac{\mu^h_-\nu^h_-}\hbar\right)^{i\freq_n}=0.
 $$
 
 In particular, under (BSCI),
 $$
 i\Gamma({i\freq_n}+\tfrac12)^2\frac{\cosh(\pi\freq_n)}\pi-\frac{\Gamma({i\freq_n}+\tfrac12)e^{-\freq_n\frac\pi2}e^{i\frac\pi4}}{\sqrt{2\pi}}
 \left(
 \left(\frac{\mu^h_-\nu^h_-}{\hbar}\right)^{i\freq_n}
 +
 \left(\frac{\mu^h_+\nu^h_+}\hbar\right)^{i\freq_n} 
  \right)
 +
  \left(\frac{\mu^h_+\nu^h_+}\hbar\frac{\mu^h_-\nu^h_-}\hbar\right)^{i\freq_n}=0.
 $$
 and under ``symmetry" $\pm$, $\mu^h_+=\mu^h_-:=\mu^h,\nu^h_+=\nu^h_-=\nu^h$,
 
 $$
 i\Gamma({i\freq_n}+\tfrac12)^2\frac{\cosh(\pi\freq_n)}\pi-2\frac{\Gamma({i\freq_n}+\tfrac12)}{\sqrt{2\pi}}e^{-\freq_n\frac\pi2}e^{i\frac\pi4}
 \left(\frac{\mu^h\nu^h}{\hbar}\right)^{i\freq_n}
 -
  \left(\frac{\mu^h\nu^h}\hbar
  \right)
  ^{2i\freq_n}=0.
 $$
 which gives
 \begin{eqnarray}
 \Gamma({i\freq_n}+\tfrac12)
&=&
 \pi \left(\frac{\mu^h\nu^h}{\hbar}\right)^{i\freq_n}
 \frac{
 \frac{-2e^{-\freq_n\frac\pi2}e^{i\frac\pi4}}{\sqrt{2\pi}}
 \pm\sqrt{i\frac2\pi e^{-2\freq_n\frac\pi2}-4i\frac{\cosh(\pi\freq_n)}\pi}}{2\cosh(\pi\freq_n)}\nonumber\\
 &=&
 \sqrt{2\pi} \left(\frac{\mu^h\nu^h}{\hbar}\right)^{i\freq_n}e^{i\frac\pi4}\frac{
 -e^{-\freq_n\frac\pi2}\pm ie^{+\freq_n\frac\pi2}
 }{2\cosh(\pi\freq_n)}\nonumber
 \end{eqnarray}
 and
 $$
 \left|\sqrt{2\pi} \left(\frac{\mu^h\nu^h}{\hbar}\right)^{i\freq_n}\frac{
 -e^{-\freq_n\frac\pi2}\pm ie^{+\freq_n\frac\pi2}
 }{2\cosh(\pi\freq_n)}\right|^2
 =2\pi\frac{\cosh{(\pi\freq_n)}}{2\cosh{(\pi\freq_n)}^2}=\frac\pi{\cosh{(\pi\freq_n)}}=|\Gamma(i\freq_n+\tfrac12)|^2
 $$
 since, by the complement formula we have
 $$
 |\Gamma({i\freq_n}+\tfrac12)|=\sqrt{\Gamma({i\freq_n}+\tfrac12)
 \Gamma({-i\freq_n}+\tfrac12)}=\sqrt{\frac\pi{\sin{(\pi({i\freq_n}+\tfrac12))}}}=\sqrt{\frac\pi{\cosh{(\pi{\freq_n})}}}.
 $$
Therefore we get the (BSII) condition
$$
\cos{\left(\arg{\Gamma(i\freq_n+\tfrac12)}+\freq_n\log{\tfrac{\mu^h\nu^h}\hbar}+\tfrac\pi4\right)}
=
-\frac{e^{-\freq_n\frac\pi2}}{\sqrt{2\cosh{(\pi\freq_n)}}}
=
-\frac{1}{\sqrt{1+e^{\pi\freq_n}}}.
$$

In the symmetric case without (BSI) we get
\begin{eqnarray}
&&\cos{\left(\arg{\Gamma(i\freq_n+\tfrac12)}+\freq_n\log{\tfrac{\mu^h\nu^h}\hbar}+\tfrac\pi4\right)}\nonumber\\
&=&
{\sqrt{\cosh{(\pi\freq_n)}}}\Re{\left(e^{-\freq_n\frac\pi2}\left(e^{\frac i\hbar\oint\eta_+}+e^{\frac i\hbar\oint\eta_-}\right)+\sqrt{e^{\freq_n\pi}e^{\frac   i\hbar(\oint\eta_--\oint\eta_+)}+e^{-\freq_n\pi}\left(e^{\frac i\hbar\oint\eta_+}-e^{\frac i\hbar\oint\eta_-}\right)}\right)}.\nonumber
\end{eqnarray}
and in particular if $\oint\eta_+=\oint\eta_-:=\oint\eta$,
$$
\cos{\left(\arg{\Gamma(i\freq_n+\tfrac12)}
+
\tfrac{\oint\eta}\hbar
+
\freq_n\log{\tfrac{\mu^h\nu^h}\hbar}+\tfrac\pi4\right)}
=
-\frac{1}{1+e^{\pi\freq_n}}.
$$

In the fully general case:
 \begin{eqnarray}
&&\cos{\left(\arg{\Gamma(i\freq_n+\tfrac12)}-\freq_n\log{\hbar}+\tfrac\pi4\right)}\nonumber\\
&=&
{\sqrt{\cosh{(\pi\freq_n)}}}\Re\left(e^{-\freq_n\frac\pi2}\left(e^{\frac i\hbar\oint\eta_+}(\mu^h_+\nu^h_+)^{i\freq_n}+e^{\frac i\hbar\oint\eta_-}(\mu^h_-\nu^h_-)^{i\freq_n}\right)\right.\nonumber\\
&&\left.+\sqrt{e^{\freq_n\pi}e^{\frac   i\hbar(\oint\eta_--\oint\eta_+)}(\mu^h_+\nu^h_+\mu^h_-\nu^h_-)^{i\freq_n}+e^{-\freq_n\pi}\left(e^{\frac i\hbar\oint\eta_+}(\mu^h_+\nu^h_+)^{i\freq_n}-e^{\frac i\hbar\oint\eta_-}(\mu^h_-\nu^h_-)^{i\freq_n}\right)}\right).\nonumber
\end{eqnarray}
or
\begin{eqnarray}
&&\cos{\left(\arg{\Gamma(i\freq_n+\tfrac12)}+\freq_n\left(\tfrac12\log(\mu^h_+\nu^h_+\mu^h_-\nu^h_-)-\log{\hbar}+\tfrac\pi4\right)\right)}\nonumber\\
&=&
{\sqrt{\cosh{(\pi\freq_n)}}}\Re\left(e^{-\freq_n\frac\pi2}\left(e^{\frac i\hbar\oint\eta_+}(\mu^h_-\nu^h_-)^{-i\freq_n}+e^{\frac i\hbar\oint\eta_-}(\mu^h_+\nu^h_+)^{-i\freq_n}\right)
\right.\nonumber\\
&&\left.
+\sqrt{e^{\freq_n\pi}e^{\frac   i\hbar(\oint\eta_--\oint\eta_+)}
+e^{-\freq_n\pi}\left(e^{\frac i\hbar\oint\eta_+}(\mu^h_-\nu^h_-)^{-i\freq_n}-e^{\frac i\hbar\oint\eta_-}(\mu^h_+\nu^h_+)^{-i\freq_n}\right)}\right).\label{gamasym0}
\end{eqnarray}
This is quite complicated expression, but note that, as $\freq_n\to\infty$,
\be\label{gamasym}
\arg\Gamma(i\freq_n+\tfrac12)\sim \freq_n\log{\freq_n}-\freq_n+\frac1{24\freq_n}+O(\freq_n^{-2}).
\ee
%
 \paragraph{The spectrum of frequencies}\ 
 
 The preceding discussion leads to the following conclusion:
 
 \be\label{bs2}
 \big\{\lambda\omega
 _n+\tfrac{2\pi}Tm,(n,m)\in\bZ^2\big\}\subset\big\{\lim_{\hbar\to 0}\tfrac{E_i-E_j}\hbar,\ i,j\in\bN\big\}.
 \ee
 
 One note the similitude between the different equations satisfied by the frequencies $\freq_n$ and the Bohr-Sommerfeld conditions obtained  for the one dimensional systems near a separatrix  \cite{CO,CP, rp}. The fact that equivalent formulas are valid for eigenvalues in the double well case and differences of eigenvalues in our setting can be explained by the fact that in the separatrix case the energy of the separatrix ix fixed equal to $0$, otherwise the formula would be valid for $E_{separatrix}-E_j$. In our case, the Hamiltonian is by no means fixed anywhere, as it shouldn't be since the addition of a constant to the Hamiltonian doesn't modify the dynamics.
 
A closer look at the quantization formulas \eqref{gamasym0}, \eqref{gamasym}  shows that what is important s the germ of "eigenfunctions" of the Poincar\'e mapping near the trajectory, On the contrary of the integrable case which involves the whole Poincar\'e section. This is a trace of the fact that the density of homoclinic orbits is visible as the accumulation of them near the trajectory (and not by an ergodic property of the Poincar\'e map as in the integrable case).
 \paragraph{Analysis of the ``spectrum"  as $\hbar\to 0$}\ 
 Let us look at the symmetric case without (BSI):
 \be\label{symwithoutBSI}
\cos{\left(\arg{\Gamma(i\freq_n+\tfrac12)}
+
\tfrac{\oint\eta}\hbar
+
\freq_n\log{\tfrac{\mu^h\nu^h}\hbar}+\tfrac\pi4\right)}
=
-\frac{1}{1+e^{\pi\freq_n}}.
\ee
\begin{itemize}
\item high part of the spectrum 

Thanks to \eqref{gamasym}, \eqref{symwithoutBSI} becomes as $\freq_n\ll\tfrac1\hbar\to\infty$
\begin{eqnarray}
\cos{(\freq_n\log{\freq_n}-\freq_n+\frac1{24\freq_n}
+
\tfrac{\oint\eta}\hbar
+
\freq_n\log{\tfrac{\mu^h\nu^h}\hbar}+\tfrac\pi4+O(\freq_n^{-2}))}&=&O(\freq_n^{-\infty})\nonumber\\
&\Updownarrow&\nonumber\\
\freq_n\log{\freq_n}-\freq_n+\frac1{24\freq_n}
+
\tfrac{\oint\eta}\hbar
+
\freq_n\log{\tfrac{\mu^h\nu^h}\hbar}&=&\tfrac\pi4+n\pi+O(\freq_n^{-2})),\ n\in\bZ\nonumber\\
&\Updownarrow&\nonumber\\
\freq_n&\sim& \tfrac1{\log{\hbar}}(\tfrac{\oint\eta}\hbar-\tfrac\pi4+n\pi),\ n\in\bZ.\nonumber
\end{eqnarray}
\item low part of the spectrum 

 \eqref{symwithoutBSI} becomes, as $\freq_n=o(1)\mbox{ as }\hbar\to\infty$, since $\Gamma(\tfrac12)=,\sqrt\pi\ \Gamma'(\tfrac12)=-\sqrt\pi(\gamma+2\log{2})   $ where $\gamma$ is the Euler-Mascheroni constant,
 so that
 \begin{eqnarray}
 \frac d{d\freq}\arg{\Gamma(i\freq+\tfrac12)}|_{\freq=0}&=&
 \frac1{2i}\left( \frac d{d\freq}\frac{\Gamma(i\freq+\tfrac12)}{\Gamma(-i\freq+\tfrac12)}\right){\frac{\Gamma(-i\freq+\tfrac12)}{\Gamma(i\freq+\tfrac12)}}|_{\freq=0}\nonumber\\
 &=&
 \frac1{2i}\frac{i\Gamma'(i\freq+\tfrac12)\Gamma(-i\freq+\tfrac12)+i\Gamma'(-i\freq+\tfrac12)\Gamma(i\freq+\tfrac12)}{\Gamma(-i\freq+\tfrac12)^2}{\frac{\Gamma(-i\freq+\tfrac12)}{\Gamma(i\freq+\tfrac12)}}|_{\freq=0}\nonumber\\
 &=&
 \frac{\Gamma'(\tfrac12)}{\Gamma(\tfrac12)}=-\gamma-2\log{2},\nonumber
 \end{eqnarray}
 \begin{eqnarray}
 \cos{\left(\sqrt\pi-(\gamma+2\log{2})\freq_n
+
\tfrac{\oint\eta}\hbar
+
\freq_n\log{\tfrac{\mu^h\nu^h}\hbar}+\tfrac\pi4+O(\freq_n^2)\right)}
&=&
-\frac{1}2+\frac{\pi\freq_n}4+O(\freq_n^2)\nonumber\\
&\Updownarrow&\nonumber\\
\freq_n&=&
\frac1{\log{\hbar}}(-\frac{3\pi}4-\sqrt\pi+\tfrac{\oint\eta}\hbar+n2\pi)+O(\tfrac1{\log{\hbar}^2}),
\ n\in\bZ.
\nonumber
 \end{eqnarray}
 \item number of frequencies
 
 an easy computation shows that
 $$
 \#\{\freq_n,\ |\freq_n|\leq 1\}\sim\log{\hbar}.
 $$
\end{itemize}

\vskip 1cm
 
%
%
%
 Let us remark to finish this paragraph that the spectrum of frequencies in the chaotic situation is denser than in the integrable case. This is due to the term $\log\hbar$ present in the  two last formulas. This is true as soon as we consider the contribution of a single homoclinic orbit, and is a fortiori valid when all of them are considered in a row.
 On the other side, a single homoclinic orbit might give the full frequencies spectrum, as, as we saw before, what is important is the germ near an infinitesimal neighbourhood of the trajectory. In this neighbourhood,  all the homoclinic orbits are dense, like, by ergodicity, in the integrable case in the full Poincar\'e section. Therefore, an homoclinic torus might be associated to a (any in fact) single homoclinic orbit, as the regular torus in the case of the non-resonant harmonic oscillator.
 \paragraph{Homoclinic torus}\ 
 
 We saw clearly that, at $\hbar\to 0$, the spectrum of frequencies doesn't depend, at leading order, to the invariant quantity $\mu^h\nu^h$ associated to each homoclinic orbit. More, $\mu^h\nu^h$ appears as a kind of Maslov index.
 
 Therefore, the structure limits, that is the sets of allowed functions on the \poinc section is really associated to the set of intersections of the two invariant manifolds: one has to consider all the homoclinic orbits $\gamma^h_i$ such that $\frac{\oint^{\gamma^h_i}\eta-\oint^{\gamma^h_j}}\hbar+O(\tfrac1{\log{\hbar}})\in 2\pi\bZ$. 
 
 In other words, calling prequantized the tori associated to the same ``space of functions", 
 $$
 \{\mbox{prequantized tori}\}=\{\mbox{homoclinic orbits} \gamma^h_i\}\backslash \sim,\ \ \gamma^h_i\sim\gamma^h_j\Leftrightarrow \oint_{\gamma^h_i}^{\gamma^h_j}\eta^s+\oint_{\gamma^h_j}^{\gamma^h_i}\eta^=:=\oint_{\gamma^h_i}^{\gamma^h_j}\eta^h+O(\tfrac\hbar{\log{\hbar}})\in 2\pi\hbar\bZ.
 $$
 
 \textit{It is a polarization in the framework of geometric quantization \cite{sour}.}
 
 Note that this tori are the equivalent to the tori surrounding periodic linearly stable trajectories, in the construction of quasimodes ``\`a la Ralston" \cite{R}. Note that in our case also, the tori are infinitesimal around the periodic trajectory\footnote{It would be interesting to look at our construction near an unstable periodic trajectory in the limit where the period diverges, and see if appears naturally a limit of the frequencies given by $\omega_k/\log\hbar$.} .
 
%
%
 \paragraph{General geometrical setting}\ 
 In the last section, we considered the case where the stable and unstable manifold are (tangent to) the $q$- and $p$-spaces. The general case of two invariant manifolds is got by the combination of  a rotation and a dilation 
 in phase space: for example a rotation by $-\pi/4$ sends the generation of dilation $qp$ to $\frac{p^2-q^2}2$
 and a dilation by $\alpha$ to $\alpha^{-2}(p^2-\alpha^2q^2)$ whose 
 invariant manifolds are the two lines $p=\pm\alpha q$.
 
 The quantum action of any of these linear symplectic mappings are metaplectic operators $M$ and  $U_\freq$ becomes
 $$
 \begin{pmatrix}
 M&0\\0&M
 \end{pmatrix}U_\freq
 \begin{pmatrix}
 M^{-1}&0\\0&M^{-1}
 \end{pmatrix}=U_\freq.
 $$ 
 so that the matrix, and therefore the discussion of the preceding section, doesn't depend on the geometrical setting. 
 
 \paragraph{On the spacing of energy levels}\ 
 Let us come back to \eqref{bs2} which expresses (some) differences of eigenvalues divided by $\hbar$ as $\lambda\freq_n+\tfrac{ 2\pi}Tm,\ n.m\in\bZ$:
 $$
 \frac{E_i-E_j}\hbar\sim \lambda\freq_n+\tfrac{ 2\pi}Tm.
 $$
 If we order the spectrum $\{E_i\}$ by increasing order in $i$, we find that $\frac{E_{i+1}-E_i}\hbar$ should be the smallest of all the numbers $\frac{E_{j}-E_i}\hbar, \ j\in\bZ$, kipping $i$ fixed, so we are looking at small values of $\lambda\freq_n+\tfrac{ 2\pi}Tm$, which can appear only when $\freq_n$ diverges.
 
 Therefore we look at the high part of the frequencies spectrum, that leads to look at the quantity
 $$
\tfrac1{\log{\hbar}}(\tfrac{\oint\eta}\hbar-\tfrac\pi4+n\pi) +\tfrac{ 2\pi}Tm, \ \ \ n,m\to\pm\infty.
 $$
 and look at how accumulate these numbers near $-\frac\pi4$ as $n,m\to\pm\infty$.
 \paragraph{Conclusion}\ 
 
 We see that, on the contrary with the integrable case presented in Sections \ref{harm} and \ref{freq}, the mean of the spacing between two ``eigenfrequencies" is not of order $1$ any more, but of order $\tfrac1\log{\hbar}$.
 
 This is somehow in accordance with the case  of estimates given by trace formula around a critical point of the Hamiltonian, as shown in \cite{BPU}: in our case in this paper, there is no critical Hamiltonian, but the dynamics of the \poinc resemble to the one near a critical point. The trace of the fact that a (quantized) mapping instead of a flow is concerned 
  is visible on the fact that our prediction concerns frequencies and not eigenvalues: no value of the Hamiltonian is fixed here in advance.  
 
 Once again, our goal is not to find eigenvectors, a task which seems to escape from semiclassical research activity, but to look at superposition of them that are well localized in noncommutative ``phase space".
 
 We will say more on this in the next section but a few remarks are in order to conclude this paragraph. 
 
 Up to now we treat only the case where the stable and unstable manifolds are orthogonal,i.e. project on the canonical system of coordinates on $T^*\R^d$. the general case can be pull back to this case by a symplectic change of variables, leading in fact to an unessential change in the quantization formulas, as expected earlier.
 
 The construction provided accentuate strongly the importance of looking technically at frequencies and not at eigenvalues. The physical interest fo this paradigmatic change was already mentioned earlier. 
 
 The logarithmic factor present in all quantization formulas shows clearly that the frequency spectrum, though being fully discrete (because the eigenvalues one is) tries to mimic, as $\hbar\to 0$ a continuous one (as this is even accentuated if one consider all homoclinic orbits in a row). This remark is not without reminding the quantum dynamics for Hamiltonian with continuous spectra, leading to diffusion on the contrary of the discrete case leading to quasiperiodic bounded evolution. This is certainly a source of what one can call, in this framework, quantum chaos.
 \vskip 0.5cm
%
%
%
%
%
 
%
%
%
%
 \subsubsection{Creation, annihilation and all that}\label{creann}\ 
 
 In the preceding section, we have ``guessed" by a geometrical analogy with the case of the resonant harmonic oscillator what should be frequencies quantization formula. Mow we need to construct  a quaismode corresponding to this guess.
 
 We will look at such a quasimode by using an ansatz given by formula \eqref{betsymb} in Section \ref{anogroup}, namely we look at an operator of the form
 \be\label{betasymbol}
A(\beta)=\int_{h^{-1}(I)}dz\int_{\Lambda_{z}}dz'e^{\frac i\hbar\int_{z}^{z'}\eta}
\beta(z',z)
|\vp_{z'}\rangle\langle\vp_{z}|.
\ee
Looking at $A(\beta)$ ``eigenvector" of the conjugation by the flow leads, by \eqref{propbeta}, to
$$
e^{i\frac{t(\lambda_i-\lambda_j)}\hbar} A(\beta)=e^{-i\frac{tH}\hbar}A(\beta)
e^{i\frac{tH}\hbar}=A((\Phi^t)^{\otimes 2}\#\beta)
$$
that is
\be\label{eqvponbeta}
(\Phi^t)^{\otimes 2}\#\beta=e^{i\frac{t(\lambda_i-\lambda_j)}\hbar}\beta.
\ee

The eigenvectors of the mapping $(\Phi^t)^{\otimes 2}\#$ will be exponential functions in the 
system of coordinates given by the horocyclic flow. Of course, the groupoid aspect expressed above leads to some discreteness of these coordinates, but the density of homoclinic trajectories gives an equivalent to continuous coordinate at the level of germs.

%

In fact, what we need is that 
$$
\int_{\overline{\cP}_{z}}dz'e^{\frac i\hbar\int_{z}^{z'}\eta}
\beta(z',z)
|\vp_{z'}\rangle
$$
is a Lagrangian distribution. 

This corresponds to have a ``continuity condition" of the form
$$
\int_{\Lambda^+_{z}\cup\Lambda^-_z}dz'e^{\frac i\hbar\int_{z}^{z'}\eta}
\beta_{\lambda_k}(z',z)
|\vp_{z'}\rangle=\int_{\Xi^+_{z}\cup\Xi^-_z}dz'e^{\frac i\hbar\int_{z}^{z'}\eta}
\beta_{\lambda_k}(z',z)
|\vp_{z'}\rangle
$$

And this obliges ${\lambda_k}$ to satisfy the Bohr-Sommerfeld condition \eqref{bs2} of the preceding section.

This construction leads to ``eigenvectors" (more precisely ``quasidensity matrices") of the quantum flow at times multiples of $T$ and one recover, as usual, a total eigenvector, i.e. for any time $t$, of the quantum flow by integrating over a period the conjugate by the flow of the preceding construction.

We get
\be\label{ergcha}
A_{\lambda_k}=\frac1{T_\gamma}\int_0^{T_\gamma}\int_{\Lambda^+_{z}\cup\Lambda^-_z}dz'
e^{\frac i\hbar\int_{z}^{z'}\eta}
\beta_{\lambda_k}(z',\Phi^t(z))
\adir{\vp_{z'}}{\vp_{\Phi^t(z)}}.
\ee

Let us remark the strong link  between formulas \eqref{ergcha} and \eqref{ergtor} and therefore \eqref{erg}: the integral in \eqref{ergcha} can bee considered as an integral over two cycles of the homoclinic ``torus". Therefore it is the strict equivalent to \eqref{ergtor} and, thanks to the first Bohr-Sommerfeld condition(s), can be seen also as an ergodic ``average over the flow".
\vskip 0.3cm
This remark will become more lighting by 
a ``groupoid" construction leading to a more geometrical construction of the operator $A_{\lambda_k}$ in \eqref{ergcha}.


In the preceding paragraph, we associate to the homoclinic algebra - quotient of the phase space by the flow - the classical limit of (part of) the set of frequencies of the quantum evolution. What are quantum frequencies? They are (-i times) the eigenvalues of the Heisenberg-von Neumann derivation
$$
\frac i\hbar[H,\cdot].
$$
What are the ``eigenvectors"? Obviously off-diagonal dyadic operators $\adir{\psi_i}{\psi_j}$ where $\psi_k$ is an eigenvector of eigenvalue $E_k$.

On the other side, to $\psi_{\lambda_n}$ one can associate  operators, by the same trick that what we did in Section \ref{harm}, to  elements of the homoclinic algebra reduced to the ``Poincaré" section which are eigenvectors of the shift operator  $P_{\omega T}$ of eigenvalue $\lambda_n,\ n\in\bZ$.

These operators are somehow functions of the creation and annihilation operators since they produce a shift in the spectrum $\psi_{E_i}\to\psi_{E_i+\lambda_n}$. Even in the case of the harmonic oscillator, the shift operators are very singular, since they reads
$$
\frac{a^\pm}{\sqrt{H_0}}\mbox{ with } a^\pm=x\pm\tfrac d{dx},
$$
which is singular at the origin, classically.  In fact the formulation
\eqref{creanil} with $n=\pm1$  desingularizes this singularity, by a non-local formula, already
for the rather spectrally stupid harmonic oscillator.

\vskip 1cm

Spectral shift operators are not  ``regular" observables, by far. they are semiclassical operators associated to noncommutative symbols, belonging to the noncommutative algebra of the homoclinic ``foliation" of our system in the chaotic case, or to the noncommutative algebra of the invariant tori foliation in the integrable case.

The construction of off-diagonal operators we propose is directly inspired from the construction established  in Section \ref{anogroup}.
 
 \vskip 1cm

 \subsubsection{A new and noncommutative framework for classical dynamics}\label{newnon}\ 
 
 The construction above shows clearly that the limit $\hbar\to 0$ of the quantum dynamics involves a much richer structure than the one provided by the classical paradigm, namely a nice geometrical space host a nice flow.
 
 Of course one knows since Poincaré that this \textit{nice} flow on a \textit{nice} space produces, when long evolutions are considered, \textit{not nice} (at all!) structures, such as complex foliations that Poincaré himself described as not possible to draw. And the chaos appears precisely when one wants to merge these singular objects with the natural original space: chaos is the trace on the phase space of the homoclinic machinery.
 
 But what we see is that the true classical limit of quantum mechanics - namely the limit of the dynamics when $\hbar\to 0$ - involves these complex ``homoclinic"  structures (showing up out of the quantum dynamics by letting $\hbar\to 0$) per se, that is the limit dynamics lives on them, and not only the traces we just mentioned.
 
 \vskip 1cm
 In this sense, classical limit is as stable as quantum mechanics, at the condition of letting it live on the right structures, different from the ones of the ``underlying classical dynamics" and keeping track of some residual noncommutativity. This limit ``space" is not an absolute one as the one of Newton (and Kant!), it depends intrinsically of the dynamics itself.
 
 \vskip 1cm

 The space doesn't host the dynamics any more, it is the dynamics which hosts the space.
  
 \vskip 1cm
 
 \hfill The dynamics comes before the space as  \textit{l'existence pr\'ed\`ede l'essence} (once again).
 
 \vskip 1cm

\subsection{Miscellaneous}\label{misc}\ 
 
 Section \ref{homo} somehow synthesizes several aspects already presented in te preceding sections. Several remarks are in order.
 
 On the quantum side, an extension of the standard quantization procedure- in fact the Töplitz one - is necessary in order to reach the  limit $\hbar\to 0$, or at least some uniformity in $\hbar\sim 0$.This true not only for situations involving singularities or chaotic behaviour, it is also the case for the non-resonant harmonic oscillator of every day's quantum mechanics.
 
 A the classical level, the extension of Toeplitz quantization lives on an extension of the regular phase space. The fibrated extension from configuration space to  phase space - adding a fiber over each position point taking care of the momentum dimension - is not enough any more: one needs to foliate the fiber bundle by adding over each point an invariant locus of the dynamics (invariant torus, stable and unstable manifolds). This addition is linked also to a (generalization of a) foliated structure on the invariant locus itself (ergodic flow on tori, homoclinic flows on  stable/unstable manifolds.
 
 These objects remain passive when the standard classical dynamics is involved, but become somehow active when the dynamics is inherited form a quantum one. And they do remain so if one consider now that the dynamics coming form quantum mechanics is the true classical one. a new classical paradigm where the underlying a priori chaotic behaviour is in a certain sense taken into account in the kinematics itself.
 
 Indeed, this microlocal phase-space$^{\otimes 2}$, microlocal near points (or trajectories) in the standard classical phase space, the same way that the phase space is itself microlocal near points of the configuration space, can be seen as constituted of \textit{objects} which can be the same way    both points or invariant manifolds.  
 
 And this is the key of the (strange) phenomenon of long time classical/quantum evolution
%
%
%
%
 when classical unpredictability and quantum indeterminism merge (see \cite{tp8} and the next section of the present article): the long time classical evolution bows-up the point into the unstable manifold - this is the unpredictability- exactly as the quantum evolution spreads a coherent space to a (WKB) Lagrangian state localized precisely on the unstable manifold - and then the indeterminism of the quantum paradigm, according to the  Born interpretation of the wave function \cite{mb}, reduces the points of the Lagrangian to an insignificant role purely probabilistic . If a point, according to Piero della Francesca \cite{pdf} is ``that part which is not", quantum paradigm 
 provide delocalized locus ``which are not". Assuming this, we believe that quantum mechanics becomes crystal clear.

 Let us finish this miscellaneous section by mentioning several directions for future extensions of the main ideas present in this homage to Alex Grosmann.
 
 We have seen the importance of Poincaré mappings - both classical and quantum - in our construction. The Poincaré mapping is the reduction to a Poincaré section of a (continuous time) flow. Likewise, dynamical systems - such has linear mappings on tori - have a suspension which let them become a Poincaré mapping themselves. We believe that our construction furnishes an exact geometrical  construction of the spectrum and eigenvectors of the quantization of the Arnold" s cat, a quantum system both chaotic and explicitly solvable.
 
We also think that noncommutative constructions close to the one in the present section of our paper could provide existence of a  time operator, existence excluded by the famous Pauli Theorem which remarks that no self-adjoint operator $T$ can satisfy the commutation relation $[H,T]=i\hbar$ when $H$ is bounded from below\footnote{Let us recall that the formal argument is very 
 simple, if $H\psi_{\lambda_k}=\lambda_k\psi_{\lambda_k}$ and $[H,T]=i\hbar$, then, for all $\lambda\in\bR$, 
 $e^{-i\frac{\lambda T}\hbar}\psi_{\lambda_k}=(\lambda_k-\lambda)\psi_{\lambda_k}$.} (or from above): 
 the fact of considering frequencies instead of eigenvalues already cancels the difficulty of semi-boundedness.
%

 Finally, a major achievement in quantum physics of the last fifty years concerns the understanding the phenomenon of statistics of level spacing: a ``trace" of chaotic (or not) behaviour of the underlying classical mechanics exists apparently in the probabilistic distribution of the spacing between eigenvalues: our approach 
links directly frequencies - i.e. differences of eigenvalues - 
to the geometry of the underlying dynamics and the properties of its invariants.

%
%
%
%
%
%
%
%
%
%
%
%
%
%
%
%

 \subsection{Conclusion: the quotient of the phase space by the flow}\label{quoflow}\ 
 
 In this section we have seen that there is, in the chaotic situation, an equivalent  to the tori fibration for integrable systems: this is the (noncommutative) space of the `foliation" of phase-space by the ``homoclinic leaves", which consist in the quotient of the set of trajectories by  the equivalence relation ``being homoclinic". 
 
 A leaf is the set of all homoclinic trajectories to a given point $(p,q)$ or trajectory $\gamma$. Each of these ``Lagrangian" leaves  lives in an extended phase-space generated by the stable and unstable manifold at $(p,q)$ or $\gamma$. Each point outside this homoclinic tangle has an hyperbolic  behaviour when evolving in time.
 
 In another, but close, point of view, there is a discrete set of frequencies associated to  each homoclinic orbit in the homoclinic ``Lagrangian torus". The smallest of these frequencies are not orbit dependent, and so are that way  associated to the whole homoclinic Lagrangian itself. The frequencies become dense as $\hbar\to 0$ (on the contrary of the integrable case) revealing a kind of continuous spectrum : this is the way chaotic hyperbolic dynamics appears.
 
 \section{Indeterminism versus unpredictability (how quantum indeterminism would have chocked Laplace but not Poincaré)}\label{imvrsin}\ 


This last section will be concerned by the indeterministic feature of quantum mechanics, a feature usually considered as characteristic of the quantum paradigm. It could seem a priori out of context in a paper mainly concerned with space   to look at the indeterministic properties  of the process of quantum measurements. Measurement is a part of the quantum dynamics and hence belong more to time considerations. Beside, probabilistic features of quantum mechanics are often considered as disappearing at the classical limit because macroscopic sizes `forces" quantum (intrinsic) probabilities laws to become if not binary at least classical (that is witness of a lack of knowledge).

On the contrary we will show that quantum indeterminism merge, at the classical limit, the concept of unpredictability by the delocalization of the wave functions precisely on the locus emphasized by  the chaotic sensitivity to initial conditions: the unstable manifold.

Our conclusion will be, to summarize, that if the paradigm of  quantum mechanics  differs from the classical one, it is with the Laplace one \cite[p.2]{lap} and let us remind that the ``démon de Laplace" was already destroyed at the end of XIXth. century by the discovery of chaos in the works \cite{poinc} of the ``démon \poinc",

Dont act.
\subsection{Measurement in quantum mechanics}\label{meas}
Expressed in the most synthetic way, quantum dynamics ``à la Copenhagen" is a succession (a combination) of two different types of evolution for elements of a Hilbert space $\cH$:

-  Heisenberg-Schrödinger  flows  associated to  Hamiltonian self-adjoint operators $H$   driven by the unitary operators $e^{i\frac{tH}\hbar}$: namely, it reads  infinitesimally 
$$\
\ \ \ \ \ \ \ \ \ \psi(t)\to\psi(t+\delta t)=e^{i\frac{\delta tH}\hbar}\psi(t).
$$

- measurement processes associated to self-adjoint observable $O$ driven by random  mappings sending  instantaneously the state $\psi(t)$ to any eigenvector $\psi_k$ of $O$: 

$$\
\psi(t)\to\psi(t+\delta t)=\psi_k\ \mbox{ for some random $k$,}
$$
the randomness being associated to the probability law $\cP_{\psi(t)}(k)=|\langle\psi(t)|\psi_k\rangle|^2$.

 It is important to realize that nowadays (as it is the case in the Copenhagen axiomatic) measurement is  a true part of the quantum evolution - for example one can realize nowadays experimentally evolutions only driven by successive measurements, each one being ``decided" by the result on the preceding one\footnote{cf. Serge Haroche, Lectures at Collège de France}. 
 
 One the other side, it is also important to admit that measurement in not part of the first type of evolution: even in the macroscopic limit of large number of involved particles  decoherence phenomena does not explain the randomness of a single measurement process \cite{har}. It is an intrinsic probabilistic event, not a consequence of luck of precise information on the system as in other domains of classical physics, such a statistical physics. It could seem therefore totally reconcilable with the paradigm of classical mechanics, but we will see that in fact that this indeterminism, at the classical limit, will in a certain sense fill a hole existing in the (comfortable but questionable) completeness of classical dynamics.
 
 Let us finish this section by saying that the loose of determinism  while passing from classical to quantum is accomplished together with the creation of tools permitting the (in particular experimental) realization of the `measurement"axiom, namely the discreteness of spectra. Indeed, the probabilistic meaning of the distribution $\cP_{\psi^{in}}(k)$ is that, if one realize the same measurement process, that is with the same  observable $O$ {\bf and the same initial vector $\psi^{in}$}, a large (infinite) number of times, although the different results $\psi_k, k=0,\dots,\infty$ will be a each time intrinsically not possible to determine,  the statistical distribution of values of k will be given by $\cP_{\psi^{in}}(k)$. Therefore the difficulty is to be sure each measurement is effectuated on the same initial vector $\psi^{in}$. We are touching here a notion of precision  which is a subject very sensitive in classical mechanics due to the continuous feature of geometry of spaces, but which is much easier to handle in quantum mechanics just because any 
 observable furnishes a \textit{discrete} set of vectors: its own eigenvectors (see e.g. \cite{tp10} for further details). 

\subsection{Critics of the deterministic reason of classical mechanics}\label{crit}
 The truly fabulous  change of paradigm that \poinc introduces at the end of the XIXth century, much before the birth of quantum mechanics, forces us to revise, at the light of quantum  considerations such as those exposed in the last paragraph of the preceding section \ref{meas}, the notion of measurement in classical mechanics (and also in computer sciences, see \cite{lp1,lp2,lp3}): can we really talk  about determinism when the computations of Jacques Laskar show that a precision of a few meters for the knowledge of the position of the earth is needed to prove that it will stay in the solar system in a few millions of years \cite{jl}? 
 
 The phenomenon of ``sensibility to initial conditions" for chaotic dynamical systems  do not only destroy the Laplacian paradigm which would suggest that systems close to integrable are still integrable, their modern formulation poses the question of space after infinite time  evolution. Of course, in principal, time remains finite in classical dynamics\footnote{``Eternity is very long, specially at the end!", P. Desproge} but the theory of chaos needs ontologically such an infinite extension of time. For example, ergodicisolty involve averages on infinite times. But chaoticity features avoid carefully to define evolution at times $t=\infty$: one cooks up some quantities (averages etc) defined at finite times the limit of which at $t\to\infty$ are considered. The reason for this careful prevention is very simple: the classical flow $\Phi^t$ at time $t=\infty$ does not exist. More precisely it escapes to the classical paradigm: this is the hole in completeness mentioned earlier. 
 
 And this escape is a question of space: flow at infinite time escapes from the  classical paradigm because it does not preserve the geometrical structure of points in space: after infinite evolution, single points would ``become"  extended objects as manifolds.
 \subsection{Pushing sensitivity to initial conditions to its extreme}\label{pussenic}
 The following diagram  expresses the phenomenon we just described. 
 
 Let us consider an hyperbolic fixed point $y$ of an Anosov (say) dynamical system. Associated to it is an unstable manifold $\Lambda_y$, set of all the points $x$ ``attracted" to $y$ after a long backwards  evolution 
 $$
 \Lambda_y=\{x,\lim_{t\to+\infty}\Phi^{-t}(x)=y\}.
 $$
%
 Therefore, one has the following series of facts, (naively) logically linked as follows.
 
 \vskip 1cm
\begin{eqnarray}
\forall x\in\Lambda_y ,\ \ \ \ \ \Phi^{-t}(x)&\overset{
t\to+\infty}\longrightarrow&y,\ \ \nonumber\\
&&\nonumber\\
&\Updownarrow&\nonumber\\
&&\nonumber\\
\big(\forall x\in\Lambda_y ,\ \ \ \ \ \Phi^{-\infty}(x)&=&y\big)\  
\ \ \ \ \ \ \ \ \ \ \ \ \ \ \ \ \  \Leftrightarrow\ \ \ \Phi^{-\infty}(\Lambda_y)=\{y\}
\nonumber\\
&&\nonumber\\
&\Updownarrow&\nonumber\\
&&\nonumber\\
\big(\Phi^{+\infty}(y)&=&x,\  \forall x\in\Lambda_y\big)
\ \ \ \ \Leftrightarrow\ \ \ \Phi^{+\infty}(y)=\Lambda_y\nonumber\\
&&\nonumber\\
&\Updownarrow&\nonumber\\
&&\nonumber\\
\mbox{sensitivity to initial conditions}&\sim&  
\mbox{impredictibility}.\nonumber
\end{eqnarray}  

Usually, the idealized version of the  sensitivity to initial conditions expressed bu ``$\Phi^{+\infty}(y)=\Lambda_y$ is indeed considered as an expression of impredictibility: one cannot predict the position of the point $y$ after and infinite time evolution as we cannot answer the question of the presence or not of the earth inside the solar systems in a quasi-infinite time (a few millions of years) without knowing its position with a quasi-infinite  precision (a few meters).

But the two last line of the diagram could be perfectly changed into

\begin{eqnarray}
{\color{white}
\big(\forall x\in\Lambda_y ,\ \ \ \ \ \Phi^{-\infty}(x)}&{\color{white}=}&{\color{white}y\big)\  
\ \ \ \ \ \ \ \ \ \ \ \ \ \ \ \ \  \Leftrightarrow\ \ \ \Phi^{-\infty}(\Lambda_y)=\{y\}}
\nonumber\\
\big(\Phi^{+\infty}(y)&=&x,\  \forall x\in\Lambda_y\big)
\ \ \ \ \Leftrightarrow\ \ \ \Phi^{+\infty}(y)=\Lambda_y\nonumber\\
&&\nonumber\\
&\Updownarrow&\nonumber\\
&&\nonumber\\
\mbox{sensitivity to initial conditions}&\sim&  
\mbox{indeterminism}.
\nonumber
\end{eqnarray}
This is exactly what we say by ``{\it we cannot answer the question of the presence or not of the earth in the solar systems in a few millions of years}".

We will see that this vision of impredictibility/indeterminism is conform to the heritage of the quantum indeterminism at the classical limit $\hbar\to 9$.

\subsection{Some space for merging the two}\label{merg}\ 

The so-called probabilistic interpretation of the wave function $\psi$ by Born \cite{mb} tells that the square modulus $|\psi(x)|^2$ gives the probability of ``finding" an electron at the place $x$. The link with $\cP_\psi$ introduced in Section \ref{meas} is the following. the observable ``position" is the operator $Q$ of multiplication by the variable
$$
Q:\ \psi\in\cH=L^2(\bR).\to Q\psi\mbox{ with }(Q\psi)(x)=x\psi(x).
$$

The spectrum of $Q$ is $\bR$ and the (generalized) eigenvectors are $\delta_x,\ x\in\bR$, 
the Dirac mass at $x$\footnote{defined by $\int \delta_x(y)\varphi(y)dy=\varphi(x)$ for every (test) function $\varphi$.}
so that the result of the measure of the position can be any real number $x$, and the probability of ``finding" the value $x$ by measuring the state $\psi$ is
$$
|\langle\psi|\delta_x\rangle|^2=
|\int_\bR \delta_x(x')\psi(x')dx'||^2=|\psi(x)|^2 \ \ \ \ \ \mbox{ CQFD}.
$$
Let us consider now the Hamiltonian generator of the dilations (we denote $P=-i\frac d{dx}$)
$$
H=\frac{QP+PQ}2:=-ix\frac d{dx}-\frac i2.
$$
Obviously 
$$
e^{i\frac{tH}\hbar}\psi(x)=e^{\frac t2}\psi(e^tx).
$$

The underlying classical dynamics on $\bR^2=(T^*\bR)_{(q,p)}$ is generated by the Hamiltonian $h(q,p)=qp$, which leads to the flow
$$
\Phi^t(q,p)=(e^tq,e^{-t}p).
$$
so that we see immediately that
$$
\forall t\in\bR,\ \Phi^t(0,0)=(0,0)\ \ \ \mbox{ and }\ \ \ 
\lim_{t\to+\infty}\Phi^{-t}(q,0)=(0,0),\ \forall q\in\bR.
$$

Therefore the origin of the phase space $T^*\R$ is a fixed point, and the null section of $T^*\bR$ is its unstable manifold $\Lambda_{(0,0)}$, a situation exactly comparable to the one in Section \ref{crit}.

We have seen in several places of this article that a natural way of associating to a point in phase space a vector in the quantum Hilbert space was obtained thanks to the Gaussian coherent state, as defined for example by \eqref{defcoh} $\vp_{(p,q)}(x):=(\pi\hbar)^{-\frac{1}4}e^{-\frac{(x-q)^2}{2\hbar}}e^{i\frac{px}\hbar}$. Pinned up at the origin, this definition leads to
$$
\psi(x)=\vp_{(0,0)}(x)=(\pi\hbar)^{-\frac{1}4}e^{-\frac{x^2}{2\hbar}}
$$
well localized to $0$ as $\hbar\to 0$.

We get
$$
e^{-i\frac{tH}\hbar}\psi(x)=e^{-\frac t2}\psi(e^{-t}x)
=
e^{-\frac t2}(\pi\hbar)^{-\frac{1}4}e^{-e^{-2t}\frac{x^2}{2\hbar}}.
$$
so that, when $t=t_\hbar=-\frac{\log\hbar}2$,
$$
e^{-i\frac{t_\hbar H}\hbar}\psi(x)
=
(\pi)^{-\frac{1}4}e^{-\frac{x^2}{2}}
$$
which is fully delocalized in $x$ on the real line which is, let us recall, $\Lambda_{(0,0)}$. In other words, $\{0\}\overset{\Phi^{-\infty}}\rightarrow\Lambda_{(0,0)}$ or $\Phi^{-\infty}(0)=\Lambda_{(0,0)}$ exactly as in Section \ref{pussenic}.

Quantum indeterminism will read now: any point $x\in\Lambda_{(0,0)}$ can be obtained by measuring the position on the state $e^{-i\frac{t_\hbar H}\hbar}\psi$. A statement very close to the impredictibility/indeterminism's one of the last section.
\vskip 1cm
Therefore, when $\hbar\to 0$, quantum indeterminism and classical impredictibility merge.
\vskip 1cm
Even more: the probabilities emerging in classical mechanics because of (or thanks to) impredictibility are fully inherited from quantum mechanics.
\subsection{A classical phase space incorporating the point $t=\pm\infty$}\label{incotinf}
What we see on this very simple example (which can be easily generalized to more complex dynamics) is that the fact of having abandoned in quantum mechanics the classical concepts of points and trajectories, a fact which seems to create problems to a lot of persons attached to the old paradigm(s), is in fact a richness. It allows to handle impossible situations in classical dynamics. Let us finish this section by presenting another situation where this phenomenon appears.

The famous Cauchy-Lipschitz Theorem provide a sufficient condition for existence and uniqueness of solutions to differential equations associated to vector fields: $\dot x=f(x)$. Lipschitz regularity of $f$ ensures existence and uniqueness of the solution. Otherwise, for less regular $f$s, the flow might develop ubiquity phenomena incompatible with the classical paradigm underlying the concept of space.

Quantum mechanics is more global than local, therefore boundedness issues are more stringent than regularity ones. Situation with regularity ``under Cauchy-Lipschitz" can be perfectly reasonable, and provide perfectly well defined quantum mechanics\footnote{A nice  example showing this classical/quantum dichotomy is the case of the Coulomb potential: although the classical Hamiltonian $p^2-\frac1{|r|}$ is not bounded from below, the quantum one $-\Delta-1/|r|$ is. }, leading, for example, to non ambiguous quantum dynamics. But this quantum trajectories live in the Hilbert space of quantum states, and have few to share with the classical underlying space, except when $\hbar\to 0$. And one can easily cook up situation where a coherent state, classically associated to a single point in phase space can evolve as a superposition of two, localized on two different points. But this ubiquity  disappears as soon as  a measurement is accomplished, giving rise to a single result, a random one but surely one and only one \cite{tp11,ap1,ap2,ap3}.

{\bf In fact, quantum mechanics completes the incompleteness of classical mechanics. In a certain sense it was born for this purpose.}
 \section*{Postlude: 
 {\it 
 species of spaces, towards an epistemologigal geometry of quantum/classical mechanics}
 }\label{specspac}

 \vskip 1cm
 
 In this paper, we have stated several situation of quantum mechanics corresponding to the limit where the Planck constant $\hbar$ vanishes. As mentioned at the beginning of this paper, quantum mechanism was born deeply inside classical paradigm, after a strong epistemological jump operated by Heisenberg: a jump consisting by taking $\hbar$ as a serious parameter of physics and passing from the classical world where $\hbar=0$ to the quantum one where $\hbar>0$. 
 
 \centerline{In other words, letting, somehow, $0\to\hbar$.}
 \vskip 0.3cm
 The change of paradigm between $\hbar=0$ and $\hbar>0$ is as deep as passing from the culturally classical notion of a function, which assign to 
 a variable a {\bf given} value taken into the range of the function,  to the 
 notion of matrix which acts on a ``variable", namely a vector, by spreading it 
  all over the spectrum of the matrix. Seen like this, the two paradigms seem irreconcilable. And  indeed they are, strictu senso, if we miss the fact that there exist special vectors, namely the eigenvectors of the matrix, to which the matrix assign a single value taken into its spectrum : the corresponding eigenvalue. This remark made the makers of the foundations of quantum mechanics, say Born-Heisenberg-von Neumann\footnote{We should mention at this point two alternative points of view: the one by de Brogglie (right before Heisenberg) and Schrödinger (right after Heisenberg) making quantum mechanics a wave theory. However, one has to asy that these two viewpoints, the first one severely incomplete and the second shown quickly to be equivalent to Heisnberg's one, though taking place in a maybe more classical setting, don't  by no mean reconciliate classical determinism and quantum determinism.},  able to finish the quantum axiomatic by imposing that any eigenvalue could be assigned {\it randomly} to a general vector while measuring it. The addition of a property of the probability law insuring that eigenvectors and eigenvalues are deterministically linked makes the whole construction consistent.
  
  This view of quantum measurement, milestone of the quantum paradigm, inseparable  from the rest of the axiomatic,  is sometime seen as a somehow  inverse of the Heisenberg jump, it is in a way a jump itself, as brutal as the original one. But this point of view is not correct. For example, coupled with a strong need of determinism in the classical theory, it creates the so-called paradoxes of quantum mechanics, so-called but fake as they are not paradoxes {\bf inside} the quantum paradigm. They show some true contradictions but in a thinking dynamics `` \`a cheval" between quantum and classical: an impossible world in which one desperately try to marry two inconciliable points of view.
  
  {\bf Inside}, the word is pronounced: measurement lives inside the quantum paradigm, strictly inside, although  the classical world can be only recover by inverting $0\to\hbar$ by $\hbar\to 0$\footnote{Giving a precise discussion about what $\hbar\to 0$ means is far beyond the scope of the present paper. Let us say that, under the  proposition $\hbar\to 0$ is hidden numerous situations where the different parameters of the system studied tend to extreme value leading to the famous canonical relations $[Q,P]=i\hbar$ tend to $[Q,P]=0$}. The classical limit plays the role for quantum mechanics that plays ``$)$" for the interval $(0,1]$: it is its border.
 \vskip 0.2cm

\centerline{\it CLASSICAL LIMIT = BORDER OF QUANTUM}

\vskip 0.1cm
 The standard way of interpreting this last proposition is to say that  
 the classical world is the border of the quantum one. But in the present article, we have exhibit several situation where the limit $\hbar\to 0$ does NOT correspond to the classical world, this was precisely the goal of this article. In fact a more true statement is
 
 \vskip 0.2cm

\centerline{\it CLASSICAL $\subset$ BORDER OF QUANTUM}

\vskip 0.1cm
 
 But this doesn't finish the story. Indeed, on the contrary of one usually thinks, the classical paradigm is far from being exempt of singularities and incompleteness. Newton's law of inverse dependence of the force  in the square of the distance between attracted objects doesn't tell you a single thing when this distance tends to $0$. And one knows since Poincaré that, as the time $t$ of evolution  diverges, unpredictability is a serious challenge for the so culturally well installed determinism, actually linked to quantum indeterminism, as shown in the preceding section - see also \cite{tp8,tp9}.
 
  What we claim is that classical has itself a border, for example when $t\to\infty$, and we proved in \cite{tp9} that the sensitivity to initial conditions in chaotic systems let this long time limit merge with the corresponding one for the corresponding quantum evolution: sensitivity to initial conditions create, at the limit $t\to 0$, a nonlocality feature which is intrinsic to quantum paradigm  and let classical unpredictability and quantum indetrminism merge. In other words, at list terminologically, 
  
  \vskip 0.2cm

\centerline{\it QUANTUM $\subset$ BORDER OF CLASSICAL}

\vskip 0.1cm
 So that:
 \vskip 0.2cm

\centerline{\it QUANTUM $\subset$ BORDER OF CLASSICAL
$\subset$ 
BORDER OF 
$\dots$ 
BORDER OF QUANTUM
}

\vskip 0.1cm 
  and in other, final, `` à la Oulipo"  words:
  \vskip 0.5cm

\centerline{\bf \textit{QUANTUM $\subset$ {\large(THE BORDER OF)} ITS OWN BORDER} }

\vskip 0.39cm
 What a beautiful geometry, with strange species of strange spaces !

\hfill{\it es mu$\beta$ sein}
\vskip 3cm

\end{document}